\newcites{supp}{References}
\newcommand{\Pcal}{\mathcal{P}}
\newcommand{\Pfrak}{\mathfrak{P}}
\newcommand{\R}{\mathbb{R}}
\newcommand{\N}{\mathbb{N}}
\newcommand{\Ic}{\mathcal{I}(c_n)}
\renewcommand{\S}{\mathcal{S}}
\newcommand{\diff}{\textnormal{\,d}}
\DeclareMathOperator*{\argmin}{\textnormal{argmin}}
\DeclareMathOperator*{\argmax}{\textnormal{argmax}}
\newcommand{\E}[1]{\mathbf{E}\left(#1\right)}
\newcommand{\Prob}{\mathbf{P}}
\newcommand{\Var}[1]{\mathbf{Var}\left(#1\right)}
\newcommand{\eps}{\varepsilon}
\newcommand{\set}[1]{\left\{ #1 \right\}}
\newcommand{\abs}[1]{\left| #1 \right|}
\newcommand{\norm}[1]{\left\| #1 \right\|}
\newcommand{\smallo}{\textnormal{o}}
\newcommand{\bigo}{\mathcal{O}}
\newcommand{\ra}{\rightarrow}
\newcommand{\eye}{\mathbf{1}}
\newcommand{\vt}{\vartheta}
\definecolor{tablebg}{rgb}{0.85,0.85,0.85}
\newcommand{\ha}[1]{{#1}}
\newcommand{\kl}[1]{{#1}}
\newcommand{\am}[1]{{#1}}
\theoremstyle{plain}
\newtheorem{thm}{Theorem}[section]
\newtheorem{prop}[thm]{Proposition}
\theoremstyle{definition}
\newtheorem{lem}[thm]{Lemma}
\newtheorem{cor}[thm]{Corollary}
\newtheorem{example}[thm]{Example}
\newtheorem*{example*}{Example}
\newtheorem*{dfn*}{Definition}
\newtheorem*{alg*}{Algorithm}
\theoremstyle{remark}
\begin{document}  

\author{Klaus Frick $^1$ Axel Munk $^{1,2}$ Hannes Sieling $^1$}  

\address{$^1$Institute for Mathematical Stochastics\\
University of G{\"o}ttingen\\      
Goldschmidtstra{\ss}e 7, 37077 G{\"o}ttingen}
   
\address{$^2$Max Planck Institute for Biophysical Chemistry \\
Am Fa{\ss}berg 11, 37077 G{\"o}ttingen}   
   
\email{\{frick, munk, hsielin\}@math.uni-goettingen.de}

\subjclass[2010]{62G08,62G15,90C39}

\keywords{change-point regression, exponential families, multiscale methods,
honest confidence sets, dynamic programming}

\date{\today}

\title[Multiscale Change-Point Inference]{Multiscale Change-Point Inference}

\begin{abstract}
We introduce a new estimator SMUCE (simultaneous multiscale change-point estimator) for the change-point problem in exponential family regression.
An unknown step function is estimated by minimizing the number of change-points over the acceptance region of a
multiscale test at a level $\alpha$.

The probability of overestimating the true number of change-points $K$ is controlled by
the asymptotic null distribution of the multiscale test statistic.
Further, we derive exponential bounds 
for the probability of underestimating $K$.
By balancing these quantities, $\alpha$ will be chosen such that the probability of
correctly estimating $K$ is maximized.
All results are even non-asymptotic for the normal case.

Based on the aforementioned bounds, we construct (asymptotically) honest confidence sets for the unknown
step function and its change-points. At the same time, we obtain exponential bounds for estimating
the change-point locations which for example yield the minimax rate $\bigo(n^{-1})$ up 
to a log term. 
Finally, SMUCE achieves the optimal detection rate of vanishing signals as $n\ra\infty$,
\am{even for an unbounded number of change-points.}

We illustrate how dynamic programming techniques can be employed for efficient
computation of estimators and confidence regions. The performance of the proposed multiscale approach is illustrated
by simulations and in two cutting-edge applications from genetic engineering and
photoemission spectroscopy.

\end{abstract}

\maketitle
 
\section{Introduction}

Assume that we observe independent random variables $Y = (Y_1,\ldots,Y_n)$ through the exponential family regression model
\begin{equation}\label{intro:model}
Y_i \sim F_{\vt(i\slash n)},\quad\text{ for }i=1,\ldots,n,
\end{equation}
where $\set{F_\theta}_{\theta\in\Theta}$ is a one dimensional exponential family
with densities $f_\theta$ and $\vt:[0,1)\ra \Theta\subseteq\R$ a
right-continuous step function with an unknown number $K$ of change-points. The two upper panels in Figure
\ref{intro:example} depict such a step function with $K=8$ change-points and
corresponding data $Y$ for the Gaussian family
$F_\theta = \mathcal{N}(\theta, \sigma^2)$ with fixed variance $\sigma^2$.

The \emph{change-point problem} consists in estimating
\begin{enumerate}[(i)]
  \item the number of change-points of $\vt$,
  \item the change-point locations and the function values (intensities) of $\vt$.
\end{enumerate}
Additionally, we address the more involved issue of constructing
\begin{enumerate}[(i)]
 \setcounter{enumi}{2}
 \item confidence bands for the function $\vt$ and simultaneous
 confidence intervals for its change-point locations.
\end{enumerate}

\subsection{Multiscale statistics and estimation} \label{subsec:intro:estimator}

The goals (i) - (iii) will be achieved based on a new estimation and inference method for
the change-point problem in exponential families: the {S}imultaneous {MU}ltiscale {C}hange-point {E}stimator ({SMUCE}). Let
$\mathcal{S}$ denote the space of all \am{right-continuous} step functions \am{with an arbitrary but finite number of jumps} on the unit interval
$[0,1)$ with values in $\Theta$. For $\vt\in\mathcal{S}$ we denote by 
$J(\vt)$ the ordered vector of change-points and by $\#J(\vt)$ its
length, i.e. the number of change-points. In a first step, \am{SMUCE requires} to solve
the (nonconvex) optimization problem
\begin{equation}\label{intro:optprob}
\inf_{\vt \in\mathcal{S}} \#J(\vt) \quad\text{ s.t.
}\quad T_n(Y,\vt)\leq q,
\end{equation}
where $q$ is a threshold to be specified later. $T_n(Y,\vt)$ is a certain
\emph{multiscale statistic} for a candidate function $\vt\in\S$.
Optimization problems of the type \eqref{intro:optprob} have been recently considered in \citep{Hoe08} for Gaussian change-point regression (see 
also \citep{BoyKemLieMunWit09} for a related approach) and for volatility estimation in \citep{DavHoeKra12}.
$T_n$ in \eqref{intro:optprob} evaluates
the maximum over the local likelihood ratio statistics on all discrete intervals
$[i\slash n, j\slash n]$ such that $\vt$ is constant on these with value $\theta=\theta_{i,j}$, i.e.
\begin{equation}\label{intro:mrstat}
T_n(Y,\vt) = \max_{\substack{1\leq i<j\leq n \\ \vt(t) = \theta \text{ for } t
\in [i\slash n, j\slash n]}} \left(  \sqrt{2 T_i^j(Y,\theta)} -
\sqrt{2\log\frac{en}{j-i+1}} \right),
\end{equation}
where $e=\exp(1)$ and $\log$ denotes the natural logarithm.  The local
likelihood ratio statistic $T_i^j$ for testing $H_0:\theta=\theta_0$ against
$H_1:\theta\neq\theta_0$ on the interval $[i/n,j/n]$ is defined as
\begin{equation}\label{intro:likeratstat}
T_i^j(Y,\theta_0) = \log\left(\frac{\ha{\sup}_{\theta\in\Theta}\prod_{l=i}^j
f_\theta(Y_l)}{\prod_{l=i}^j f_{\theta_0}(Y_l)} \right).
\end{equation}
It measures how well the data can be described \emph{locally} by a constant value $\theta_0$ on the interval
$[i\slash n, j\slash n]$.
We stress that the multiscale statistic $T_n$ does not act on all intervals $[i/n,j/n]\subseteq[0,1]$ but only on those which the candidate function $\vt$
is constant on, see also \citep{DavHoeKra12,Hoe08,OlsVenLucWig04}.
Thus the system of intervals
appearing in \eqref{intro:mrstat} makes up the specific multiscale nature of
$T_n$. The $\log$-expression in \eqref{intro:mrstat} can be seen as a scale
calibrating term that puts different scales on equal footing.
As argued in \citep{DueSpok01} and \citep{ChaWal11} this improves the power of the multiscale test over the majority of scales. 
Roughly speaking, from a multiscale point of view, scale-calibration becomes advantageous, since there are many more small intervals than large ones.

SMUCE integrates the multiscale test on the r.h.s. in \eqref{intro:mrstat} into
two simultaneous estimation steps: Model selection (estimation of $K$) and estimation of $\vt$ given $K$.
The minimal value of $\#J$ in \eqref{intro:optprob} gives the estimated number
of change-points, denoted by $\hat K(q)$. To obtain the final estimator for
$\vt$ first consider the set of all solutions of \eqref{intro:optprob} given by
\begin{equation}\label{intro:confset}
\mathcal{C}(q) = \set{\vt\in\S ~:~ \#J(\vt) = \hat K(q) \text{ and }
T_n(Y,\vt) \leq q},
\end{equation}
which constitutes a confidence set for the true regression function
$\vt$ as we will discuss later on. Then, the SMUCE $\hat \vt(q)$ is defined to be the \textit{constrained maximum likelihood
estimator} within this confidence set $\mathcal{C}(q)$, i.e. 
\begin{equation}\label{intro:smre}
\hat \vt(q) = \argmax_{\vt \in \mathcal{C}(q)} \sum_{i=1}^n
\log\left(f_{\vt(i\slash n)}(Y_i)\right).
\end{equation}
The lower panel in Figure \ref{intro:example} shows an example of a SMUCE (red
solid line) for Gaussian observations. As stressed above, the multiscale
constraint on the r.h.s. of \eqref{intro:optprob} renders the SMUCE sensitive to
the multiscale nature of the signal $\vt$. The signal in Figure
\ref{intro:example} is a case in point: It exhibits large and small scales
simultaneously and remarkably the SMUCE $\hat\vt(q)$ recovers them both equally
well.

\begin{figure}[h!]
\includegraphics[width=\columnwidth]{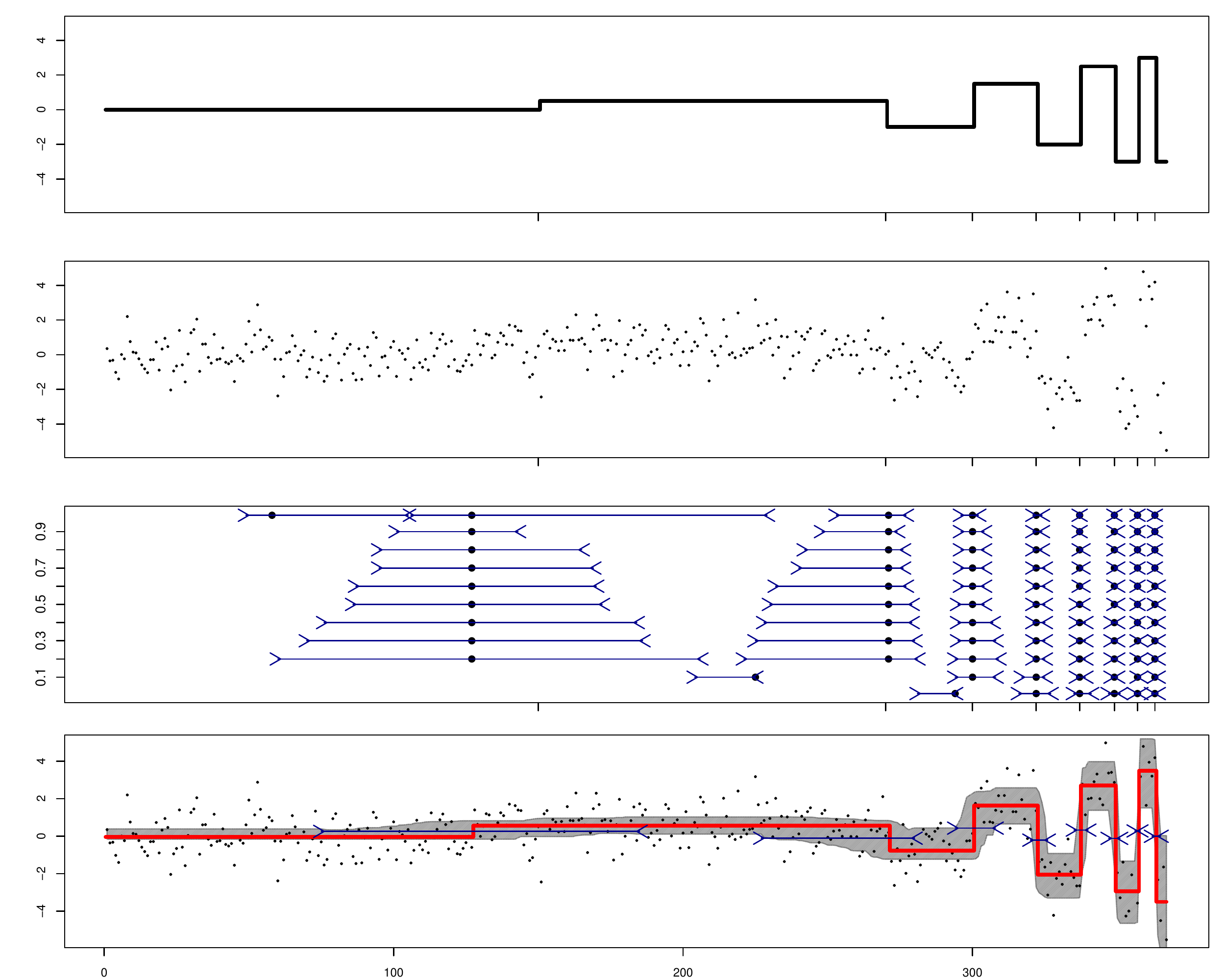}
\caption{From top to bottom: 1. True regression function $\vt$. 2. Gaussian
observations $Y$ with $n=367$ and variance $\sigma^2=1$. 3. Estimated change-point locations with confidence
intervals for different \am{ values of $\alpha$ ($y$-axis)}. 4. SMUCE $\hat\vt(q_\alpha)$ with confidence bands
(gray hatched area) and confidence intervals for the change-point locations
(inward pointed arrows) at $\alpha = 0.4$.}\label{intro:example}
\end{figure}

% Under some minor technical assumptions we derive the limiting distribution of
% $T_n(Y,\vt)$ as $n\ra\infty$ (Section \ref{subsec:limit}), which will
% be used to obtain threshold values $q$ to guarantee a level of confidence $\alpha$.

\subsection{Deviation bounds and confidence sets}
 
The parameter $q\in\R$ in \eqref{intro:optprob}   plays a crucial role because it governs the trade-off between
data-fit (the r.h.s. in \eqref{intro:optprob}) and parsimony (the l.h.s. in \eqref{intro:optprob}). It has an immediate statistical interpretation. 
From \eqref{intro:optprob} it follows that 
\begin{equation}\label{intro:prob:overest}
\Prob\left(\hat K(q) > K\right) \leq \Prob(T_n(Y,\vt) > q).
\end{equation}
Hence, by choosing $q=q_{1-\alpha}$ to be the $1-\alpha$-quantile of the
(asymptotic) null distribution of $T_n(Y,\vt)$, we can (asymptotically) control the
\emph{probability of overestimating} the number of change-points by $\alpha$.
In fact, we show that the null distribution of $T_n(Y,\vt)$ can be bounded asymptotically by a distribution
which does not depend on $\vt$ anymore (see Section \ref{subsec:limit}).
 It is noteworthy that for Gaussian observations this bound is even non-asymptotic (see Section
\ref{subsec:gauss:opt}). The third panel in Figure \ref{intro:example}
shows for different choices of $\alpha$ ($y$-axis) the corresponding estimates
for the change-point locations (black dots; the vertical ticks mark the true
change-point locations). The number of estimated change-points is monotonically increasing
in $\alpha$ in accordance with \eqref{intro:prob:overest} which guarantees at \am{error} level $\alpha$
that SMUCE has not more jumps than the true signal $\vt$. We emphasize that the SMUCE is
remarkably stable w.r.t. the choice of $\alpha$, i.e. the number of
change-points $K=8$ is estimated correctly for $0.2 \leq \alpha \leq 0.9$. Our
simulations in Section \ref{sim} confirm this stability even in
non-Gaussian scenarios. 

As mentioned before, \am{the threshold $q_{1-\alpha}$ for} SMUCE \am{automatically} controls the error of undersmoothing \eqref{intro:prob:overest}, i.e. the \emph{probability of overestimating} the number of change-points.
In addition, we prove an exponential inequality that
bounds the error of oversmoothing, i.e. the \emph{probability of underestimating} the number of change-points.
\ha{Any such bound necessarily has to depend on the magnitude of the signal $\vartheta$ on the smallest scale, as no method can recover arbitrary fine details for given sample size $n$, see \citep{Don88} for a similar argument in the context of density estimation. Our bound (see Theorem \ref{conscp:corunderest})
\begin{equation}\label{intro:prob:underest}
\Prob\left(\hat K(q) < K\right) \leq 2 K e^{- C n\lambda
\Delta^2}\left[e^{\frac{1}{2}\left(q+\sqrt{2\log(2e/\lambda)}\right)^2} + e^{- 3 C
n\lambda \Delta^2}\right]
\end{equation}
reflects this fact and indeed only depends on the smallest interval length $\lambda$, the smallest
absolute jump size $\Delta$ and the number of change-points $K$ of the true
regression function $\vt$.} Here, $C>0$ is some known universal constant only depending on the family of distributions (see Section \ref{sec:conscp}).

As a consequence of the inequalities \eqref{intro:prob:overest} and \eqref{intro:prob:underest}, $\mathcal{C}(q_{1-\alpha})$ in \eqref{intro:confset}
constitutes an asymptotic confidence set at level $1-\alpha$ and we will explain in Section
\ref{sec:alg:cb} how confidence bands for the graph of $\vt$ and confidence
intervals for its change-points can be obtained from this. See the lowest panel
of Figure \ref{intro:example} for illustration. 

Of course, honest (i.e. uniform) confidence sets cannot be obtained on the
entire set of step functions $\S$, as $\Delta$ and $\lambda$ can become arbitrarily small.
Nevertheless, we can show that simultaneously both, confidence bands for $\vt$ and
intervals for the change-points are \textit{asymptotically honest w.r.t. to a sequence of nested models} $\S^{(n)}\subset \S$
that satisfy
\begin{equation}\label{intro:neste_mod}
\frac{n}{\log n}\Delta_n^2 \lambda_n\ra \infty,\quad\text{ as }n\ra\infty,
\end{equation}
i.e. the confidence level $\alpha$ is kept uniformly over $\S^{(n)}$ as $n\ra\infty$
(c.f. Section \ref{sec:convset}). Here $\lambda_n$ and $\Delta_n$ denote the
smallest interval length and smallest absolute jump size in $\S^{(n)}$,
respectively.

\subsection{Choice of $q$}

Balancing the probabilities for over- and underestimation \am{in \eqref{intro:prob:overest} and \eqref{intro:prob:underest}} gives an upper bound on $\Prob(\hat K(q)\neq K)$, the
probability that the number of change-points is misspecified.
This bound  depends on $n,q,\lambda$ and $\Delta$ in an explicit way and opens
the door for several strategies to select $q$, \am{e.g.} such that $\Prob(\hat K(q)=K)$ is maximized. One may \am{additionally} incorporate prior information on $\Delta$ and $\lambda$ and we suggest a simple
way how to do this in Section \ref{sim:thresh}. 

\am{A further consequence of \eqref{intro:prob:overest} and \eqref{intro:prob:underest} is that under} a suitable choice of $q = q_n$ the probability of misspecification \am{$\Prob(\hat K(q_n)\neq K)$} tends to zero
 and hence $\hat K(q_n)$ converges to the true number of change-points
$K$ (model selection consistency), such that the underestimation error in \eqref{intro:prob:underest} vanishes 
exponentially fast.

Finally, we obtain explicit bounds on the precision of estimating the change-point locations
which again depend on $q,n,\lambda$ and $\Delta$. For any fixed $q>0$ they are recovered for all estimators in 
$\mathcal{C}(q)$, including SMUCE, at the optimal rate $1\slash n$ (up to a $\log$-factor).
Moreover, these bounds can be used to derive slower rates uniformly over nested models as in 
\eqref{intro:neste_mod} (see Section \ref{sec:convset}).
\subsection{Detection power for vanishing signals}

For the case of Gaussian observations we derive the detection power
of the multiscale statistic $T_n$ in \eqref{intro:mrstat}, i.e. we determine the
maximal rate at which a signal may vanish with increasing $n$ but still can be
detected with probability $1$, asymptotically. For the task of detecting a
single constant signal against a noisy background, we obtain the optimal rate
and constant (cf. \citep{DueSpok01, DueWal08,ChaWal11,Jen10}). We extend this result to
the case of an arbitrary number of change-points, retrieving the same optimal
rate but different constants (Section \ref{subsec:gauss:van}). Similar results
have been derived recently in \citep{Jen10} for sparse signals, where the
estimator takes into account the explicit knowledge of sparsity. We stress that
the SMUCE does not rely on any sparsity assumptions still it adapts
automatically to sparse signals due to its multiscale nature.
  
\subsection{Implementation, simulations and applications}  
The applicability of dynamic programming to the change-point problem has been
subject of research recently (cf. e.g.
\citep{BoyKemLieMunWit09,Fea06,FriKemLieWin08,HarLev10}). The SMUCE $\hat \vt(q)$ can
also be computed by a dynamic program due to the restriction of the local
likelihoods to the constant parts of candidate functions. This has already been observed by \citep{Hoe08}
for the multiscale constraint considered there. We prove that \eqref{intro:smre} can be rewritten into a minimization problem of a penalized cost
function with a particular data driven penalty (see Lemma \ref{impl:equivprob}).

Much in the spirit of the dynamic program suggested in \citep{KillFeaEck12}, our
implementation exploits the structure of the constraint set in
\eqref{intro:smre} to include pruning steps. These reduce the worst case
computation time $\bigo(n^2)$ considerably in practice and makes it applicable
to large data sets. Simultaneously, the algorithm returns a confidence band for
the graph of $\vt$ as well as confidence intervals for the location of the
change-points (Section \ref{sec:alg}), the latter without any additional cost.
An R-package (stepR) including an implementation of SMUCE is available online
\footnote{\url{http://www.stochastik.math.uni-goettingen.de/smuce}}.

Extensive simulations reveal that the SMUCE is competitive with
(and indeed often outperforms) state-of-the-art methods for the
change-point problem which all have been tailor-made to specific exponential
families (Section \ref{sim}). Our simulation study includes the CBS method
\citep{OlsVenLucWig04}, the fused lasso \citep{TibSauRosZhuKni04} and the modified BIC \citep{ZhaSie07}
for Gaussian regression, the multiscale estimator in \citep{DavHoeKra12} for piecewise
constant volatility and the extended taut string method for quantile
regression in \citep{DueKov09}.
In our simulations we consider several risk measures, \am{including} the MSE and 
the model selection error $\Prob(\hat K \neq K)$.
 Moreover, we study the feasibility of our
approach for different real-world data sets; including two benchmark examples
from genetic engineering \citep{Lai} and a new example from photoemission spectroscopy
\citep{Hue03} which amounts to Poisson change-point regression.
\am{Finally, in Section \ref{sec:disc}, we briefly discuss possible extensions to serially dependent data, among others}

\subsection{Literature survey and connections to existing work}

The problem of detecting changes in the characteristics of a
sequence of observations has a long history in statistics and related
fields, dating back to the 1950's (see e.g.\citep{Pag55}). In recent years, it
experienced a renaissance in the context of regression analysis due to novel
applications that mainly came along with the rapid development in genetic engineering 
\citep{BraMueMue00,OlsVenLucWig04,ZhaSie07,Jen10,LebPic11} and financial econometrics (cf.
\citep{IncTia94,LavTey07,DavHoeKra12, Spo09}). Due to the widespread occurrence of
change-point problems in different communities \am{and areas of applications}, such as statistics \am{\citep{CarMueSie94}}, electrical
engineering and \am{signal processing \citep{BlyBunMeiMul12}, mobile communication \citep{ZhaDanCan09}}, machine learning \am{\citep{HarLev08}}, biophysics \am{\citep{Hot12}}, \am{quantum optics \citep{Sch12}}, econometrics and \am{quality control} \am{\citep{BayPer98}} and biology \am{\citep{Sie13}}, an exhaustive list of existing
methods is beyond reach. For a selective survey, we refer the reader also to the books
\citep{BroDar93,CsoHor97,BasNik93,CheGup00, Wu05} and the extensive list in \citep{KhoAsg08}.

Our approach as outlined above can be considered as a hybrid method of two
well-established approaches to the change-point problem:

\emph{Likelihood ratio} and related statistics, on the one
hand, are frequently employed to test for a change in the parameter of the
distribution family and to construct confidence regions for change-point locations. Approaches of this
type date back as far as \citep{CheZac64,KanZac66}  and have gained considerable
attention afterwards \citep{Hin70,HinHin70,HusAnt03,Wor83,Wor86,Sie88,Due91} and
\citep{Bha87,SieYak00,AriCanDur11} for generalizations to the multivariate
case). The likelihood ratio test was also extensively studied for sequential
change-point analysis \citep{Sie86,YakPol98,SieVen95}. All these methods
are primarily designed to detect a predefined maximal number (mostly one) of
change-points.

On the other hand, if the number of change-points is unknown, an additional
\emph{model selection step} is required, which can be achieved by proper
penalization of model complexity, e.g. measured by the number of change-points itself or
by surrogates for it. This is often approached by maximizing a \emph{penalized
likelihood function} of the form 
\begin{equation*}
\vt\mapsto l(Y,\vt) - \text{pen}(\vt)
\end{equation*}  
over a suitable space of functions, e.g. $\S$ as in this paper or functions of bounded variation
\citep{MamGee97}, etc. Here
$l(Y,\vt)$ is the (log) likelihood function. The penalty term $\text{pen}(\vt)$ penalizes the complexity of $\vt$ and prevents overfitting. It increases with the dimension of
the model and serves as a model selection criterion. First
approaches include BIC-type penalties \citep{Yao88} and more sophisticated
penalties have been advocated later on (see e.g.
\citep{YaoAu89,LavMou00,BraMueMue00,BirMas01,Lav05,LavTey07,BoyKemLieMunWit09,ArlCelHar12, WitKemWinLie08, WinLie02}).
 Further prominent penalization approaches include the fused lasso procedure
(see \citep{FriHasHoeTib07,TibSauRosZhuKni04} and \citep{HarLev10}) that uses a linear combination of the
total-variation and the $\ell^1$-norm penalty as a convex surrogate for the number of
change-points which has been primarily designed for the situation when $\vt$ is sparse.
Recently, aggregation methods \citep{RigTsy12} have been advocated recently for the change-point regression problem as well.

Most similar in spirit to our approach are estimators which minimize target functionals under a statistical multiscale constraint.
For some early references see \citep{Nem85, Don95} and more recently \citep{DavKovMei09, CanTao07, FriMarMun12, DavKov01}.
In our case this target functional equals the number of change-points.

The multiscale calibration in \eqref{intro:mrstat} is based on the work of \citep{DueSpok01, DueWal08, ChaWal11}.
 Multiscale penalization methods have been suggested in \citep{ZhaSie07, KolNow04}, multiscale partitioning methods including binary segmentation in \citep{SenSri75,Vos81,OlsVenLucWig04, Fry12}, and recursive partitioning in \citep{KolNow05}.

Aside to the connection to frequentist's work cited above, we
claim that our analysis also provides an interface for incorporating a priori
information on \ha{the true signal} into the estimator (see Section \ref{sim:thresh}). We stress that for
minimizing the bounds in \eqref{intro:prob:overest} and \eqref{intro:prob:underest} on the model selection error $\Prob(\hat K(q)\neq K)$ it is not necessary
to include full priors on the space of step functions $\S$. 
Instead it suffices to simply specify a prior on the
smallest interval length $\lambda$ and the smallest absolute jump
size $\Delta$. The parameter choice strategy discussed in Section
\ref{sim:thresh} or the limiting distribution of $T_n(Y,\vt)$ in Section
\ref{subsec:limit}, for instance, can be refined within such a Bayesian
framework. This, however, will not be discussed in this paper in detail and is postponed
to future work. For recent work on a Bayesian approach to the change-point
problem we refer to \citep{DuKou12, Fea06,LuoRozNue12, RigLebRob12} and the references therein.

We finally stress that there is a conceptual analogy
of SMUCE to the Dantzig selector as introduced in \citep{CanTao07} for estimating sparse signals in gaussian high
dimensional linear regression models (see \citep{JamRad09} for an extension to exponential families). Here the $\ell_1$-norm of the signal is to
be minimized subject to the constraint that the residuals are pointwise within
the noise level. The SMUCE, in contrast, minimizes the $\ell_0$-norm of the
discrete derivative of the signal subject to the constraint that the residuals are tested to contain no signal on \emph{all scales}.
We will briefly address this and other relations to recent concepts in high dimensional statistics in
a discussion in Section \ref{sec:disc}. In summary, the change-point problem is an ``$n=p$'' problem and
hence substantially different from high dimensional regression where ``$p\gg n$''. As we will show, multiscale detection of sparse signals becomes then 
possible without any sparsity assumption entering the estimator. Another major statistical consequence of this paper
is that post model selection inference is doable over a large range of scales uniformly over nested models in the sense of \eqref{intro:neste_mod}.
\section{Theory}\label{sec:mrstat}

This section summarizes our main theoretical findings.  In Section
\ref{sec:conscp} we discuss consistency of the estimated number of change-points.
This result follows from an exponential bound for the probability of
underestimating the number of change-points on the one hand.
On the other hand we show how to control the probability of overestimating the 
number of change-points by means of the limiting distribution of $T_n(Y,\vt)$
  as $n\ra\infty$ (cf.
Section \ref{subsec:limit}). We give improved results,
including a non-asymptotic bound for the probability of overestimating the
number of change-points, for Gaussian observations (cf. Sections
\ref{subsec:gauss:opt} \& \ref{subsec:gauss:van}). In Section
\ref{sec:convset} we finally show that the change-point locations can be
recovered as fast as the sampling rate up to a $\log$-factor and discuss how 
asymptotically honest confidence sets for $\vt$ can be constructed over a suitable sequence of nested models.

\subsection{Notation and model} \label{sec:model}

%\subsubsection{Exponential Families} 
We shall henceforth assume that
$\mathcal{F} = \set{F_\theta}_{\theta \in \Theta}$ is a one-dimensional, 
standard exponential family with $\nu$-densities
\begin{equation}\label{model:expfam}
f_\theta(x) = \exp\left(\theta x -
\psi(\theta)\right)  ,\quad x\in \R.
\end{equation}
Here $\Theta = \set{\theta
\in \R~:~ \int_\R \exp(\theta x)\diff\nu(x) < \infty}\subseteq\R$ denotes the natural parameter space. We will assume that
$\mathcal{F}$ is \emph{regular} and \emph{minimal} which means that $\Theta$ is
an open interval and that the cumulant transform $\psi$ is strictly convex on
$\Theta$. We will frequently make use of the functions
\begin{equation}\label{model:not_varmean}
m(\theta) := \dot\psi(\theta) = \E{X} \quad\text{ and }\quad
v(\theta) := \ddot\psi(\theta) = \Var{X},
\end{equation}
for $X\sim F_\theta$. Note that $m$ and $v$ are strictly increasing and positive
on $\Theta$, respectively.  

\subsubsection{Observation model and step functions} We assume that $Y =
(Y_1,\ldots,Y_n)$ are independent observations given by \eqref{intro:model}
where $\vartheta:[0, 1) \ra \Theta$ is a right continuous step function,
that is
\begin{equation}\label{model:regfunc}
\vartheta(t) = \sum_{k=0}^K \theta_k \eye_{[\tau_k,\tau_{k+1})}(t),
\end{equation}
where $0=\tau_0<\tau_1<\ldots<\tau_K<\tau_{K+1}=1$ are the change-point
locations and $\theta_k\in \Theta$ the corresponding intensities, such that
$\theta_k\neq \theta_{k+1}$ for $k=0,\ldots,K$. The collection of step functions
on $[0,1)$ with values in $\Theta$ and an arbitrary but finite number of
change-points will be denoted by $\S$. For $\vt\in \S$ as in
\eqref{model:regfunc} we denote by $J(\vt) = (\tau_1,\ldots,\tau_K)$  the
increasingly ordered vector of change-points and by $\# J(\vt) = K\in\N$ its length.
We will denote the set of step functions
with $K$ change-points and change-point locations restricted to the sample grid by
$\S_n[K]\subset\S$.

For any estimator $\hat\vartheta$ of $\vt\in\S$, the
estimated number of change-points will be denoted by $\#J(\hat\vt) = \hat K$, the change-point locations
by $J(\hat\vt) = (\hat\tau_1,\ldots,\hat\tau_{\hat K})$ and we set  $\hat
\theta_k = \hat \vartheta(t)$ for $t\in [\hat \tau_k, \hat \tau_{k+1})$. For simplicity, for each $n\in \N$ we
restrict to estimators which have change-points only at sampling points, i.e.
$\hat\vt\in \S_n[K]$ with $\hat \tau_k = \hat l_k \slash n$ for some $1\leq \hat
l_k\leq n$. To keep the presentation simple, throughout the following we restrict ourselves to an equidistant sampling scheme as in
\eqref{intro:model}. However, we mention that extensions to more general designs are possible.

\subsubsection{Multiscale statistic} 
Let $1\leq i \leq j \leq n$. Then, the likelihood ratio statistic
$T_i^j(Y,\theta)$ in \eqref{intro:likeratstat} can be rewritten into
\begin{equation*}
T_i^j(Y,\theta_0) = \sup_{\theta \in\Theta}\left(\sum_{l=i}^{j} (\theta Y_l -
\psi(\theta))\right) - \sum_{l=i}^{j}(\theta_0 Y_l -
\psi(\theta_0)).
\end{equation*}
Introducing the notation $\phi(x) = \sup_{\theta\in\Theta} \theta x -
\psi(\theta)$ for the \emph{Legendre-Fenchel conjugate} of $\psi$ and $J(x,
\theta) = \phi(x) - (\theta x - \psi(\theta))$ we find
\begin{equation*}
T_i^j(Y,\theta_0) = (j-i+1) J(\overline Y_i^j, \theta_0) \geq 0,
\end{equation*}
where $\overline Y_i^j = (\sum_{i\leq l\leq j} Y_l)\slash (j-i+1)$. The
multiscale  statistic $T_n(Y,\vt)$ in \eqref{intro:mrstat} was defined to be the (scale calibrated)
maximum over all $\sqrt{2T_i^j}$ such that $\hat l_k \leq i \leq j < \hat
l_{k+1}$ for some $0\leq k\leq \hat K$. As mentioned in the introduction we sometimes will restrict the minimal
interval length (scale) by a sequence of lower bounds $(c_n)_{n\in\N}$ tending
to zero. In order to ensure that the asymptotic null distribution is non
degenerate, we assume for non-Gaussian families (see also \citep{SchMunDue11})
\begin{equation}\label{def:lowerbound}
n^{-1}\log^3 n\slash c_n\ra 0.
\end{equation}
Then, the modified version of \eqref{intro:mrstat} reads as
\begin{equation}\label{smre:mrstat}
T_n(Y,\vt;c_n) = \max_{0\leq k\leq \ha{K}}\max_{\substack{ l_k \leq
i\leq j <  l_{k+1} \\ (j-i+1)/n\geq c_n}} \left(\sqrt{2T_i^j(Y,\theta_k)} -
\sqrt{2\log \frac{n e}{j-i+1}}\right).
\end{equation}

\subsection{Asymptotic null distribution}\label{subsec:limit}

We give a representation of the limiting
distribution of the multiscale statistic $T_n$ in \eqref{smre:mrstat} in
terms of 
\begin{equation}\label{limit:limitstat} M :=  \sup_{0\leq s < t \leq 1}
\left(\frac{\abs{B(t) - B(s)}}{\sqrt{t-s}} - \sqrt{2\log\frac{e}{t-s}}\right),
\end{equation}
where $(B(t))_{t\geq 0}$ denotes a standard Brownian motion. We stress that the
statistic $M$ is finite almost surely and has a continuous
distribution supported on $[0,\infty)$ (cf. \citep{DueSpok01, DuePitZho06}).

\begin{thm}\label{limit:mainthm}
Assume that $(c_n)_{n\in\N}$ satisfies \eqref{def:lowerbound}. Then,  
\begin{equation}\label{limit:mainres}
T_n(Y,\vartheta;c_n)\stackrel{D}{\ra} \max_{0\leq k\leq K}\sup_{\tau_k\leq
s<t\leq\tau_{k+1}} \left(\frac{\abs{B(t) - B(s)}}{\sqrt{t-s}} -
\sqrt{2\log\frac{e}{t-s}}\right).
\end{equation}
Further, let $M_0,\ldots,M_K$ be independent copies of $M$ as in \eqref{limit:limitstat}. Then, the right hand
side in \eqref{limit:mainres} is stochastically bounded from above by $M$ and from
below by
\begin{equation*}
\max_{0\leq k\leq K} \left(M_k - \sqrt{2\log\frac{1}{\tau_{k+1} -
\tau_k}}\right).
\end{equation*} 
\end{thm}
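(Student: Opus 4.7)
The plan is to treat the three claims in order, beginning with the distributional limit, then reading off the stochastic upper and lower bounds from a Brownian representation.

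\textbf{Step 1: segment decomposition.} Because $\vartheta$ is constant with value $\theta_k$ on $[\tau_k,\tau_{k+1})$ and the observations on different segments are independent, the statistic splits as
\begin{equation*}
T_n(Y,\vartheta;c_n) \;=\; \max_{0\leq k\leq K}\,T_n^{(k)}(Y;c_n),
\end{equation*}
where $T_n^{(k)}$ is the scale-calibrated scan over intervals $[i/n,j/n]\subset[\tau_k,\tau_{k+1})$ with $(j-i+1)/n\geq c_n$. It therefore suffices to analyse each $T_n^{(k)}$ and to exploit independence at the end.

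\textbf{Step 2: Gaussian approximation on each segment.} Using the representation $T_i^j(Y,\theta_k)=(j-i+1)J(\overline Y_i^j,\theta_k)$ and a second-order Taylor expansion of $J(\cdot,\theta_k)$ around $m(\theta_k)$ gives
\begin{equation*}
\sqrt{2T_i^j(Y,\theta_k)} \;=\; \frac{|S_j-S_{i-1}-(j-i+1)m(\theta_k)|}{\sqrt{(j-i+1)\,v(\theta_k)}}\;+\;R_{i,j},
\end{equation*}
with $S_l=Y_1+\cdots+Y_l$ and a remainder $R_{i,j}$ that is controlled by moment bounds available for regular exponential families. A KMT-type strong approximation (in the form used by \citep{DueSpok01,SchMunDue11}) couples $(S_l-lm(\theta_k))/\sqrt{v(\theta_k)}$ on the $k$-th segment with a Brownian motion $B_k$, and the condition $n^{-1}\log^3 n/c_n\to0$ in \eqref{def:lowerbound} is exactly what is needed to make both the approximation error and the Taylor remainder negligible, uniformly over all intervals of length $\geq c_n$, after subtraction of $\sqrt{2\log(ne/(j-i+1))}$. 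This is the main obstacle: the calibration term can blow the remainder up on the smallest scales, and the whole point of \eqref{def:lowerbound} is to suppress it. Consequently
\begin{equation*}
T_n^{(k)}(Y;c_n) \;\stackrel{D}{\longrightarrow}\; \sup_{\tau_k\leq s<t\leq \tau_{k+1}} \left(\frac{|B_k(t)-B_k(s)|}{\sqrt{t-s}}-\sqrt{2\log\frac{e}{t-s}}\right).
\end{equation*}
Since the segments are disjoint and the data independent, the $B_k$'s may be taken independent, and a Slutsky/continuous-mapping argument yields the joint convergence and hence \eqref{limit:mainres}.

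\textbf{Step 3: upper bound.} Realise $B_0,\ldots,B_K$ as the restrictions of a single standard Brownian motion $B$ on $[0,1]$ to the intervals $[\tau_k,\tau_{k+1}]$. Then the right-hand side of \eqref{limit:mainres} is a supremum over a strict subset of pairs $(s,t)$ with $0\le s<t\le 1$, so it is almost surely bounded above by $M$ as defined in \eqref{limit:limitstat}.

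\textbf{Step 4: lower bound.} For each $k$, parameterise by $u=(s-\tau_k)/(\tau_{k+1}-\tau_k)$, $v=(t-\tau_k)/(\tau_{k+1}-\tau_k)$ and use Brownian scaling to write $(B(t)-B(s))/\sqrt{t-s}=(\tilde B_k(v)-\tilde B_k(u))/\sqrt{v-u}$ for independent standard Brownian motions $\tilde B_k$ on $[0,1]$. The elementary inequality $\sqrt{a+b}\leq\sqrt{a}+\sqrt{b}$ applied to
\begin{equation*}
\sqrt{2\log\frac{e}{t-s}} \;=\; \sqrt{2\log\frac{e}{v-u}+2\log\frac{1}{\tau_{k+1}-\tau_k}}
\end{equation*}
gives
\begin{equation*}
\sup_{\tau_k\leq s<t\leq\tau_{k+1}}\left(\frac{|B(t)-B(s)|}{\sqrt{t-s}}-\sqrt{2\log\frac{e}{t-s}}\right)\;\geq\; M_k-\sqrt{2\log\frac{1}{\tau_{k+1}-\tau_k}},
\end{equation*}
with $M_k$ defined from $\tilde B_k$ exactly as $M$ is from $B$. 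Since the $\tilde B_k$'s are independent, so are the $M_k$'s. Taking the maximum over $k$ yields the stated lower bound and completes the proof.
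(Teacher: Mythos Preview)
Your proposal is correct and follows essentially the same route as the paper: the segment decomposition, the combination of a Taylor expansion of the local likelihood ratio with a KMT strong approximation under the scale restriction \eqref{def:lowerbound}, and the Brownian-scaling argument together with $\sqrt{a+b}\le\sqrt a+\sqrt b$ for the stochastic bounds are exactly the ingredients the paper uses. The only point where the paper is slightly more explicit is in passing from the discretised Gaussian scan over scales $\geq c_n$ to the continuous limit $M$, which it handles by a sandwich (the KMT coupling gives one inequality, and a Donsker-type argument at a fixed lower cutoff $c>0$ followed by $c\to0$ gives the other); your Step~2 absorbs this into the phrase ``continuous-mapping argument,'' which is fine at the level of a sketch.
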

 
It is important to note that the limit distribution in \eqref{limit:mainres} (same as the lower bound)
depends on the unknown regression function $\vartheta$ only through the number
of change-points $K$ and the change-point locations $\tau_k$, i.e. the function
values of $\vt$ do not play a role. From the upper bound in Theorem
\ref{limit:mainthm} we obtain
\begin{equation} \label{limit:confineq}
\lim_{n\ra\infty}\Prob\left(T_n(Y,\vartheta;c_n) \leq q_\alpha \right) \geq
\alpha,
\end{equation}
with $q_\alpha$ being the $\alpha$-quantile of $M$.
In practice the distribution of $M$ is obtained by simulations.
In Section \ref{subsec:gauss:opt} we will see that for the Gaussian case even a nonasymptotic version
of Theorem \ref{limit:mainthm} can be obtained, which allows for finite sample refinement of the null distribution of $T_n$.
\ha{As the asymptotics is rather slow, this finite sample correction is helpful even for relatively large samples, say if $n$ is of the order of a few thousands. This is highlighted in Figure \ref{limit:distM} where it becomes apparent that the empirical null distributions for finite samples, obtained from simulations, is in general not supported 
in $[0,\infty)$}. 
\begin{figure}[htp]
 \includegraphics[width=0.7\columnwidth]{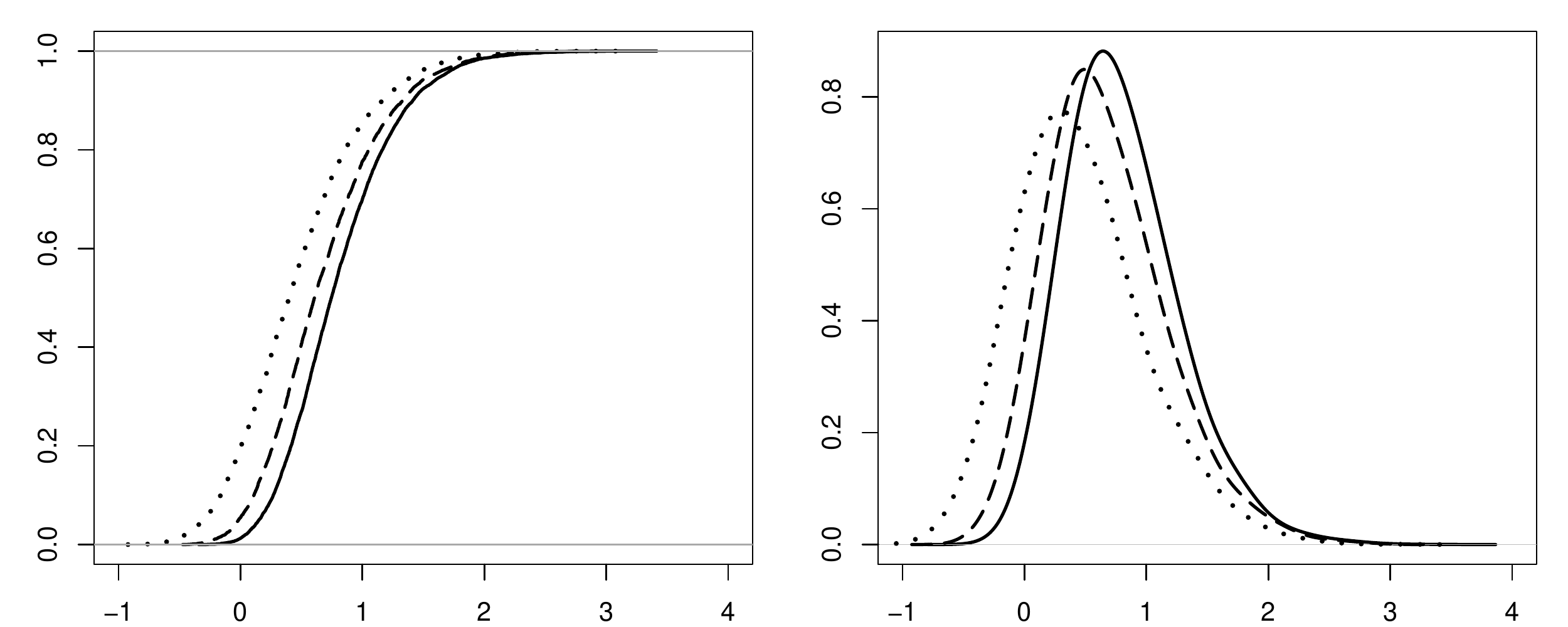}
 \caption{Simulations of the cdf (left) and density (right) of $M$ as in
 \eqref{limit:limitstat} for \ha{$n=50$(dotted line), $n=500$(dashed line) and $n=5000$(solid line) equidistant discretization points}.}
 \label{limit:distM} 
\end{figure}
To the best of our knowledge, it is an open and challenging problem to derive \am{tight} bounds for the tails of $M$ (cf.
\citep{DueSpok01, DueWal08, DuePitZho06}) which is not addressed in this article.
By such bounds the probability of overestimating the number of
change-points could be controlled explicitly, as we will see in the upcoming
section.
\ha{Moreover, we point out that the inequality in \eqref{limit:confineq} is not sharp, if the true functions has at least one change-point. This is due to the fact that we bound \am{$T_n$ in \eqref{limit:confineq} by $q_\alpha$, the quantile of $M$ which serves as the bound for the r.h.s. in \eqref{limit:mainres}}. For an illustration of this, Figure \ref{limit:exactnulldis} shows P-P plots of the exact null distribution of signals with $2$, $4$ and $10$ equidistant change-points against the null distribution of a signal without change-points for sample size $n=500$.}
\am{Of course, further information on the minimal number and location of change-points can be used to improve the distributional bound by $M$ in Theorem \ref{limit:mainthm}. We will not pursue this further.}

\begin{figure}[htp]
 \includegraphics[width=0.7\columnwidth]{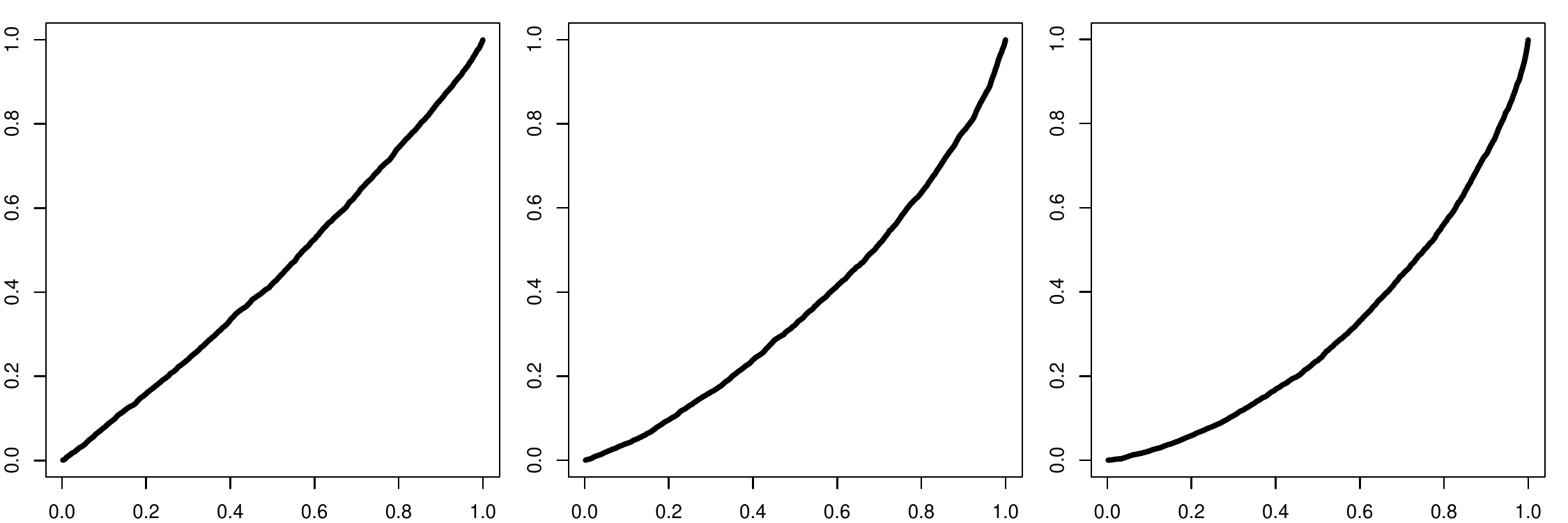}
 \caption{ \ha{Probability-Probability plots of the empirical null distribution of a signal without change-points ($x$-axis) against signals with $2$(left), $5$(middle) and $10$(right) equidistant change-points ($y$-axis) for $n=500$}.}
 \label{limit:exactnulldis} 
\end{figure}

\subsection{Exponential inequality for the estimated number of change-points}\label{sec:conscp}

In this section we derive explicit bounds on the probability that $\hat K(q)$ as
defined in \eqref{intro:optprob} underestimates the true number of change-points
$K$. In combination with the results in Section \ref{subsec:limit}, these bounds
will imply model selection consistency, i.e. $\Prob(\hat K(q_n)= K)\ra 1$ for a
suitable sequence of thresholds $(q_n)_{n\in\N}$ in \eqref{intro:optprob}.

We first note, that with the additional constraint in \eqref{smre:mrstat} on the
minimal interval length, the estimated number of change-points is given by
\begin{equation}\label{conscp:estnocp} 
\hat K(q)  =  \min \set{ K\in\N ~:~\exists \vartheta \in
\S_n[K]: T_n(Y,\vartheta;c_n) \leq q},\quad q\in\R.
\end{equation}
Now let $\Delta$ and $\lambda$ be the smallest absolute jump size and the smallest
interval length of the true regression function $\vt\in \S$, respectively and assume that
$\vt(t)\in [\underline\theta, \overline \theta]$ for all $t\in [0,1]$. We 
give the aforementioned exponential upper bound on the probability that
the number of change-points is underestimated. The results follows from
the general exponential inequality in the supplement, Theorem \ref{conscp:thmunderest}.

\begin{thm}[Underestimation bound]\label{conscp:corunderest}
Let $q\in \R$ and $\hat K(q)$ be defined as in \eqref{conscp:estnocp} with $\lambda \geq 2 c_n$. Then, there exists a constant
$C=C(\mathcal{F},\underline\theta, \overline\theta)>0$ s.t.
\begin{equation}
\Prob\left(\hat K(q) < K\right) \leq 2 K e^{- C n\lambda
\Delta^2}\left[e^{\frac{1}{2}\left(q+\sqrt{2\log(2e/\lambda)}\right)^2} + e^{-
3C n\lambda \Delta^2}\right],\quad n\in \N.
\end{equation}
\end{thm}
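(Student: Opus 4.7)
The plan is to show that the event $\{\hat K(q) < K\}$ forces the existence of a true change-point that is ``uncovered'' by the candidate $\tilde\vt$, and then to bound by large deviations the probability that the local likelihood-ratio statistics around this uncovered change-point nevertheless respect the multiscale constraint.

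First, by definition of $\hat K(q)$, on $\{\hat K(q) < K\}$ there is some $\tilde\vt \in \S_n$ with $\tilde K := \#J(\tilde\vt) < K$ and $T_n(Y,\tilde\vt;c_n) \leq q$. Since the true change-points are separated by at least $\lambda$, the intervals $[\tau_j - \lambda/2, \tau_j + \lambda/2]$, $j=1,\dots,K$, are pairwise disjoint. If each of these contained a change-point of $\tilde\vt$, we would have $\tilde K \geq K$, contradicting $\tilde K < K$. Hence there exists some $j^* \in \{1,\dots,K\}$ such that $[\tau_{j^*} - \lambda/2, \tau_{j^*} + \lambda/2]$ lies entirely inside one piece of $\tilde\vt$ on which $\tilde\vt$ takes a single value $\tilde\theta$. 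On the two sub-intervals $I_1 := [\tau_{j^*} - \lambda/2, \tau_{j^*})$ and $I_2 := [\tau_{j^*}, \tau_{j^*} + \lambda/2)$ the true regression function is constant with values $\theta_{j^*-1}$ and $\theta_{j^*}$, respectively, and by $|\theta_{j^*}-\theta_{j^*-1}| \geq \Delta$ at least one of $|\tilde\theta - \theta_{j^*-1}|,|\tilde\theta - \theta_{j^*}|$ is $\geq \Delta/2$.

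Second, since $\lambda \geq 2 c_n$, both $I_1$ and $I_2$ are admissible scales in \eqref{smre:mrstat}, and the constraint $T_n(Y,\tilde\vt;c_n)\leq q$ gives
\begin{equation*}
\sqrt{2 T_{I_r}(Y,\tilde\theta)} \;\leq\; q + \sqrt{2\log(ne/|I_r|)} \;\leq\; q+\sqrt{2\log(2e/\lambda)} =: r_*,\qquad r\in\{1,2\}.
\end{equation*}
Combining with Step one, on $\{\hat K(q) < K\}$ there exist $j^* \in \{1,\dots,K\}$, $r \in \{1,2\}$, and $\tilde\theta \in [\underline\theta,\overline\theta]$ with $|\tilde\theta - \theta^*_r| \geq \Delta/2$ (where $\theta^*_r$ is the true value on $I_r$) and $T_{I_r}(Y,\tilde\theta) \leq r_*^2/2$. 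A union bound over $j^* \in \{1,\dots,K\}$ and $r \in \{1,2\}$ yields the factor $2K$ in the statement.

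Third, each summand is controlled by a concentration bound for the local likelihood ratio. Using $T_{I_r}(Y,\tilde\theta) = |I_r|\,J(\bar Y_{I_r},\tilde\theta)$ and the quadratic lower bound $J(y,\theta_0) \geq \frac{\kappa}{2}(y - m(\theta_0))^2$ valid on the compact range $[\underline\theta,\overline\theta]$ (with $\kappa>0$ depending only on $\mathcal{F},\underline\theta,\overline\theta$), the event $T_{I_r}(Y,\tilde\theta) \leq r_*^2/2$ implies that $\bar Y_{I_r}$ lies within distance $r_*/\sqrt{\kappa |I_r|}$ of $m(\tilde\theta)$, whereas its true mean $m(\theta^*_r)$ differs from $m(\tilde\theta)$ by at least $v_{\min}\Delta/2$. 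A standard Bernstein/sub-exponential inequality for $\bar Y_{I_r}$ (available in exponential families via bounds on $v$) together with the elementary $(a-b)^2 \geq a^2/2 - b^2$ then produces an exponential bound of the form
\begin{equation*}
2\exp\!\Bigl(-C n\lambda\Delta^2 + \tfrac{1}{2}r_*^2\Bigr) \;+\; 2\exp(-4 C n\lambda \Delta^2),
\end{equation*}
where the second summand accounts for the regime in which the linearization breaks down and one uses the full exponential tail. This matches, after the union bound, the stated expression. The constant $C=C(\mathcal{F},\underline\theta,\overline\theta)$ is traced from the curvature $\kappa$ of $\psi$ and the lower bound on $v$.

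The main obstacle is Step three: getting the precise constant $1/2$ in front of $(q+\sqrt{2\log(2e/\lambda)})^2$ requires a sharp quadratic comparison between $J(\cdot,\tilde\theta)$ and the squared deviation $(\bar Y_{I_r}-m(\tilde\theta))^2$ together with a tight sub-Gaussian-type tail, both holding uniformly over the bounded parameter range. The second exponential term reflects the necessity to treat the region where $r_*$ is so large that the linearized argument is non-informative and one falls back on the full exponential family concentration, yielding the additive $e^{-3Cn\lambda\Delta^2}$ correction inside the bracket.
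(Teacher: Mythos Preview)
Your first two steps match the paper's argument: the same pigeonhole on the disjoint intervals $(\tau_j-\lambda/2,\tau_j+\lambda/2)$, the same reduction to the two half-intervals on either side of the uncovered change-point, the same observation that the candidate value $\tilde\theta$ must lie at distance $\geq\Delta/2$ from at least one of the true levels, and the same union bound producing the factor $2K$.

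The gap is in Step~3. First, the assertion $\tilde\theta\in[\underline\theta,\overline\theta]$ is unjustified: the candidate $\tilde\vt\in\S$ may take any value in $\Theta$, so your quadratic lower bound $J(y,\theta_0)\geq\tfrac{\kappa}{2}(y-m(\theta_0))^2$ is applied at a point $\theta_0=\tilde\theta$ that need not lie in the compact range on which $\kappa$ is defined. Second, even granting that, the bound requires control of $1/v(m^{-1}(y))$, which fails unless $\bar Y_{I_r}$ is also confined to a compact set; you gesture at this with the ``linearization breaks down'' remark but do not carry it through, and your own final paragraph concedes that the constant $\tfrac12$ and the second exponential term are not actually derived.

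The paper avoids linearizing $J$ altogether. The key device is the convexity of $\theta\mapsto T_{I}(Y,\theta)$, whose minimizer is $m^{-1}(\bar Y_I)$: on the event $\{m^{-1}(\bar Y_{I_k^-})\leq\theta_k^-+\Delta/2\}$ one has $T_{I_k^-}(Y,\tilde\theta)\geq T_{I_k^-}(Y,\theta_k^-+\Delta/2)$ for \emph{every} $\tilde\theta\geq\theta_k^-+\Delta/2$, which reduces the existential over $\tilde\theta$ to the fixed boundary value $\theta_k^-+\Delta/2\in[\underline\theta,\overline\theta]$. A direct power estimate for the likelihood-ratio test at this fixed alternative (Lemma~\ref{proofs:power}, obtained from the large-deviation inequality \eqref{proofs:largedev}) then yields the first bracketed term with the exact factor $\tfrac12\bigl(q+\sqrt{2\log(2e/\lambda)}\bigr)^2$. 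The complementary event $\{m^{-1}(\bar Y_{I_k^-})>\theta_k^-+\Delta/2\}$ is a large deviation of $\bar Y_{I_k^-}$ and is bounded by $\exp\bigl(-\tfrac{n\lambda}{2}D(\theta_k^-+\Delta/2\,\|\,\theta_k^-)\bigr)$ via Lemma~\ref{proofscons:bound2}; this is the source of the second bracketed term. The constant $C$ in \eqref{conscp:constant} then comes from uniform lower bounds on the Kullback--Leibler quantities (Lemma~\ref{conscp:kappalemma}), not from a curvature bound on $J$.
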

From Theorem \ref{conscp:thmunderest} and Lemma \ref{conscp:kappalemma} it
follows that 
\begin{equation}\label{conscp:constant}
C(\mathcal{F},\underline\theta, \overline\theta) =
\frac{1}{32}\frac{\inf_{\underline\theta\leq \theta \leq\overline\theta}
v(t)^2}{\sup_{\underline\theta\leq \theta \leq \overline\theta} v(t)},
\end{equation}
which gives $C=1\slash 32$ for the Gaussian family and $C=\underline
\mu ^2\slash (32\overline\mu)$ for the Poisson family, given
$m(\vt)\in[\underline\mu, \overline\mu]$ in the latter case.

On the one hand, if
$q=q_n$ and $q_n\slash \sqrt{n}\ra 0$ as $n\ra \infty$, it becomes clear from Theorem \ref{conscp:corunderest} that $\hat K(q_n)\geq K$ with high probability. On the other hand, it follows from Theorem
\ref{limit:mainthm} that $T_n(Y,\vartheta;c_n)$ is bounded almost surely
as $n\ra \infty$ if $c_n$ is as in \eqref{def:lowerbound}. This in turn implies that the probability for $\hat
K(q_n)\leq K$ tends to $1$, since
\begin{equation}\label{conscp:overestim}
\Prob\left(\hat K(q_n) > K \right) \leq \Prob(T_n(Y,\vartheta;c_n) >
q_n)\ra 0,
\end{equation}
whenever $q_n\ra \infty$, as $n \ra \infty$. Thus, we summarize

\begin{thm}[Model selection consistency]\label{conscp:mainthm}
Let the assumptions of Theorems \ref{limit:mainthm} and \ref{conscp:corunderest}
hold and additionally assume that $q_n\ra \infty$ and $q_n\slash \sqrt n\ra 0$ as
$n\ra\infty$. Then,
\begin{equation*}
\lim_{n\ra\infty} \Prob( \hat K(q_n) = K) = 1.
\end{equation*}
\end{thm}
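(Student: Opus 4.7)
The plan is to split the event $\{\hat K(q_n)\neq K\}$ into its over- and underestimation parts and bound each using one of the two previously established results. That is, I would start from
\begin{equation*}
\Prob(\hat K(q_n)\neq K) \leq \Prob(\hat K(q_n) > K) + \Prob(\hat K(q_n) < K),
\end{equation*}
and show both terms on the right tend to zero under the stated growth conditions on $q_n$.

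For the overestimation term, I would use the chain of inequalities already displayed in \eqref{conscp:overestim}: by \eqref{intro:prob:overest} we have $\Prob(\hat K(q_n)>K)\leq \Prob(T_n(Y,\vartheta;c_n)>q_n)$. By Theorem \ref{limit:mainthm}, $T_n(Y,\vartheta;c_n)$ converges in distribution to a random variable that is stochastically dominated by the a.s.\ finite statistic $M$. Since $q_n\to\infty$, the Portmanteau-type argument (the limiting law assigns mass one to $(-\infty,\infty)$, hence $\Prob(T_n(Y,\vartheta;c_n)>q_n)$ is eventually bounded by $\Prob(M>q_n-\epsilon)+o(1)$ for any $\epsilon>0$) gives $\Prob(T_n(Y,\vartheta;c_n)>q_n)\to 0$.

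For the underestimation term, I would invoke the explicit bound of Theorem \ref{conscp:corunderest}:
\begin{equation*}
\Prob(\hat K(q_n) < K) \leq 2Ke^{-Cn\lambda\Delta^2}\Bigl[e^{\tfrac12(q_n+\sqrt{2\log(2e/\lambda)})^2} + e^{-3Cn\lambda\Delta^2}\Bigr].
\end{equation*}
The second bracketed term contributes $2Ke^{-4Cn\lambda\Delta^2}\to 0$ trivially since $\lambda,\Delta,K$ are fixed constants determined by the true $\vartheta$. For the first bracketed term, the exponent equals
\begin{equation*}
-Cn\lambda\Delta^2 + \tfrac12\bigl(q_n+\sqrt{2\log(2e/\lambda)}\bigr)^2 = -Cn\lambda\Delta^2 + \tfrac12 q_n^2\bigl(1+o(1)\bigr),
\end{equation*}
where the $o(1)$ uses $q_n\to\infty$ to absorb the constant shift $\sqrt{2\log(2e/\lambda)}$. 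The hypothesis $q_n/\sqrt{n}\to 0$ then yields $q_n^2 = o(n)$, so the exponent tends to $-\infty$ and the term vanishes.

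The argument is essentially bookkeeping once the two previous theorems are in hand; the only mildly delicate point is keeping track of the constants $\lambda$, $\Delta$, and $K$, which are fixed (but unknown) characteristics of the true signal $\vartheta$, and verifying that the condition $q_n/\sqrt n\to 0$ is exactly the right scaling to make the dominant exponent $-Cn\lambda\Delta^2 + q_n^2/2$ diverge to $-\infty$. Combining both estimates gives $\Prob(\hat K(q_n)\neq K)\to 0$, which is the claim.
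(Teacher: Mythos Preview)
Your proof is correct and follows essentially the same approach as the paper, which presents the argument informally in the paragraph preceding the theorem statement: split into over- and underestimation, handle the former via \eqref{conscp:overestim} and Theorem~\ref{limit:mainthm} using $q_n\to\infty$, and handle the latter via Theorem~\ref{conscp:corunderest} using $q_n/\sqrt{n}\to 0$. Your write-up is in fact more detailed than the paper's, which simply states that each conclusion ``becomes clear'' from the respective theorem.
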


Giving a non-asymptotic bound for the probability for overestimating the true number
of change-points (in the spirit of \eqref{conscp:overestim})
appears to be rather difficult in general. For the Gaussian case though this is possible, as we will show in the next section.

\subsection{Gaussian observations}\label{subsec:gauss:opt}

We now derive sharper results for the case when $\mathcal{F}$ is the
Gaussian family of distributions with constant variance. In this case
\eqref{intro:model} reads as
\begin{equation}\label{gauss:opt:model}
Y_i = \mu(i\slash n) + \sigma \eps_i,\quad i=1,\ldots,n
\end{equation}
where $\eps_1,\ldots,\eps_n$ are independent $\mathcal{N}(0,1)$ random
variables, $\sigma>0$ and $\mu\in\S$ denotes the expectation of $Y$. To ease notation we assume in the following that $\sigma=1$. For the general case 
replace $\Delta$ by $\Delta/\sigma$.

In the Gaussian case it is possible to get rid of the lower bound for the
smallest scales $c_n$ as in \eqref{def:lowerbound}  because the strong approximation by Gaussian observations in the proof of Theorem \ref{limit:mainthm} 
becomes superfluous. We obtain the following {\it non-asymptotic} result on the
null distribution.

\begin{thm}[Null Distribution of $T_n$] \label{gauss:limdis}
For any $n\in \N$
\begin{equation*}
\max_{0\leq k\leq K} \left( M_k -\sqrt{2\log\frac{1}{\tau_{k+1} -\tau_k}} \right)
\stackrel{\mathcal{D}}{\leq} T_n(Y,\vartheta) \stackrel{\mathcal{D}}{\leq} M^{(n)} \stackrel{\mathcal{D}}{\leq} M,
\end{equation*}
where $M^{(n)}$ is as $M$ in \eqref{limit:limitstat} where the supremum is only taken over the system of discrete intervals $[i/n,j/n]$.
\end{thm}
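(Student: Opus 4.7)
I would prove the three stochastic orderings in the chain $\max_k(M_k-\sqrt{2\log(1/\ell_k)}) \stackrel{\mathcal D}{\leq} T_n(Y,\vt) \stackrel{\mathcal D}{\leq} M^{(n)} \stackrel{\mathcal D}{\leq} M$ (writing $\ell_k := \tau_{k+1}-\tau_k$) separately, all within a single coupling. The key algebraic ingredient, available only in the Gaussian case, is that $\psi(\theta)=\theta^2/2$, $\phi(x)=x^2/2$, hence $2T_i^j(Y,\theta_k) = (j-i+1)(\overline Y_i^j-\theta_k)^2$, and so whenever the grid interval $[(i-1)/n, j/n]$ lies inside a constant block of $\vt$ with value $\theta_k$ the identity
\[
\sqrt{2T_i^j(Y,\theta_k)} = \frac{|\sum_{l=i}^{j}\eps_l|}{\sqrt{j-i+1}}
\]
holds exactly, with no invariance principle required. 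I would then construct a single standard Brownian motion $B$ on $[0,1]$ whose values at grid points satisfy $B(i/n) = n^{-1/2}\sum_{l=1}^i \eps_l$ (e.g.\ by attaching independent Brownian bridges on each sub-interval $[(i-1)/n, i/n]$). With this $B$ every summand of $T_n(Y,\vt)$ reads $|B(t)-B(s)|/\sqrt{t-s} - \sqrt{2\log(e/(t-s))}$ at $s=(i-1)/n$, $t=j/n$, that is, it is exactly an evaluation of the integrand defining $M$ and $M^{(n)}$.

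\textbf{Upper bounds.} Both $T_n(Y,\vt)\stackrel{\mathcal D}{\leq} M^{(n)}$ and $M^{(n)} \stackrel{\mathcal D}{\leq} M$ are now essentially tautologies in the constructed coupling: each quantity is obtained by taking the supremum of the same scale-calibrated integrand over a successively larger index family. $T_n(Y,\vt)$ only uses those grid pairs $(s,t)$ that lie inside a single block of $\vt$, $M^{(n)}$ uses all grid pairs, and $M$ uses all continuous pairs; the pointwise (in $\omega$) inequalities are immediate and hence yield the stochastic orderings. No probabilistic argument beyond the coupling is needed for this half of the chain.

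\textbf{Lower bound.} This is the substantive part. For each block $k$ I rescale by setting $\tilde B_k(u) := (B(\tau_k+\ell_k u)-B(\tau_k))/\sqrt{\ell_k}$ for $u\in[0,1]$; by independent increments of $B$ on the disjoint intervals $[\tau_k,\tau_{k+1}]$ the processes $\tilde B_0,\ldots,\tilde B_K$ are \emph{independent} standard Brownian motions on $[0,1]$. For $(s,t)$ inside block $k$, writing $s' = (s-\tau_k)/\ell_k$, $t'=(t-\tau_k)/\ell_k$, the ratio is scale-invariant, $|B(t)-B(s)|/\sqrt{t-s} = |\tilde B_k(t')-\tilde B_k(s')|/\sqrt{t'-s'}$, and the logarithmic correction splits via sub-additivity of $\sqrt{\cdot}$:
\[
\sqrt{2\log(e/(t-s))} \;=\; \sqrt{2\log(e/(t'-s')) + 2\log(1/\ell_k)} \;\leq\; \sqrt{2\log(e/(t'-s'))} + \sqrt{2\log(1/\ell_k)}.
\]
Substituting and taking the maximum over the admissible grid in block $k$ shows that the block-$k$ contribution to $T_n(Y,\vt)$ is a.s.\ bounded below by $\tilde M_k - \sqrt{2\log(1/\ell_k)}$, where $\tilde M_k$ is the scale-calibrated supremum associated with $\tilde B_k$. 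Taking the maximum over $k$ and identifying each $\tilde M_k$ as an independent copy $M_k$ of $M$ via the scale invariance of Brownian motion yields the claim.

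\textbf{Main obstacle.} The delicate step is the last identification: $T_n$ explores only the discrete grid within each block, so the rescaled supremum $\tilde M_k$ is a priori only a discrete, hence stochastically smaller, approximation of a continuous $M$. What rescues the argument is that the sub-additive split of the logarithmic term is loose enough for the discrete quantity, on a grid of spacing $1/(n\ell_k)$ in the rescaled coordinate, to still dominate a genuine continuous $M_k$ in distribution, with independence across blocks inherited from the disjointness of the BM increments. Carrying this identification out rigorously (and in particular checking that the independent-copies structure of the $M_k$'s is preserved) is where I expect the proof's real work to sit; the rest of the chain is essentially a bookkeeping exercise in coupling.
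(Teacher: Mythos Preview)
Your coupling via Brownian-bridge interpolation and the resulting pointwise chain $T_n(Y,\vt)\leq M^{(n)}\leq M$ is exactly right and is the intended argument: in the Gaussian case one has the exact identity $\sqrt{2T_i^j(Y,\theta_k)}=\bigl|\sum_{l=i}^j\eps_l\bigr|/\sqrt{j-i+1}$, so the KMT step in the proof of Theorem~\ref{limit:mainthm} is superfluous and both upper inequalities hold almost surely in the coupling, for every $n$. This is all the paper uses downstream (Corollary~\ref{gauss:opt:cor1} and the non-asymptotic overestimation bound rely only on $T_n\stackrel{\mathcal D}{\leq}M$).

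For the lower bound, however, the obstacle you flag is real and your suggested fix does not work. The paper gives no separate proof of Theorem~\ref{gauss:limdis}; the lower bound is carried over verbatim from Theorem~\ref{limit:mainthm}, where it is proved for the \emph{limiting} random variable $\max_k\sup_{\tau_k\leq s<t\leq\tau_{k+1}}(\cdot)$, a \emph{continuous} supremum over each block. That argument (Brownian rescaling plus subadditivity of $\sqrt{\,\cdot\,}$) works precisely because the continuous sup over $[\tau_k,\tau_{k+1}]$ rescales to a continuous sup over $[0,1]$, which is an independent copy of $M$. For finite $n$, $T_n$ is only a discrete sup over grid points; after rescaling you get a discrete sup on a mesh of size $1/(n\ell_k)$, which is stochastically \emph{smaller} than the continuous $M_k$. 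The slack in the square-root inequality goes the wrong way for what you need---it weakens the lower bound, it does not compensate for discretisation. In fact, specialising to $K=0$ (so $\ell_0=1$ and the penalty vanishes) the claimed lower bound reads $M\stackrel{\mathcal D}{\leq}T_n(Y,\vt)\stackrel{\mathcal D}{=}M^{(n)}$; combined with $M^{(n)}\stackrel{\mathcal D}{\leq}M$ this would force $M^{(n)}\stackrel{\mathcal D}{=}M$ for every $n$, which is plainly false (for $n=1$ one gets $|Z|-\sqrt{2}$ with $Z\sim\mathcal N(0,1)$). So the lower inequality should be read as the asymptotic statement inherited from Theorem~\ref{limit:mainthm}, not as a literal bound valid for each fixed $n$ with continuous copies $M_k$; your proof sketch cannot be completed as written, and neither can the paper's.
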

In contrast to Theorem  \ref{limit:mainthm}, this result is nonasymptotic and the inequality holds for any sample size.
 For this reason, we get the following improved upper bound for
the probability of overestimating the number of change-points.
\begin{cor}[Overestimation bound]\label{gauss:opt:cor1}
Let $q\in \R$ and $\hat K(q)$ be defined as in \eqref{conscp:estnocp}. Then for any $n \in \N$
\begin{equation*}
 \Prob \left( \hat K (q) > K \right) \leq \Prob\left( M \geq q
 \right).
\end{equation*} 
\end{cor}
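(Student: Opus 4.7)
The plan is to reduce the corollary to the overestimation inequality \eqref{intro:prob:overest} (also stated in \eqref{conscp:overestim}) and then apply Theorem \ref{gauss:limdis} directly. Everything is essentially a one-line chain once those two ingredients are in hand.

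First I would establish the deterministic containment
\begin{equation*}
\set{\hat K(q) > K} \subseteq \set{T_n(Y,\vt) > q}.
\end{equation*}
This follows from the very definition of $\hat K(q)$ in \eqref{conscp:estnocp}: if on some sample path the true step function $\vt$ (or its discretization to the sampling grid, which lies in $\S_n[K]$) satisfies the multiscale constraint $T_n(Y,\vt)\leq q$, then $\S_n[K]$ already contains a competitor, and hence the minimum $\hat K(q)$ cannot exceed $K$. Taking probabilities yields
\begin{equation*}
\Prob(\hat K(q) > K) \leq \Prob(T_n(Y,\vt) > q),
\end{equation*}
which is precisely the observation made in \eqref{intro:prob:overest}.

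Next I would invoke the right-hand stochastic inequality in Theorem \ref{gauss:limdis}, namely $T_n(Y,\vt) \stackrel{\mathcal{D}}{\leq} M$, to conclude
\begin{equation*}
\Prob(T_n(Y,\vt) > q) \leq \Prob(M > q) = \Prob(M \geq q),
\end{equation*}
where in the last equality I use that $M$ admits a continuous distribution, as noted after \eqref{limit:limitstat}. Chaining the two bounds produces the claim, and crucially the argument is valid for every finite $n\in\N$ because Theorem \ref{gauss:limdis} is non-asymptotic in the Gaussian setting.

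The only point that requires a little care, rather than a genuine obstacle, is the handling of true change-points $\tau_k$ that do not land on the sampling grid when comparing to candidates in $\S_n[K]$. I would resolve this by replacing $\vt$ by its grid-discretization $\vt_n \in \S_n[K]$ (rounding each $\tau_k$ to the nearest multiple of $1/n$); the multiscale statistic $T_n(Y,\vt)$ in \eqref{smre:mrstat} is defined via the integer indices $l_k$ anyway, so $T_n(Y,\vt)=T_n(Y,\vt_n)$ and the containment above goes through verbatim. No tail bound for $M$ is needed, which is what makes the Gaussian case much cleaner than the general exponential family treatment.
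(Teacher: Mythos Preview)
Your argument is correct and matches the paper's approach: the corollary is stated without a separate proof, being an immediate consequence of the containment \eqref{intro:prob:overest} (equivalently \eqref{conscp:overestim}) together with the non-asymptotic stochastic bound $T_n(Y,\vartheta)\stackrel{\mathcal{D}}{\leq} M$ from Theorem~\ref{gauss:limdis}. Your remark on grid discretization is a welcome clarification but does not depart from the paper's implicit treatment.
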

This corresponds to the ``worst case scenario'' for overestimation when the true signal $\vt$ has no jump.

For the probability of underestimating the number of change-points, we can improve
 Theorem \ref{conscp:corunderest} for Gaussian observations (see Theorem \ref{gauss:opt:mainthm}) to 

\begin{equation}\label{gauss:q:under}
 \Prob \left( \hat K (q) < K  \right) \leq  2 K\left[\exp\left( - \frac{1}{8}
 \left( \frac{\Delta\sqrt{\lambda n}}{2\sqrt{2}} - 
 q-\sqrt{2\log\frac{2e}{\lambda}}
 \right)_+^2 \right) +  \exp\left(- \frac{\lambda n
 \Delta^2}{16} \right)\right].
\end{equation}

\subsection{Multiscale detection of vanishing signals for Gaussian
observations}\label{subsec:gauss:van} 
We will now discuss the ability of SMUCE to
detect vanishing changes in a signal. We begin with the problem of detecting a
signal on a single interval against an unknown background.

\begin{thm}\label{gauss:opt:unknownbackground}
 Let \ha{$\vt_n(t) = \theta_0 + \delta_n I_n(t)$ for some
 $\theta_0,\theta_0+\delta_n\in\Theta$ and for some sequence of intervals $I_n\subset[0,1]$} and
 $Y$ be given by \eqref{gauss:opt:model}.  Further let $(q_n)_{n\in\N}$
 be bounded away from zero and assume
 \begin{enumerate}
 \item for signals on a large scale (i.e. $\liminf \abs{I_n} > 0$),
 that $\sqrt{\abs{I_n}n}\delta_n \slash {q_n}\ra\infty$, 
 \item for signals on a small scale (i.e. $\abs{I_n}\ra 0$), that
 $\sqrt{\abs{I_n}n}\delta_n \geq (\sqrt{2} + \eps_n ) \sqrt{\log (1/\abs{I_n})}$
 with $\eps_n$, s.t. $\eps_n \sqrt {\log(1\slash \abs{I_n})} \ra \infty$ and 
 $\sup_{n\in\N} q_n / (\eps_n\sqrt{\log(1\slash \abs{I_n})})<1$.
\end{enumerate}
 Then, 
\begin{equation}\label{gauss:opt:unknownbackgroundeqn}
 \sup_{\vt_0\equiv\theta\in\Theta} \Prob_{\vt_n} \left(  T_n(Y,\vartheta_0)
 \leq q_n \right)\ra 0.
\end{equation} 
\end{thm}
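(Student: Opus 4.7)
I would lower bound $T_n(Y,\vartheta_0)$ by the maximum of two local statistics, one tuned to detect the jump and one tuned to detect a mismatch between the candidate $\theta$ and the background $\theta_0$. Since $\vartheta_0\equiv\theta$ is constant, every discrete interval is admissible in \eqref{intro:mrstat}. Assume WLOG $|I_n|\le 1/2$ (otherwise swap $I_n$ with its complement, $\theta_0$ with $\theta_0+\delta_n$, and $\delta_n$ with $-\delta_n$; only $|\delta_n|$ and $\sqrt{n|I_n|}|\delta_n|$ enter). Set $I_n=[a_n,b_n]$, $i_n=\lceil na_n\rceil$, $j_n=\lfloor nb_n\rfloor$, $m_n=j_n-i_n+1\sim n|I_n|$, and pick $[i'_n/n,j'_n/n]$ in the longer component of $[0,1]\setminus I_n$ with $m'_n\ge n\ell_0$ for some $\ell_0\in(0,1/2]$ (eventually, e.g.\ $\ell_0=1/4$). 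Under $\Prob_{\vartheta_n}$ we have $\bar Y_{i_n}^{j_n}=\theta_0+\delta_n+m_n^{-1/2}Z_1$ and $\bar Y_{i'_n}^{j'_n}=\theta_0+(m'_n)^{-1/2}Z_2$ with $Z_1,Z_2\sim\mathcal N(0,1)$ independent. Writing $L_n=\sqrt{2\log(en/m_n)}$ and $L'_n$ analogously, this yields
\[
T_n(Y,\vartheta_0)\ge\max\Bigl(\bigl|\sqrt{m_n}(\delta_n+\theta_0-\theta)+Z_1\bigr|-L_n,\ \bigl|\sqrt{m'_n}(\theta_0-\theta)+Z_2\bigr|-L'_n\Bigr).
\]

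Now split on whether $|\theta_0-\theta|\le\eta_n$, for a threshold $\eta_n\in(0,|\delta_n|)$ chosen below. By reverse triangle, on the matched interval $|\sqrt{m_n}(\delta_n+\theta_0-\theta)+Z_1|\ge\sqrt{m_n}(|\delta_n|-\eta_n)-|Z_1|$ when $|\theta_0-\theta|\le\eta_n$, and on the disjoint interval $|\sqrt{m'_n}(\theta_0-\theta)+Z_2|\ge\sqrt{m'_n}\eta_n-|Z_2|$ otherwise. Both right-hand sides are $\theta$-free, so
\[
\sup_{\vartheta_0\equiv\theta\in\Theta}\Prob_{\vartheta_n}\bigl(T_n(Y,\vartheta_0)\le q_n\bigr)\le \Prob(|Z_1|\ge R^A_n)+\Prob(|Z_2|\ge R^B_n),
\]
with $R^A_n:=\sqrt{m_n}(|\delta_n|-\eta_n)-L_n-q_n$ and $R^B_n:=\sqrt{m'_n}\eta_n-L'_n-q_n$. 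It suffices to pick $\eta_n$ so that both $R^A_n,R^B_n\to\infty$; a Gaussian tail bound then finishes.

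For case (1), take $\eta_n=|\delta_n|/2$: $L_n,L'_n=O(1)$, $R^A_n\sim\sqrt{n|I_n|}|\delta_n|/2-q_n\to\infty$ by the hypothesis $\sqrt{n|I_n|}|\delta_n|/q_n\to\infty$ together with $q_n$ bounded away from zero, while $R^B_n\gtrsim\sqrt{n\ell_0}|\delta_n|/2-q_n\to\infty$ using $\sqrt n\ge\sqrt{n|I_n|}$. For case (2), take $\eta_n=2\eps_n\sqrt{\log(1/|I_n|)/n}$, so $\sqrt{m_n}\eta_n=2\eps_n\sqrt{|I_n|\log(1/|I_n|)}=o(1)$. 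Using $L_n=\sqrt{2\log(e/|I_n|)}=\sqrt{2\log(1/|I_n|)}+o(1)$ and the hypothesis $\sqrt{m_n}|\delta_n|\ge(\sqrt2+\eps_n)\sqrt{\log(1/|I_n|)}$, we obtain $R^A_n\ge(1-c)\eps_n\sqrt{\log(1/|I_n|)}-o(1)\to\infty$, where $c:=\sup_n q_n/(\eps_n\sqrt{\log(1/|I_n|)})<1$. On the disjoint side $m'_n\sim n/2$, $L'_n=O(1)$, hence $R^B_n\ge(2\sqrt{\ell_0}-c)\eps_n\sqrt{\log(1/|I_n|)}-O(1)\to\infty$, since $2\sqrt{\ell_0}=1>c$ for $\ell_0=1/4$ and $\eps_n\sqrt{\log(1/|I_n|)}\to\infty$.

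The main obstacle is the razor-thin constant in case (2): the calibration $L_n$ matches the detection threshold at exactly the leading constant $\sqrt2$, so the whole argument has to be absorbed in the surplus $\eps_n\sqrt{\log(1/|I_n|)}$, which must simultaneously dominate $q_n$ (guaranteed by $c<1$), the penalty $\sqrt{m_n}\eta_n$ incurred by permitting $\theta$ to drift, and the Gaussian fluctuation $|Z_1|$. The geometric scale $\eta_n\asymp\sqrt{\log(1/|I_n|)/n}$ is the balance point: small enough on the matched interval ($\sqrt{m_n}\eta_n=o(1)$) yet large enough on the disjoint one ($\sqrt{m'_n}\eta_n\to\infty$). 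Any cruder case split (e.g.\ $\eta_n\propto|\delta_n|$) inflates constants and breaks the inequality in case (2).
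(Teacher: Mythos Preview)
Your argument is essentially the same as the paper's: both split $\Theta$ at a threshold (your $\eta_n$ is the paper's $\theta_n^*-\theta_0=\sqrt{\beta_n/n}$), use the local statistic on $I_n$ for nearby $\theta$ and on a large disjoint interval $J_n$ for distant $\theta$, and in case~(2) choose the threshold at the scale $\eps_n\sqrt{\log(1/|I_n|)}/\sqrt{n}$. The only real difference is that the paper routes the tail estimates through the power lemma~\eqref{proofs:gauss:handyest}, while you write out the Gaussian means directly; this is cosmetic.

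One slip to fix: in case~(2) you claim $\sqrt{m_n}\,\eta_n=2\eps_n\sqrt{|I_n|\log(1/|I_n|)}=o(1)$. This need not hold, since $\eps_n\sqrt{\log(1/|I_n|)}\to\infty$ and only $\sqrt{|I_n|}\to 0$; the product can diverge. What is true, and what you actually need, is $\sqrt{m_n}\,\eta_n=2\sqrt{|I_n|}\cdot\eps_n\sqrt{\log(1/|I_n|)}=o\bigl(\eps_n\sqrt{\log(1/|I_n|)}\bigr)$. With this correction your bound reads
\[
R^A_n\;\ge\;\bigl(1-2\sqrt{|I_n|}-c\bigr)\,\eps_n\sqrt{\log(1/|I_n|)}-o(1)\;\longrightarrow\;\infty,
\]
since $|I_n|\to 0$ and $c<1$, and the rest of your argument goes through unchanged.
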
 
 
Theorem \ref{gauss:opt:unknownbackground} gives sufficient conditions on the
signals $\vt_n$ (through the interval length $\abs{I_n}$ and the jump height
$\delta_n$) as well as on the thresholds $q_n$ such that the multiscale

statistic $T_n$ detects the signals with probability $1$, asymptotically; put
differently, this means $\Prob(\hat K (q_n)>0)\ra 1$. We stress that the
above result is optimal in the following sense: No test can detect signals satisfying $\sqrt{\abs{I_n} n}\delta \geq (\sqrt{2}-\eps_n) \sqrt{\log(1/I_n)}$ with asymptotic power $1$ (see
\citep{DueSpok01,ChaWal11,Jen10}). 

For the special case, when $q_n \equiv
q_\alpha$ is a fixed $\alpha$-quantile of the null distribution $T_n(Y,\vt_n)$ (or of the limiting distribution $M$ in
\eqref{limit:limitstat}), the result boils down to the findings in
\citep{DueSpok01, ChaWal11}. In particular, aside to the optimal asymptotic power
\eqref{gauss:opt:unknownbackgroundeqn}, the error of first kind is bounded by $\alpha$. The result in Theorem \ref{gauss:opt:unknownbackground} goes beyond
that and allows to shrink the error of first kind to zero asymptotically, by
choosing $q_n\ra\infty$. 

We finally generalize the results in Theorem
\ref{gauss:opt:unknownbackground} to the case when $\vt\in
\mathcal{S}$ has more than one change-point. To be more precise, we formulate
conditions on the smallest interval and the smallest jump in $\vt$ such that no
change-point is missed asymptotically. 

\begin{thm}\label{gauss:opt:nojumpmissing}
Let $(\vt_n)_{n\in \N}$ be a sequence in $\mathcal{S}$ with $K_n$
change-points and denote by $\Delta_n$ and $\lambda_n$ the smallest absolute jump size and
smallest interval in $\vt_n$, respectively. Further, assume that $q_n$ is
bounded away from zero and
\begin{enumerate}
 \item for signals on large scales (i.e. $\liminf \lambda _n > 0$), that 
 $\sqrt{\lambda_n n}\Delta_n\slash q_n \ra \infty$.
 \item for signals on small scales (i.e. $\lambda_n\ra 0$) with $K_n$ bounded, that 
 $\sqrt{\lambda_n n}\Delta_n \geq (4 + \eps_n) \sqrt{\log(1\slash
 \lambda_n)}$ with $\eps_n \sqrt{\log(1\slash \lambda_n)} \ra \infty$ and
 $\sup_{n\in\N}q_n\slash (\eps_n \sqrt{\log(1\slash \lambda_n)}) <\ha{1/(2\sqrt{2})}$.
 \item \am{the same as in (2), with $K_n$ unbounded and the constant $12$ instead of $4$.}
\end{enumerate}
Then, 
\begin{equation*}
\Prob_{\vt_n}\left( \hat K(q_n) \geq K_n \right)\ra 1.
\end{equation*}
\end{thm}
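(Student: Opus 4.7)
The plan is to reduce everything to the Gaussian underestimation bound \eqref{gauss:q:under}, applied with $\Delta=\Delta_n$, $\lambda=\lambda_n$, $K=K_n$, and $q=q_n$. Writing $\Prob_{\vt_n}(\hat K(q_n)<K_n)\leq 2K_n(A_n+B_n)$ with
\begin{equation*}
A_n=\exp\!\left(-\tfrac{1}{8}\Bigl(\tfrac{\Delta_n\sqrt{\lambda_n n}}{2\sqrt{2}}-q_n-\sqrt{2\log(2e/\lambda_n)}\Bigr)_{+}^{2}\right),\qquad B_n=\exp\!\left(-\tfrac{\lambda_n n\Delta_n^{2}}{16}\right),
\end{equation*}
the task is to show $K_n(A_n+B_n)\to 0$ in each of the three regimes. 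Throughout I would abbreviate $L_n:=\log(1/\lambda_n)$ and use $\sqrt{2\log(2e/\lambda_n)}\leq\sqrt{2}\sqrt{L_n}+\sqrt{2\log(2e)}$ to separate the dominant $\sqrt{L_n}$-contribution from the bounded remainder.

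\smallskip
\textbf{Cases (1) and (2).} In case (1), $\liminf\lambda_n>0$ forces $K_n\leq 1/\liminf\lambda_n$ to be bounded, $L_n$ is bounded, and the hypothesis $\sqrt{\lambda_n n}\Delta_n/q_n\to\infty$ combined with $q_n$ bounded below makes the argument in $A_n$'s exponent diverge linearly in $\Delta_n\sqrt{\lambda_n n}$, so $A_n\to 0$; meanwhile $\lambda_n n\Delta_n^{2}\to\infty$ gives $B_n\to 0$. In case (2), with $K_n$ bounded, I substitute $\sqrt{\lambda_n n}\Delta_n\geq(4+\eps_n)\sqrt{L_n}$ into both exponents. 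For $A_n$, the difference in the bracket is at least $\bigl(\tfrac{1}{2\sqrt{2}}-c\bigr)\eps_n\sqrt{L_n}-O(1)$ where $c<1/(2\sqrt{2})$ bounds $q_n/(\eps_n\sqrt{L_n})$; squaring and using $\eps_n\sqrt{L_n}\to\infty$ yields $A_n\to 0$. For $B_n$, $\lambda_n n\Delta_n^{2}\geq 16L_n$ gives $B_n\leq \lambda_n\to 0$.

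\smallskip
\textbf{Case (3).} The main obstacle is the $K_n$ prefactor, which can now be as large as $1/\lambda_n$ (since $K_n\lambda_n\leq 1$). This is precisely why the stronger constant $12$ replaces $4$: $\sqrt{\lambda_n n}\Delta_n\geq(12+\eps_n)\sqrt{L_n}$ gives $\tfrac{\Delta_n\sqrt{\lambda_n n}}{2\sqrt{2}}\geq(3\sqrt{2}+\tfrac{\eps_n}{2\sqrt{2}})\sqrt{L_n}$. Subtracting $\sqrt{2}\sqrt{L_n}$ and absorbing the $q_n$ term via $c<1/(2\sqrt{2})$ leaves a lower bound $\bigl(2\sqrt{2}+c'\eps_n\bigr)\sqrt{L_n}-O(1)$ with $c'>0$; squaring and dividing by $8$ produces an exponent $\geq (1+c''\eps_n)L_n-O(\sqrt{L_n})$, so
\begin{equation*}
K_n A_n\leq \lambda_n^{-1}\cdot\lambda_n^{\,1+c''\eps_n}\cdot e^{O(\sqrt{L_n})}=e^{-c''\eps_n L_n+O(\sqrt{L_n})}\longrightarrow 0
\end{equation*}
because $\eps_n\sqrt{L_n}\to\infty$ forces $\eps_n L_n/\sqrt{L_n}\to\infty$. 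For the second term, $\lambda_n n\Delta_n^{2}\geq 144 L_n$ yields $B_n\leq\lambda_n^{9}$, hence $K_n B_n\leq\lambda_n^{8}\to 0$.

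\smallskip
\textbf{Main obstacle.} Cases (1) and (2) are essentially bookkeeping built on the Gaussian underestimation inequality; the delicate point is case (3), where the $K_n$ prefactor competes directly with the exponential decay. Verifying that the constant $12$ in the hypothesis is exactly what is needed for the linear term in the exponent of $A_n$ to beat the $\log(1/\lambda_n)$ arising from $K_n\leq 1/\lambda_n$, while simultaneously ensuring $K_nB_n\to 0$, is the only step that is not formal—everything else reduces to carefully tracking constants through the square in \eqref{gauss:q:under}.
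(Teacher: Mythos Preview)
Your proposal is correct and follows essentially the same route as the paper: both apply the Gaussian underestimation bound (Theorem~\ref{gauss:opt:mainthm}), split into the two exponential terms, and then verify in each regime that the exponents dominate $\log K_n$ (the paper absorbs $\log K_n$ into exponents $\Gamma_{1,n},\Gamma_{2,n}$, you keep it as a $K_n$ prefactor, which is equivalent). The only cosmetic difference is that the paper uses the sharper $\sqrt{x+1}-\sqrt{x}\le 1/(2\sqrt{x})$ to control $\sqrt{2\log(2e/\lambda_n)}-\sqrt{2L_n}$, whereas you use the cruder $\sqrt{a+b}\le\sqrt{a}+\sqrt{b}$; both suffice here.
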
 

Theorem \ref{gauss:opt:nojumpmissing} amounts to say that the
statistic $T_n$ is capable of detecting multiple change-points simultaneously \emph{at the same
optimal rate} (in terms of the smallest interval and jump) as a single change-point. The
only difference being the constants that bound the size of the signals that can
be detected. These increase with the complexity of the problem: $\sqrt{2}$ for a
single change against an unknown background, $4$ for a bounded (but
unknown), and $12$ for an unbounded
number of change-points. In \citep{Jen10} it was shown that for step functions
that exhibit certain sparsity patterns the optimal constant $\sqrt{2}$ can be
achieved. It is important to note that we do not make any sparsity assumption
on the true signal.
Finally we mention an analogy to Theorem 4.1. of \citep{DueWal08} in the context of detecting local increases and decreases
of a density. As in Theorem \ref{gauss:opt:nojumpmissing} only the constants and not the detection rates changes with the complexity
of the alternatives.
\subsection{Estimation of change-point locations and simultaneous confidence sets}\label{sec:convset}
In this section we will provide several results on confidence sets associated with \text{SMUCE}.
We will see that these are linked in a natural way to estimation of change-point locations.
We generalize the set $\mathcal{C}(q)$ in \eqref{intro:confset} by replacing
$T_n(Y,\vt)$ in \eqref{intro:mrstat} with $T_n(Y,\vt,c_n)$ as in \eqref{smre:mrstat} and consider the set of solutions of the optimization problem 
\begin{equation}\label{convset:optprob}
\inf_{\vt \in \S} \# J(\vt)\quad\text{ s.t. }\quad T_n(Y,\vt;c_n)\leq q.
\end{equation}
Any candidate in $\mathcal{C}(q)$ recovers the change-point locations of the true regression function $\vt$
with the same convergence rate. It is determined by the smallest scale $(c_n)_{n\in\N}$
for the considered
interval lengths in the multiscale statistic $T_n$ in \eqref{smre:mrstat} and hence equals the sampling rate up to a log factor.

\begin{thm}\label{convset:mainthm}
Let $q\in\R$ and $\mathcal{C}(q)$ be the set of solutions of
\eqref{convset:optprob} and $(c_n)_{n\in\N}$ a sequence in
$(0,1]$. Further let $C = C(\mathcal{F},\underline\theta,
\overline\theta) > 0$ as in \eqref{conscp:constant}. Then, for all $n\in\N$
\begin{equation*}
\Prob\left(\sup_{\hat\vt \in \mathcal{C}(q)}\kl{\max_{\tau \in J(\vt)}
\min_{\hat \tau \in J(\hat \vt)} } \abs{\hat \tau - \tau} > c_n \right) \leq 2 K
e^{- 2C nc_n \Delta^2}\left[e^{\frac{1}{2}\left(q+\sqrt{2\log(e/c_n)}\right)^2} +
e^{- \ha{6} C n c_n \Delta^2}\right].
\end{equation*}
\end{thm}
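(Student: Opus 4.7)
The plan is to mimic the proof of Theorem \ref{conscp:corunderest} \emph{locally} around each true change-point. The bound to be established coincides exactly with the underestimation bound of Theorem \ref{conscp:corunderest} under the substitution $\lambda \mapsto 2c_n$; this strongly suggests that missing a true change-point $\tau_k$ by more than $c_n$ should play, locally, the same role in the argument as underestimating $K$ globally played in Theorem \ref{conscp:corunderest}, since in both cases the estimator fits a single constant on an interval of length at least $2c_n$ that straddles $\tau_k$.

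First I would fix $k\in\{1,\ldots,K\}$ and a candidate $\hat\vt\in\mathcal{C}(q)$, and work on the event
\begin{equation*}
E_k = \Bigl\{\min_{\hat\tau\in J(\hat\vt)} |\hat\tau - \tau_k| > c_n \Bigr\}.
\end{equation*}
On $E_k$ there exist consecutive elements $\hat\tau_\ell < \hat\tau_{\ell+1}$ of $J(\hat\vt)\cup\{0,1\}$ with $\hat\tau_\ell \le \tau_k - c_n$ and $\hat\tau_{\ell+1} \ge \tau_k + c_n$, and $\hat\vt$ is constant, equal to some $\theta$, on the grid points of $[\hat\tau_\ell,\hat\tau_{\ell+1})$. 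Because $\vt$ has a jump of absolute size at least $\Delta$ at $\tau_k$ and $m$ in \eqref{model:not_varmean} is strictly monotone, at least one of the two grid sub-intervals $[i/n,\tau_k)$ or $[\tau_k,j/n]$ inside $[\hat\tau_\ell - 1/n,\hat\tau_{\ell+1})$ of length $\ge c_n$ carries a true mean differing from $m(\theta)$ by at least $\Delta/2$; fix such an interval $[i/n,j/n]$.

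Next I would apply the exponential concentration of the local likelihood ratio statistic $T_i^j(Y,\theta) = (j-i+1)\, J(\overline Y_i^j,\theta)$ used in the proof of Theorem \ref{conscp:thmunderest}. Combining Lemma \ref{conscp:kappalemma} (which lower-bounds the expected Kullback--Leibler discrepancy by $C\Delta^2$ with $C$ as in \eqref{conscp:constant}) with a standard Cramér--Chernoff tail bound for the exponential family in \eqref{model:expfam}, one obtains that with probability at least
\begin{equation*}
1 - 2e^{-2Cnc_n\Delta^2}\Bigl[e^{\frac{1}{2}(q+\sqrt{2\log(e/c_n)})^2} + e^{-6Cnc_n\Delta^2}\Bigr],
\end{equation*}
the quantity $\sqrt{2T_i^j(Y,\theta)} - \sqrt{2\log(ne/(j-i+1))}$ exceeds $q$. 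Since $(j-i+1)/n \ge c_n$ and $[i/n,j/n]$ lies in a constant region of $\hat\vt$, this term is admissible in the multiscale maximum \eqref{smre:mrstat} for $\hat\vt$, and hence contradicts the constraint $T_n(Y,\hat\vt;c_n)\le q$ definining $\mathcal{C}(q)$. Consequently $\Prob(E_k)$ is bounded by the complementary tail, and a union bound over the $K$ true change-points yields the theorem.

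The main obstacle is the bookkeeping in the second step. When $\lambda<2c_n$, the window $[\tau_k-c_n,\tau_k+c_n]$ on which $\hat\vt$ is constant may contain additional true change-points, so one cannot directly identify the mean profile of $\vt$ on the two halves as a pure two-level function. One must argue instead that at least one of the two halves still contains a sub-interval of length $\ge c_n$ where the mean deviates from $m(\theta)$ by $\ge \Delta/2$, which is all the local Chernoff bound needs; the doubling $\lambda\mapsto 2c_n$ in the exponent is precisely the gain from the fact that \emph{both} half-windows to the left and right of $\tau_k$ have length $\ge c_n$, rather than only a single window of length $\lambda$.
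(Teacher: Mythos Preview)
Your approach is essentially the same as the paper's: define windows $I_k = (\tau_k - c_n, \tau_k + c_n)$ around each true change-point, split them into left and right halves $I_k^\pm$ on which $\vt$ equals $\theta_k^\pm$, observe that any $\hat\vt\in\mathcal{C}(q)$ missing $\tau_k$ by more than $c_n$ must fit a single constant $\hat\theta$ over all of $I_k$, and since necessarily $\hat\theta\le\theta_k^+-\Delta/2$ or $\hat\theta\ge\theta_k^-+\Delta/2$, the power estimate of Lemma~\ref{proofs:power} together with Lemma~\ref{conscp:kappalemma} gives the exponential tail on each half; a union bound over $k$ finishes. The paper phrases this as a one-line substitution of $c_n$ for $\lambda/2$ in the proof of Theorem~\ref{conscp:thmunderest}.

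Two minor points. First, your ``obstacle'' paragraph overcomplicates things: what is needed for the split $I_k = I_k^+\cup I_k^-$ with $\vt$ constant on each half is $c_n\le\lambda$, not $2c_n\le\lambda$; the paper assumes this implicitly and does not actually use disjointness of the $I_k$ (a union bound is valid regardless). When $c_n>\lambda$ neither argument works as stated, but since the result is only applied with $c_n\to 0$ this is immaterial. Second, the comparison must be in the \emph{natural} parameter, not the mean: it is $|\hat\theta-\theta_k^\pm|\ge\Delta/2$ that feeds into the Kullback--Leibler lower bound via Lemma~\ref{conscp:kappalemma}, not a gap in $m(\theta)$ as you wrote.
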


For a fixed signal $\vt\in\S$, a sufficient condition for the r.h.s. in Theorem 
\ref{convset:mainthm} to vanish as $n\ra \infty$ is 
\begin{equation*}
 c_n \geq \frac{1}{\Delta^2 C} \frac{\log{n}}{n}.
\end{equation*}
Here the constant $C$ matters, e.g.
in the Gaussian case $C=1/8$ (cf. Section \ref{sec:conscp}). 
This improves several results obtained for other methods, e.g. in \citep{HarLev10}
 for a total variation penalized estimator a $\log^2{n}/n$ rate has been shown.

 In the following we will apply Theorem \ref{convset:mainthm} to determine
 subclasses of $\S$ in which the change-point locations are reconstructed \emph{uniformly} with rate
$c_n$. These subclasses are delimited by conditions on the smallest absolute
jump height $\Delta_n$ and on the number of change-points $K_n$ (or the
smallest interval lengths $\lambda_n$ by using the relation $K_n\leq 1\slash
\lambda_n$) of its members. For instance, the rate function $c_n = n^{-\beta}$
with some $\beta\in [0,1)$ implies the condition
\begin{equation*}
\frac{n^{\beta}\exp(-n^{1-\beta} \Delta_n)}{\lambda_n}\ra 0.
\end{equation*}
The choice $\beta=0$ gives the largest subclass but no convergence rate is
guaranteed since $c_n = 1$ for all $n$. A value of $\beta$ close
to $1$ implies a much smaller subclass of functions which then can be reconstructed uniformly with convergence rate arbitrarily close to
the sampling rate $1\slash n$. 
We finally point out that the result in Theorem
\ref{convset:mainthm} does not presume the number of change-points to
be estimated correctly. If $c_n$ additionally satisfies \eqref{def:lowerbound}
and if in Theorem \eqref{convset:mainthm} $q=q_n\ra\infty$ slower than $-\log
c_n$, we find from Theorem \ref{conscp:mainthm} that $\Prob(\hat K(q)= K) \ra 1$
and it follows from Theorem \eqref{convset:mainthm} that for $n$ large enough
\begin{equation*}
\Prob\left(\sup_{\hat\vt \in \mathcal{C}(q_n)}c_n^{-1}\abs{\tau_k-\hat\tau_k}>
1\right)\ra 0, \quad \text{for } k=1,\dots,K.
\end{equation*}
The solution set of the optimization problem \eqref{convset:optprob} 
constitutes a confidence set for the true regression function $\vartheta$.
Indeed, we find that
\begin{align}\label{convset:confbandestim}
\Prob\left(\vartheta \in \mathcal{C}(q) \right) & =
\Prob\left(T_n(Y,\vartheta;c_n)\leq q,\;K \leq \hat
K(q)\right) \\ 
& \nonumber \geq \Prob\left(T_n(Y,\vartheta;c_n)\leq q \right)-\Prob\left(\hat K(q) < K \right).
\end{align}
In particular, it follows from Theorem \ref{conscp:mainthm} that if $q_{1-\alpha}$
is the $1-\alpha$-quantile of $M$, the set $\mathcal{C}(q_{1-\alpha})$ is an
\emph{asymptotic confidence set} at level $1-\alpha$.

\begin{cor}\label{estimator:confbandcor}
Let $\alpha\in(0,1)$ and $q_{1-\alpha}$ to be the $1-\alpha$-quantile of the
statistic $M$ in \eqref{limit:limitstat}. Then,
\begin{equation}\label{convset:genbound}
\Prob\left( \vartheta \in \mathcal{C}(q_{1-\alpha}) \right) \geq
1- \alpha - 2 K e^{- C n\lambda
\Delta^2}\left[e^{\frac{1}{2}\left(\ha{q_{1-\alpha}}+\sqrt{2\log(2e/\lambda)}\right)^2} + e^{- \ha{3} C
n\lambda \Delta^2}\right] + \smallo(1)
\end{equation}
with $C=C(\mathcal{F}, \underline{\theta}, \overline{\theta})>0$ as in Theorem
\ref{conscp:mainthm}. Consequently one finds
\begin{equation*} 
\lim_{n\ra \infty} \Prob \left( \vt \in \mathcal{C}(q_{1-\alpha}) \right) \geq 1- \alpha.
\end{equation*}
for any $\vt\in \S$.
\end{cor}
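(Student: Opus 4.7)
The plan is to derive the statement directly from the two decomposition ingredients already stated in \eqref{convset:confbandestim}, namely
\begin{equation*}
\Prob\left(\vt \in \mathcal{C}(q)\right) \geq \Prob\left(T_n(Y,\vt;c_n)\leq q\right) - \Prob\left(\hat K(q) < K\right),
\end{equation*}
and then bound each of the two terms on the right-hand side using, respectively, the asymptotic null distribution (Theorem \ref{limit:mainthm}) and the exponential underestimation bound (Theorem \ref{conscp:corunderest}). The value $q = q_{1-\alpha}$ is a continuity point of the cdf of $M$, so the two bounds fit together cleanly.

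First I would handle the overestimation term. By Theorem \ref{limit:mainthm}, $T_n(Y,\vt;c_n)$ converges in distribution to a random variable stochastically dominated by $M$. Since the cdf of $M$ is continuous (cf.\ \citep{DueSpok01}), this yields
\begin{equation*}
\liminf_{n\to\infty}\Prob\left(T_n(Y,\vt;c_n)\leq q_{1-\alpha}\right) \geq \Prob(M\leq q_{1-\alpha}) = 1-\alpha,
\end{equation*}
which is exactly \eqref{limit:confineq}. Writing this as $\Prob(T_n(Y,\vt;c_n)\leq q_{1-\alpha}) \geq 1-\alpha + \smallo(1)$ provides the first piece on the right-hand side of \eqref{convset:genbound}.

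Next I would plug in Theorem \ref{conscp:corunderest} at $q=q_{1-\alpha}$. This gives the explicit non-asymptotic bound
\begin{equation*}
\Prob\left(\hat K(q_{1-\alpha}) < K\right) \leq 2K e^{-Cn\lambda\Delta^2}\left[e^{\frac{1}{2}\left(q_{1-\alpha}+\sqrt{2\log(2e/\lambda)}\right)^2} + e^{-3Cn\lambda\Delta^2}\right],
\end{equation*}
with the constant $C=C(\mathcal{F},\underline\theta,\overline\theta)$ from \eqref{conscp:constant}. Substituting both bounds into \eqref{convset:confbandestim} yields the first displayed inequality in the corollary. The second, asymptotic, conclusion then follows because $\lambda > 0$ and $\Delta > 0$ are fixed for the given $\vt \in \S$, so the bracketed exponential term is bounded as $n \to \infty$ and the entire correction term $2K e^{-Cn\lambda\Delta^2}[\cdots]$ decays exponentially to zero; the $\smallo(1)$ term from the distributional convergence also vanishes, leaving $\liminf_{n\to\infty}\Prob(\vt\in\mathcal{C}(q_{1-\alpha})) \geq 1-\alpha$.

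There is essentially no hard step here: the corollary is a straightforward aggregation of two results already established. The only minor point to be careful about is that the quantile $q_{1-\alpha}$ is determined from the limiting distribution $M$ (not from the finite-sample distribution of $T_n$), which is why the non-asymptotic bound in \eqref{convset:genbound} must include the $\smallo(1)$ remainder; in contrast, the underestimation contribution is already fully non-asymptotic and no further work is required for it.
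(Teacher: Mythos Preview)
Your proposal is correct and follows exactly the approach indicated in the paper: the corollary is simply the combination of the decomposition \eqref{convset:confbandestim} with the asymptotic null-distribution bound \eqref{limit:confineq} from Theorem \ref{limit:mainthm} and the non-asymptotic underestimation bound from Theorem \ref{conscp:corunderest}. Your attention to the continuity of the distribution of $M$ (needed so that $q_{1-\alpha}$ is a continuity point and $\Prob(M\leq q_{1-\alpha})=1-\alpha$) is the only subtlety, and you handle it correctly.
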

We mention that for the Gaussian family (see Section \ref{subsec:gauss:opt}) the inequality \eqref{convset:genbound}
even holds for any $n$, i.e. the $\smallo(1)$ term on the r.h.s. can be omitted. Thus the r.h.s. of 
\eqref{convset:genbound} gives an explicit and nonasymptotic lower bound for the true confidence level of $C(q_\alpha)$.

In the following we use this result to determine classes of step functions on which confidence statements hold
uniformly. Being a subset of $\S$, the confidence set $\mathcal C(q)$ is hard to
visualize in practice. Therefore, in Section \ref{sec:alg:cb} we compute a confidence band $B(q)\subset [0,1]\times \Theta$ 
that contains the graphs of all functions in $\mathcal C(q)$ as well as disjoint
confidence intervals for the change-point locations denoted by $[\tau_k^l(q), \tau_k^r(q)]\subset [0,1]$ for $k=1,\ldots,\hat K(q)$.
For the sake of simplicity, we denote the collection $\{\hat K(q), B(q), \set{[\tau_k^l(q),\tau_k^r(q)]}_{k=1,\ldots,\hat K(q)}\}$ 
by $I(q)$ and agree upon the notation
\begin{align}\label{convset:maindef}
 \vt \prec I(q) \quad & \text{if } \hat K(q) = K, (t,\vt(t))\in B(q)\text{ and } \tau_k\in[\tau_k^l(q),\tau_k^r(q)] \text{ for }
 k=1,\ldots,K,\\ \nonumber \vt \nprec I(q) \quad	  & \text{otherwise}.
\end{align}
Put differently, $\vt \prec I(q)$ implies that \emph{simultaneously} the number
of change-points is estimated correctly, the change-points lie within the
confidence intervals and the graph is contained in the confidence band. As we
will show in Section \ref{sec:alg:cb}, the confidence set $\mathcal{C}(q)$ and $I(q)$ are linked by the following relation:
\begin{equation} \label{confreg:rel1}
 \vt \in \mathcal C(q) \Rightarrow \vt \prec I(q).
\end{equation}
Following the terminology in \citep{Li89}, $I(q)$ is called \emph{asymptotically
honest} for the class $\S$ at level $1-\alpha$ if 
\begin{equation*}
 \liminf_{n\ra  \infty} \inf_{\vt \in \S}\Prob \left( \vt \prec I(q) \right)\geq 1- \alpha. 
\end{equation*}
Such a condition obviously cannot be fulfilled over the entire class $\S$,
since signals cannot be detected if they vanish too fast as $n\ra\infty$.
For Gaussian observations this was made precise in Section
\ref{subsec:gauss:opt}.

To overcome this difficulty, we will relax the notion of asymptotic honesty.
 Let $\S^{(n)}\subset \S$, $n\in\N$ be a sequence of subclasses of $\S$.
Then $I(q)$ is \emph{sequentially honest w.r.t. $S^{(n)}$} at level $1-\alpha$
if 
\begin{equation*}
 \liminf_{n\ra  \infty} \inf_{\vt \in \S^{(n)}}\Prob \left( \vt \prec I(q)
 \right)\geq 1-\alpha.
\end{equation*}

By combining \eqref{convset:confbandestim}, \eqref{confreg:rel1} and Corollary
\ref{conscp:corunderest}
 we obtain the following result about the asymptotic honesty of $I(q_{1-\alpha})$.
\begin{cor}\label{confband:corhonest}
Let $\alpha\in(0,1)$ and $q_{1-\alpha}$ be the $1-\alpha$-quantile of the
statistic $M$ in \eqref{limit:limitstat} and  assume
that $(b_n)_{n\in\N}\ra\infty$ is a sequence of positive numbers. Define
$\mathcal{S}^{(n)} = \left\{ \vt \in \S: n \lambda \Delta^2 / \log(1/\lambda)
\geq b_n,\; \underline\theta\leq
\vt\leq\overline\theta \right\}$. Then $I(q_{1-\alpha})$ is \emph{sequentially honest
w.r.t.} $\S^{(n)}$ at level $1-\alpha$, i.e.
\begin{equation*}
 \lim_{n\ra \infty} \inf_{\vt \in \S^{(n)}}\Prob \left( \vt\prec I(q_\ha{{1-\alpha}}) \right)
 \geq 1-  \alpha.
\end{equation*}
\end{cor}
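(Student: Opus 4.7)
Chain together the containment \eqref{confreg:rel1} and the lower bound \eqref{convset:confbandestim} to obtain, for every $\vt\in\S$,
\begin{equation*}
\Prob\bigl(\vt\prec I(q_{1-\alpha})\bigr)\;\geq\;\Prob\bigl(T_n(Y,\vt;c_n)\leq q_{1-\alpha}\bigr)\;-\;\Prob\bigl(\hat K(q_{1-\alpha})<K\bigr).
\end{equation*}
It then suffices to bound each of these two probabilities uniformly over $\vt\in\S^{(n)}$: the first must be at least $1-\alpha-\smallo(1)$ and the second must be $\smallo(1)$.

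For the null probability I would invoke the upper bound in Theorem \ref{limit:mainthm}: the limit law of $T_n(Y,\vt;c_n)$ is stochastically dominated by $M$ independently of $\vt$, so $\limsup_n\Prob(T_n(Y,\vt;c_n)>q_{1-\alpha})\leq\Prob(M>q_{1-\alpha})=\alpha$ pointwise. The uniformity in $\vt$ over $\S^{(n)}$ is the only subtle point and it rests on the observation that the KMT-type strong approximation used in the proof of Theorem \ref{limit:mainthm} is applied piecewise on each constant section of $\vt$ with error constants controlled only by the uniform bounds on $\psi,\dot\psi,\ddot\psi$ over $[\underline\theta,\overline\theta]$. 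In the Gaussian case this step is immediate and non-asymptotic thanks to Theorem \ref{gauss:limdis}.

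For the underestimation probability apply Theorem \ref{conscp:corunderest}, use $K\leq 1/\lambda$ and expand the square in the exponent to obtain
\begin{equation*}
\Prob\bigl(\hat K(q_{1-\alpha})<K\bigr)\;\leq\;\tfrac{4e\cdot e^{q_{1-\alpha}^2/2}}{\lambda^{2}}\,e^{-Cn\lambda\Delta^{2}+q_{1-\alpha}\sqrt{2\log(2e/\lambda)}}\;+\;\tfrac{2}{\lambda}\,e^{-4Cn\lambda\Delta^{2}}.
\end{equation*}
Substituting the defining inequality $n\lambda\Delta^{2}\geq b_n\log(1/\lambda)$ of $\S^{(n)}$, the logarithm of the first summand is at most $(2-Cb_n)\log(1/\lambda)+q_{1-\alpha}\sqrt{2\log(2e/\lambda)}+O(1)$. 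Since $K\geq 1$ forces $\lambda\leq 1/2$, setting $x:=\log(1/\lambda)\geq\log 2$ this expression equals $-(Cb_n-2)x+q_{1-\alpha}\sqrt{2x+O(1)}+O(1)$; for $b_n$ large enough the linear term dominates and the supremum over $x\geq\log 2$ is attained at $x=\log 2$ with value $-(Cb_n-2)\log 2+O(1)\to-\infty$. The second summand is bounded by $2\lambda^{4Cb_n-1}\leq 2\,(1/2)^{4Cb_n-1}\to 0$ in the same regime. Both estimates are uniform in $\vt\in\S^{(n)}$, so $\sup_{\vt\in\S^{(n)}}\Prob(\hat K(q_{1-\alpha})<K)\to 0$, and combining the two parts with the display above finishes the proof.

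The main obstacle is the uniformity in $\vt$ in the null step for general exponential families; once the dependence of the strong-approximation error on $\vt$ is traced through the proof of Theorem \ref{limit:mainthm} and shown to be controlled by the uniform bounds on the cumulants over $[\underline\theta,\overline\theta]$, the remainder of the argument is a routine exercise in exponential-bound book-keeping.
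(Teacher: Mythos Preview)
Your approach is exactly the one the paper indicates: combine \eqref{confreg:rel1}, \eqref{convset:confbandestim}, and Theorem~\ref{conscp:corunderest}, then take the infimum over $\S^{(n)}$. Your explicit book-keeping on the underestimation term (substituting $K\leq 1/\lambda$ and $n\lambda\Delta^2\geq b_n\log(1/\lambda)$, then optimizing in $x=\log(1/\lambda)$) is correct and more detailed than what the paper spells out; you also rightly flag the uniformity of the null-approximation over $[\underline\theta,\overline\theta]$ as the one point the paper leaves implicit.
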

By estimating $1\slash \lambda \leq n$ we find that the confidence level
$\alpha$ is kept uniformly over nested models $\S^{(n)}\subset \S$, as long as
$\frac{n}{\log n}\Delta_n^2 \lambda_n\ra\infty$. Here $\lambda_n$ and $\Delta_n$ is the smallest interval length and smallest absolute
jump size in $\S^{(n)}$, respectively. 

\section{Implementation}\label{sec:alg}

We now explain how the SMUCE, i.e. the estimator $\hat\vt(q)$ with maximal
likelihood in the confidence set $\mathcal{C}(q)$, can be computed efficiently
within the dynamic programming framework. In general the proposed algorithm is
of complexity $\bigo(n^2)$. We will show, however, that in many situations the
computation can be performed much faster.

Our algorithm uses dynamic programming ideas from  \citep{FriKemLieWin08} in the context
of complexity penalized M-estimation.  See also \citep{Hoe08, DavHoeKra12} for a special case in our context. 
Moreover, we include  pruning steps as \citep{KillFeaEck12}, who also provide a survey on dynamic programming in 
change-point regression from a general point of view.
We will show that it is always possible to rewrite $\hat\vt(q)$ as a solution of
a minimization of a complexity penalized cost function with data dependent penalty.
To this end, we will denote the log-likelihood of $\hat\vt$ as
\begin{equation*}
l(Y,\hat \vt) = \sum_{i=1}^n \psi(\hat\vt(i\slash n)) - \hat\vt(i\slash n) Y_i.
\end{equation*}
Without restriction, we will assume that $l(Y,\hat\vt)\geq 0$ for all
$\hat\vt\in \S$. 

Following \citep{FriKemLieWin08}, we call a collection $\Pcal$ of discrete
intervals  a partition if its union equals the set $\set{1,\ldots,n}$. We
denote by $\Pfrak(n)$ the collection of all partitions of $\set{1,\ldots,n}$.
For $\Pcal\in\Pfrak(n)$ let $\#{\Pcal}$ the number of discrete
intervals in $\Pcal$. Hence, any discrete step function
$\vt\in\S_n[K]$  can be identified with a pair $(\Pcal,
\theta)$, where
\begin{equation*}
\Pcal\in \Pfrak(n),\quad{\# \Pcal} = K\quad\text{ and }\quad \theta =
(\theta_I)_{I\in \Pcal}\in \Theta^{{\# \Pcal}}, 
\end{equation*}
and \ha{$\vartheta(t) = \theta_I \Leftrightarrow
\lceil nt \rceil \in I$}. Next, we note that for a given
$\theta_I\in\Theta$ the negative log-likelihood on a discrete interval $I$ is 
given by $|I|(\psi(\theta_I)-\theta_I \overline Y_I)$. With this  we define
the \emph{costs} of $\theta_I$ on $I$ as
\begin{equation} \label{impl:costs}
d_{I}(Y,\theta_I) = \begin{cases}  
|I|(\psi(\theta_I)-\theta \overline Y_I) & \text{ if } \max_{[j,k]\subset I}
 \sqrt{2T_j^k (Y,\theta_I)} - \sqrt{2\log\frac{en}{k-j+1}} \leq q \\
\infty & \text{ else}.
\end{cases}
\end{equation}
The \emph{minimal costs} on the interval $I$ are then defined by
$d^*_{I} = \min_{\theta_I\in\Theta} d_{I}(Y,\theta_I)$ where we agree upon
$\theta_I^*\in\Theta$ being such that $d_I(Y,\theta_I^*)=d^*_I$. We stress that
$d^*_{I} = \infty$ if and only if no $\theta_I\in \Theta$ exists such
that the multiscale constraint is satisfied on $I$. Finally, for  an estimator
$(\Pcal, \theta)$ the overall costs are given by
\begin{equation*}
 D(\mathcal P, \theta)= \sum_{I \in \mathcal P}d_I(Y,\theta_I).
\end{equation*}

In \citep{FriKemLieWin08} a dynamic program is designed for computing
minimizers of
\begin{equation}\label{impl:penprob}
(\Pcal,\theta)\mapsto D(\Pcal,\theta) + \gamma({\# \Pcal} - 1),\quad \gamma>0.
\end{equation}

It is shown that the computation time amounts to $\bigo(n^2)$ given that the
minimal costs $d_I^*$ can be computed in $\bigo(1)$. We now show that each
minimizer of \eqref{impl:penprob} maximizes the  likelihood over the set
$\mathcal{C}(q)$,  if $\gamma > 0$ is chosen large enough. Note that this 
$\gamma$ can be computed explicitly for any given data $(Y_1,\ldots,Y_n)$
according to the next result.

\begin{lem}\label{impl:equivprob} 
Let $\gamma > 1/2 \left( n q + n \sqrt{2 \log(en)}\right)^2 + l(Y,m^{-1}(\bar Y)) $. Then, any solution of 
\eqref{impl:penprob} is also a solution of \eqref{intro:smre}.
\end{lem}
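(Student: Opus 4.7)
The plan is to identify each discrete step function $\hat\vt\in \S_n[\hat K]$ with a pair $(\Pcal,\theta)$ as done before \eqref{impl:costs}, so that $\mathcal{C}(q)$ corresponds exactly to those $(\Pcal,\theta)$ with $\#\Pcal=\hat K(q)+1$ and $D(\Pcal,\theta)<\infty$, and \eqref{intro:smre} amounts to minimizing $D(\Pcal,\theta)$ subject to these two constraints. I will argue that if $\gamma$ exceeds the stated bound, then the penalty in \eqref{impl:penprob} is prohibitive enough to force every minimizer to use exactly the smallest feasible number of pieces $\hat K(q)+1$; once this is pinned down the penalty becomes constant in $\#\Pcal$ and \eqref{impl:penprob} collapses to \eqref{intro:smre}.

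First, since \eqref{intro:smre} is assumed to be well-defined, $\mathcal{C}(q)$ is non-empty and any SMUCE furnishes a feasible competitor $(\Pcal_0,\theta_0)$ with $\#\Pcal_0=\hat K(q)+1$ and $D(\Pcal_0,\theta_0)=l(Y,\hat\vt(q))<\infty$. Hence the infimum in \eqref{impl:penprob} is finite and any minimizer $(\Pcal^*,\theta^*)$ automatically satisfies $D(\Pcal^*,\theta^*)<\infty$, which by the definition of $d_I$ in \eqref{impl:costs} is exactly the multiscale constraint on every piece of $\Pcal^*$. By the definition of $\hat K(q)$ this also forces $\#\Pcal^*\geq \hat K(q)+1$.

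The heart of the argument is the uniform bound $D(\Pcal_0,\theta_0)<\gamma$. Write $\check\theta_I:=m^{-1}(\bar Y_I)$ for the unconstrained MLE on $I$ and $l_I(Y,\theta)=|I|(\psi(\theta)-\theta \bar Y_I)$. A direct computation from the definitions of $\phi$ and $J$ in Section \ref{sec:model} yields the identity
\begin{equation*}
T_I(Y,\theta_{0,I})= l_I(Y,\theta_{0,I})-l_I(Y,\check\theta_I),
\end{equation*}
i.e.\ $T_I$ is the excess negative log-likelihood of $\theta_{0,I}$ over the local MLE. Feasibility of $(\Pcal_0,\theta_0)$ applied to the sub-interval $[j,k]=I$ itself gives $\sqrt{2T_I(Y,\theta_{0,I})}\leq q+\sqrt{2\log(en/|I|)}$, whence $T_I(Y,\theta_{0,I})\leq \tfrac12(q+\sqrt{2\log(en)})^2$. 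Summing the identity over $I\in \Pcal_0$ and using $\sum_I l_I(Y,\check\theta_I)\leq l(Y,m^{-1}(\bar Y))$ (because the piecewise MLE has no larger negative log-likelihood than the global constant MLE), together with $\#\Pcal_0\leq n$ and $n\alpha^2\leq (n\alpha)^2$ for $\alpha\geq 0$, I obtain
\begin{equation*}
D(\Pcal_0,\theta_0)\leq l(Y,m^{-1}(\bar Y))+\tfrac12\bigl(nq+n\sqrt{2\log(en)}\bigr)^2<\gamma.
\end{equation*}

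To close the proof, compare $(\Pcal^*,\theta^*)$ with $(\Pcal_0,\theta_0)$ in \eqref{impl:penprob}:
\begin{equation*}
\gamma(\#\Pcal^*-\#\Pcal_0)\leq D(\Pcal_0,\theta_0)-D(\Pcal^*,\theta^*)\leq D(\Pcal_0,\theta_0)<\gamma,
\end{equation*}
so $\#\Pcal^*<\#\Pcal_0+1$, i.e.\ $\#\Pcal^*\leq \hat K(q)+1$. Combined with the lower bound this gives $\#\Pcal^*=\hat K(q)+1$. With this number of pieces fixed the penalty $\gamma(\#\Pcal^*-1)$ is constant, so $(\Pcal^*,\theta^*)$ must minimize $D$ over all feasible pairs with exactly $\hat K(q)+1$ pieces; equivalently, the corresponding step function maximizes the log-likelihood over $\mathcal{C}(q)$, which is the defining property of a SMUCE. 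The main obstacle is the bound on $D(\Pcal_0,\theta_0)$: identifying $T_I$ with the log-likelihood ratio against the local MLE, converting the piecewise multiscale constraint into an $L^\infty$-type bound on the local excess log-likelihood, and aggregating uniformly so that the resulting majorant depends only on $n$, $q$ and on the data through $l(Y,m^{-1}(\bar Y))$.
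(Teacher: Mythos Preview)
Your proof is correct and follows the same overall architecture as the paper's: establish $D(\Pcal_0,\theta_0)=l(Y,\hat\vt(q))<\gamma$ for the SMUCE partition, then combine optimality of $(\Pcal^*,\theta^*)$ with $D\geq 0$ (using the paper's standing assumption $l\geq 0$) to force $\#\Pcal^*=\hat K(q)+1$. The only real difference lies in how you obtain the key bound on $l(Y,\hat\vt(q))$: you bound each $T_I$ separately via $T_I\leq \tfrac12(q+\sqrt{2\log(en)})^2$ and then sum, whereas the paper first sums the inequalities $q\geq \sqrt{2T_I}-\sqrt{2\log(en/|I|)}$ over $I\in\Pcal_0$ and applies the sub-additivity $\sum_I\sqrt{2T_I}\geq \sqrt{2\sum_I T_I}$ before squaring. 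Both routes rely on the same identity $\sum_I T_I=l(Y,\hat\vt(q))-\sum_I l_I(Y,\check\theta_I)$ together with $\sum_I l_I(Y,\check\theta_I)\leq l(Y,m^{-1}(\bar Y))$, and both arrive at the same (rather loose) threshold for $\gamma$. Your intermediate estimate $\tfrac{n}{2}(q+\sqrt{2\log(en)})^2$ is actually tighter by a factor $n$ before you relax it to match the lemma's statement, so your derivation is arguably the more transparent of the two.
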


For completeness, we briefly outline the dynamic programming approach for the
minimization of \eqref{impl:penprob} as established in \citep{FriKemLieWin08}:  Define for
$r\leq n$ the Bellman function by $B(0) = -\gamma$ and
\begin{equation*}
 B(r)= \inf_{\mathcal P \in \mathfrak P(r), \theta \in \Theta^{|\mathcal P|}}
 D(\mathcal P, \theta)+ \gamma (|\mathcal P|-1)
\end{equation*}
and let $\Pcal_r\in \Pfrak(r)$ and $\theta_r \in \Theta^{|\Pcal_r|}$ be such
that $D(\Pcal_r, \theta_r) + \gamma(\bigl|\Pcal_r\bigr|
- 1) = B(r)$. Clearly, $B(n)$ is the minimal value of \eqref{impl:penprob} and
$(\Pcal_n, \theta_n)$ is a minimizer of \eqref{impl:penprob}. A key ingredient
 is the following recursion formula (cf. \citep[Lem. 1]{FriKemLieWin08})
\begin{equation*}
 B(p) = \inf_{1\leq r \leq p} B(r-1) + \gamma + d^*_{[r,p]}.
\end{equation*}
Let $p\leq n$ and assume that $(\Pcal_r, \theta_r)$ are given for all $r<p\leq
n$. Then, compute the best previous change-point position, i.e.
\begin{equation}\label{impl:bestprevpoint}
r_p = \argmin_{1\leq r\leq p} B(r-1) + d^*_{[r,p]}
\end{equation}
and set $\Pcal_p = \Pcal_{r_p-1} \cup \set{[r_p,p]}$ and $\theta_p =
(\theta_{r_p-1}, \theta_{[r_p,p]})$.  With this we can iteratively compute the Bellman function
$B(p)$ and the corresponding minimizers $(\mathcal P_p, \theta_p)$ for
$p=1,\ldots,n$ and eventually obtain $(\mathcal P_n,\theta_n)$, i.e. a
minimizer of \eqref{impl:penprob}. According to Lemma \ref{impl:equivprob}, this
$(\mathcal P_n,\theta_n)$ solves \eqref{intro:smre} if $\gamma$ is chosen large
enough. 

\ha{
We note that for a practical implementation  of the proposed dynamic
program, the efficient computation of the values $d^*_{[r,p]}$ is essential. We postpone this to the upcoming subsection and will discuss the complexity of the algorithm first. Following \citep{FriKemLieWin08} the dynamic programming algorithm is of order $O(n^2)$, given that the minimal costs $d^*_{[i,j]}$ are computed in $O(1)$ steps. Note, that this does not hold true for the costs in \eqref{impl:costs}. However, as we will show in the next subsection, the set of all optimal costs $(d_{[i,j]}^*)_{1 \leq i \leq j \leq n}$ can be computed in $O(n^2)$ steps and hence the complete algorithm is of order $O(n^2)$ again.}

\ha{
In our implementation the specific structure of the costs (see \eqref{impl:costs}) has been employed by including several pruning steps into the dynamic program, similar to \citep{KillFeaEck12}. Since the details are rather technical, we only give a brief explanation why the computation time of the algorithm as described below can be reduced: the speed ups are based on the idea to consider only such $r$ in \eqref{impl:bestprevpoint} that may lead to a minimal value, i.e. those $r$ that are strictly larger than $\max\set{r~:~d^*_{[r,p]} = \infty}$.
The number of intervals, on which the SMUCE is constant, is of order $n^2 \sum_{k=1}^{\hat K+1}(\hat\tau_k-\hat\tau_{k-1})^2$, instead of $n^2$ if all intervals were considered. The number of intervals $[r,p]$ which are needed in \eqref{impl:bestprevpoint} is essentially of the same order.
This indicates that SMUCE is much faster for signals with many detected change-points than for signals with few detected change-points, which has been confirmed by simulations.}

\ha{The pruned algorithm is implemented for the statistical software R in the package \emph{stepR}\footnote{R package available at 
\url{http://www.stochastik.math.uni-goettingen.de/smuce}}. The SMUCE procedure for several exponential families 
is available via the function $\emph{smuceR}$.}

\subsection{Computation of minimal costs} \label{subsec:mincosts}
 
Let $r\leq i\leq j\leq p$.  Since $\set{F_\theta}_{\theta\in\Theta}$ was
assumed to be a regular, \ha{one} dimensional exponential family, the natural
parameter space $\Theta$ is a nonempty, open interval $(\theta_1,\theta_2)$ with
$-\infty\leq \theta_1 < \theta_2\leq\infty$.  Moreover, the mapping
$\theta\mapsto J(\overline Y_i^j,\theta)$ is strictly convex on $\Theta$ and has
the unique global minimum at $m^{-1}(\overline Y_i^j)$  if and only if $m^{-1}(\overline
Y_i^j) \in \text{int}(\Theta)$. In this case it follows from \citep[Thm. 6.2]{Bar73} that
for all $q>0$ 
\begin{multline*}
 \set{\theta\in\Theta: \sqrt{2T_i^j(Y,\theta)}-\sqrt{2\log \frac{en}{j-i+1}} 
 \leq q } \\ = \set{\theta\in\Theta: J(\overline Y_i^j,\theta)  \leq
 \frac{\left(q\ha{+}\sqrt{2\log \frac{en}{j-i+1}}\right)^2}{2(j-i+1)}} = 
 [\underline{b}_{ij},\overline{b}_{ij}],
\end{multline*}
with $-\infty < \underline{b}_{ij} \leq m^{-1}(\overline Y_i^j) \leq
\overline{b}_{ij}< \infty$. In other words, $\underline{b}_{ij}$ and
$\overline{b}_{ij}$ are the two finite solutions of the equation
\begin{equation}\label{alg:eq:bounds}
 J(\overline Y_i^j,\theta)=\frac{\left(q\ha{+}\sqrt{2\log
 \frac{en}{j-i+1}}\right)^2}{2(j-i+1)}.
\end{equation}

If $m^{-1}(\overline Y_i^j) \not\in \text{int}(\Theta)$, then \citep[Thm.
6.2]{Bar73} implies that either $\underline{b}_{ij} = -\infty$ or $\overline{b}_{ij} = \infty$. Let
us assume without restriction that $\underline{b}_{ij} = -\infty$ which in turn shows that
$\Theta= (-\infty, \theta_2)$ and $m^{-1}(\overline Y_i^j) = -\infty$. In this case, the infimum of
$\theta\mapsto J(\overline Y_i^j,\theta)$ is not attained and
\eqref{alg:eq:bounds} has only one finite solution $\overline{b}_{ij}$. The
lower bound $\underline{b}_{ij}=-\infty$ then is trivial.

After computing $\underline{b}_{ij}$ and $\overline{b}_{ij}$ for all $r\leq i \leq j
\leq p$, define $\underline{B}_{rp} = \max_{r\leq i\leq j\leq p}
\underline{b}_{ij}$ and $\overline{B}_{rp} = \min_{r\leq i\leq j\leq p}
\overline{b}_{ij}$. Hence, if $d^*_{[r,p]}< \infty$ we obtain
\begin{equation*}
 \theta^*_{[r,p]}=\argmin_{\theta\in [\underline{B}_{rp}, \overline{B}_{rp}]}
 d_{[r,p]}(Y,\theta) =
 \begin{cases}
 \overline{B}_{rp}  &\text{if} \quad m^{-1}(\overline Y_r^p) \geq
 \overline{B}_{rp} \\
 \underline{B}_{rp} & \text{if} \quad m^{-1}(\overline Y_r^p)\leq \underline{B}_{rp} \\
  m^{-1}(\overline Y_r^p) &\text{otherwise}. \quad
\end{cases}
\end{equation*}
Moreover, $d^*_{[r,p]}=\infty$ if and only if $\underline{B}_{rp}  >
\overline{B}_{rp}$.

To summarize, the computation of $\theta^*_{[r,p]}$ (and hence the
computation of the minimal costs $d^*_{[r,p]}$) reduces to finding the
non-trivial solutions of \eqref{alg:eq:bounds} for all $r\leq i\leq j\leq p$.
This can either be done explicitly (as for the Gaussian family, for example) or
approximately by Newton's method, say.

\ha{ Finally, we obtain that given the $\underline{b}_{ij}$
and $\overline{b}_{ij}$ are computed in $O(1)$, the bounds $(\underline B_{rp})_{1 \leq r \leq p \leq n}$
and $(\overline B_{rp})_{1 \leq r \leq p \leq n}$
are computed in $O(n^2)$. This follows from the observation that for $1\leq r \leq p \leq n$
\begin{equation*}
 \underline B_{rp} = \max \left\{\underline B_{(r+1)p},\underline B_{r(p-1)}, \underline{b}_{rp} \right \} \; \text{and} \;
 \overline B_{rp} = \min \left\{\overline B_{(r+1)p},\overline B_{r(p-1)}, \overline{b}_{rp} \right \},
\end{equation*}
which allows for iterative computation.
}

\subsection{Computation of confidence sets}\label{sec:alg:cb}

The dynamic programing algorithm gives, in addition to the computation of the
SMUCE, an approximation to the solution set $C(q)$ of \eqref{convset:optprob} as
discussed in Section \ref{sec:convset}. The algorithm outputs
disjoint intervals $[\tau_k^l,\tau_k^r]$  as well as a confidence band
$B(q)\subset [0,1]\times \Theta$ such that for each estimator $\hat\vt\in C(q)$:
\begin{equation*}
\hat \tau_k \in [\tau_k^l,\tau_k^r]\text{ for }k=1,\dots,\hat K(q)\quad\text{ and
}\quad (t,\hat\vt(t))\in B(q),\text{ for all }t\in[0,1].
\end{equation*}
To make this clear let $1\leq k\leq \hat K(q)$ and define 
\begin{equation}\label{impl:jumpband}
R_k = \max\set{r:|\mathcal{P}_r| \leq k} \quad \text{ and }\quad L_k = \min
\set{ p: d^*_{[p,R_k]} < \infty }.
\end{equation}
Then, for any estimator $\hat\vt\in \S_n[\hat K(q)]$ that satisfies
$T_n(Y,\hat\vt)\leq q$, it holds that $\hat \tau_k  \in [\tau_k^l,\tau_k^r]$
with $\tau_k^l = n^{-1}L_k$ and $\tau_k^r= n^{-1}R_k$.

Now we construct a confidence band $B(q)$ that contains the graphs of all
functions in $C(q)$. To this end, let $\hat\vt$ be as above and note
that for $1\leq k \leq \hat K(q)$ there is exactly one change-point in the
interval $[\tau_k^l,\tau_k^r]$ and no change-point in $(\tau_k^r,\tau_{k+1}^l)$.
First, assume that $t\in (\tau_k^r,\tau_{k+1}^l)$. Then we get a lower and
an upper bound for $\hat \vt(t)$ by $\underline{B}_{R_k+1 L_{k+1}-1}$ and
$\overline{B}_{R_k+1 L_{k+1}-1}$, respectively. Now let $t\in [\tau_k^l,\tau_k^r]$. Then,
the $k$-th change-point is either to the the left or to the right of $t$ and
hence any feasible estimator is constant either on $[\tau_k^l,t]$ or on
$[t ,\tau_k^r]$. Thus, we obtain a lower bound by
$\min\set{\underline{B}_{L_k,\lfloor tn \rfloor}, \underline{B}_{\lceil nt \rceil,R_k}}$ and an upper bound by $\max\set{\overline{B}_{L_k,\lfloor tn \rfloor},
\overline{B}_{\lceil nt \rceil,R_k}}$.

\section{On the choice of the threshold parameter}\label{sim:thresh}

The choice of the parameter $q$ in \eqref{intro:optprob} is crucial for it balances data fit and
parsimony of the estimator. First we discuss a general recipe that takes into
account prior information on the true signal $\vt$. Based on
this a specific choice is given in the second part which we found particularly suitable for our purposes. Further
generalizations are discussed briefly.

As shown in Corollary \ref{estimator:confbandcor} for the general case, $q$
determines asymptotically the level of significance for the confidence sets
$\mathcal{C}(q)$. For the Gaussian case we have shown in Section \ref{subsec:gauss:opt} that this result is even non-asymptotic, i.e.
from Corollary \ref{gauss:opt:cor1} it follows that
\begin{equation} \label{sim:q:over}
 \Prob (\hat K(q) > K) \leq \alpha(q), 
\end{equation}
where $\alpha(q)$ is defined as $\alpha(q)=\Prob (M\geq q)$. This allows to control the probability
of overestimating the number of change-points. If the latter is considered as a
measure of smoothness, \eqref{sim:q:over} can be interpreted as a \emph{minimal
smoothness guarantee}. This is similar in spirit to results on other multiscale
regularization methods (see \citep{Don95, FriMarMun12}). As argued in Section \ref{sec:convset} in general it is not possible to bound the minimal
number of change-points without further assumptions on the true function $\vt$ (see also  \citep{Don88} in the
context of mode estimation for densities).
However, we can draw a sharp bound for the probability of underestimating the number of change-points from
\eqref{gauss:q:under} in terms of the minimal interval length
$\lambda$ and minimal feature size $\eta^2 = n\lambda \Delta^2$, which gives
\begin{equation*}%\label{sim:q:under}
 \Prob \left( \hat K (q) < K \right) \leq \frac{2}{\lambda} \left[\exp\left( - \frac{1}{8}
 \left( \frac{\eta}{2\sqrt{2}} - 
 q-\sqrt{2\log\frac{2e}{\lambda}}
 \right)_+^2 \right) +  \exp\left(- \frac{\eta^2}{16}
 \right)\right]=:\beta(q,\eta,\lambda),
\end{equation*}
where we have exploited the fact that $K\leq 1\slash \lambda$. By combining
\eqref{sim:q:over} with the bound above one finds
\begin{equation}\label{sim:q:eq}
 \Prob \left( \hat K (q) = K \right)\geq 1-\alpha(q)-\beta(q,\eta,\lambda).
\end{equation}
In order to optimize the bound on the probability of estimating the correct
number of change-points, one has to balance the error of over- and underestimation.
Therefore, we aim for maximizing the r.h.s. over $q$.
Given $\lambda$ and $\eta^2 = n\lambda \Delta^2$ we therefore suggest to choose $q$
as
\begin{equation}\label{sim:q:qopt1}
 q^*_{\lambda,\eta}= \max_{q>0} \left\{ 1-\alpha(q)-\beta(q,\eta,\lambda)
 \right\}.
\end{equation}
The explicit knowledge of the influence of $\lambda$ and $\eta$ in
\eqref{sim:q:qopt1} paves the way to various strategies for incorporating
prior information in order to determine $q$. One might
e.g. use a full prior distribution on $(\lambda, \eta)$ and minimize the posterior model
selection error 
\begin{equation*}
 \max_{q>0} \E{1-\alpha(q)-\beta(q,\eta,\lambda)}.
\end{equation*}
In the following we suggest a rather simple
way to proceed, which we found empirically to perform quite well.
We stress that there is certainly room for further improvement. 
Motivated by the results of Section \ref{subsec:gauss:opt} we suggest to define 
$\lambda$ and $\eta = \sqrt{n\lambda}\Delta$ in dependence of $n$ implicitly by
the following assumptions

\begin{enumerate}[(i)]
  \item $\eta^*=12 \sqrt{-\log(\ha{\lambda^*})}$ and  \label{sim:q:genass1}
\item $\sqrt{\lambda^*}=g(\Delta, n)$,    \label{sim:q:genass2}
\end{enumerate}
for some function $g$ with values in $(0,1]$.
According to Theorem
\ref{gauss:opt:nojumpmissing}, the first assumption reflects the worst case
scenario among all signals that can be recovered with probability $1$
asymptotically. The second assumption corresponds to a prior belief in the true
function $\vt$. In the following simulations we always choose $g(\Delta,n)=\Delta$ which puts the decay of
$\lambda$ and $\Delta$ on equal footing.
We then come back to the approach in \eqref{sim:q:qopt1} and define
\begin{equation}\label{sim:thresh:optq}
  q^{*}_{n}= \max_{q>0} \left\{ 1-\alpha(q)-\beta(q,\eta^*,\lambda^*) \right\}
\end{equation}
where $\lambda^*$ and $\eta^*$ are defined by \eqref{sim:q:genass1} and
\eqref{sim:q:genass2}. Consequently, the maximizing element $q_n^*$ picks that $q$
which maximizes the probability bound in \eqref{sim:q:eq} of correctly estimating the number of change-points.
Note, that $q^*_n$ does not depend on the true signal $\vt$ but only on the number of observations $n$.

Even though the motivation for $q^*_n$ is build on the assumption of Gaussian
observations,
 simulations indicate that it performs also well for other distributions.
 That is why we choose $q=q_n^{*}$, unless stated differently throughout all
 simulations. \ha{ There $\alpha(q)$ is estimated by Monte-Carlo simulations with sample size $n=3000$. These simulations are rather expensive
 but only need to be performed once. For a given $n$, a solution of \eqref{sim:thresh:optq} may then be approximated numerically by computing the r.h.s.
 for a range of values for $q$.
 }
 We stress again that the general concept given by
 \eqref{sim:q:qopt1} can be employed further to incorporate prior knowledge of the signal as
 will be shown in Section \ref{sec:realdata}.

\section{Simulations}\label{sim}
As mentioned in the introduction, the literature on the change-point problem is
vast and we will now aim for comparing our approach within the plethora of established methods for exponential families.
All SMUCE instances computed in this section are based on the optimization problem \eqref{intro:optprob}, i.e.
we do not restrict the interval lengths, as required in Section \ref{sec:mrstat}
for technical reasons.

\subsection{Gaussian mean regression} \label{subsec:gaussmean}
Recall model \eqref{gauss:opt:model} in Section \ref{subsec:gauss:opt} with constant variance $\sigma^2$ and piecewise constant means $\mu$,
i.e. we set $\theta = \mu\slash \sigma^2$ and $\psi(\theta) = \mu^2\slash (2\sigma^2)$ in \eqref{model:expfam}.
Throughout the following we assume the variance $\sigma^2$ to be known, otherwise one may estimate it by standard methods, see e.g. \citep{DavKov01} or
\citep{DetMunWag98}.

Then, the MR-statistic \eqref{smre:mrstat} evaluated at $\hat\mu \in \S_n[\hat K]$ reads as
\begin{equation*}
T_n(Y,\hat \mu) = \max_{0\leq k\leq \hat K}\max_{\hat l_k<i\leq j\leq \hat
l_{k+1}} \left( \frac{\abs{\sum_{l=i}^j Y_l - \hat \mu_k}}{\sigma\sqrt{j-i+1}} -
\sqrt{2  \log\frac{e n}{j-i+1}} \right).
\end{equation*}
After selecting the model $\hat K(q)$ according to
\eqref{conscp:estnocp}, the SMUCE becomes
\begin{equation*}
\hat \mu(q) = \argmin_{\hat\mu\in \S_n[\hat
K(q)]}  \sum_{k=0}^{\hat K(q)}(\hat
l_{k+1} - \hat l_k)(\overline Y_{\hat l_k}^{\hat l_{k+1}} - \hat
\mu_k)^2\quad\text{ s.t. }\quad T_n(Y,\hat\mu)\leq q.
\end{equation*} 
In our simulation study we consider the following change-point-methods.
A large group follows the common paradigm of maximizing a penalized likelihood
criterion of the form 
\begin{equation}\label{sim:penlike}
\vt\mapsto l(Y,\vartheta) - \text{pen}(\vt)
\end{equation}
over $\vt \in \S_n[k]$ for $k=1,\ldots,n$, where
the function $\text{pen}(\vt)$ penalizes the complexity of the
model. This includes the \emph{Bayes
Information Criterion (BIC)} introduced in \citep{Sch78} which suggests the choice
$ \text{pen}(\vt) = \# J(\vt) \slash 2\log n$. As it was for instance
stressed in \citep{ZhaSie07}, the formal requirements to apply the BIC are not satisfied 
for the change-point problem.
Instead the authors propose the following penalty function
in \eqref{sim:penlike}, denoted as modified BIC:
\begin{equation*} 
\text{pen}(\vt) = -\frac{1}{2}\left(3\#J(\vt)\log n + \sum_{k=1}^{\#J(\vt)+1}
\log(\tau_k-\tau_{k-1}) \right).
\end{equation*}
They compare their mBIC method with the traditional BIC
as well as with the methods in \citep{OlsVenLucWig04} and
\citep{FriSniAntPinAlb04} by means of a comprehensive simulation study and
demonstrated the superiority of their method w.r.t. the number of correctly
estimated change-points. For this
reason we only consider \citep{ZhaSie07} in our simulations.
In addition, we will include the \emph{penalized likelihood oracle (PLoracle)}
as a benchmark, which is defined as follows: \kl{Recall that $K$ denotes the
true number of change-points. For given data $Y$, define $\omega_l$ and
$\omega_u$ as the minimal and maximal element of the set
\begin{equation*}
\set{\omega\in\R~:~ \argmax_{\hat\vt\in\mathcal{S}_n} \left(l(Y,\hat\vt) - \omega
\# J(\hat\vt)\right) \text{ has $K$ change-points}},
\end{equation*}
respectively. In particular, for $\omega_m := (\omega_l
+ \omega_u)\slash 2$ the penalized maximum likelihood estimator, i.e. a
maximizer of \eqref{sim:penlike} obtained with penalty $\text{pen}(\vt)
= \omega_m \# J(\vt)$, has exactly $K$ change-points. For our assessment, we
simulate $10^4$ instances of data $Y$ and compute the median $\omega^*$ of the
corresponding $\omega_m$'s. We then define the PLoracle to be a maximizer of
\eqref{sim:penlike} with $\text{pen}(\vt) = \omega^*\#J(\vt)$}.
Of course, PLoracles are not accessible in practice (since $K$ and $\vt$ are unknown).
However, they represent benchmark instances within the class of estimators given by
\eqref{sim:penlike} \ha{and penalties of the form $\text{pen}(\vt)= \omega\#J(\vt)$}.
\ha{We stress again, that even if SMUCE and the PLoracle have the same number of change-points they are in general not equal, since the likelihood in \eqref{intro:smre} is maximized only over the set $C(q)$.}

Moreover, we consider the \emph{fused lasso} algorithm which is based on
computing solutions of
\begin{equation}\label{sim:gaussmean:fs}
 \min_{\hat\vt \in S} \sum_{i=1}^{n} (Y_i-\hat\vt(i/n))^2 + \lambda_1
 \bigl\|\hat\vt\bigr\|_1 + \lambda_2 \bigl\|\hat\vt\bigr\|_{\text{TV}},
\end{equation}
where $\norm{\cdot}_1$ denotes the $l_1$-norm and $\norm{\cdot}_\text{TV}$ the
total variation semi-norm (see also \citep{HarLev10}). The fused lasso is not specifically designed for
the change-point problem. However, due to its prominent role
and its application to change-point problems (see e.g. \citep{Tib08}), we include it into our
simulations. An optimal choice of the parameters $(\lambda_1,\lambda_2)$ is crucial and
 in our simulations we consider two \emph{fused lasso oracles} $\text{FL}^\text{MSE}$ and $\text{FL}^\text{c-p}$. In $500$ Monte Carlo simulations (using the true signal) we compute 
$\lambda_1$ and $\lambda_2$ such that the MISE is minimized for the $\text{FL}^\text{MSE}$ and such that the frequency of correctly estimated number of change-points is maximized for $\text{FL}^\text{c-p}$.

In summary, we compare \text{SMUCE} with the modified BIC approach suggested in
\citep{ZhaSie07}, the CBS algorithm\footnote{R package available at
\url{http://cran.r-project.org/web/packages/PSCBS}} proposed in
\citep{OlsVenLucWig04}, the fused lasso algorithm\footnote{R package
available at \url{http://cran.r-project.org/web/packages/flsa/}} suggested in
\citep{TibSauRosZhuKni04}, \ha{unbalanced haar wavelets\footnote{R package available at
\url{http://cran.r-project.org/web/packages/unbalhaar/}} \citep{Fry07}} and the PLoracle as defined above. Since
the CBS algorithm tends to overestimate the number of change-points the authors
included a pruning step which requires the choice of an additional parameter.
The choice of the parameter is not explicitly described in \citep{OlsVenLucWig04} and here we only
consider the unpruned algorithm.

\begin{figure}[h!]
\begin{center}
 \includegraphics[width=\columnwidth]{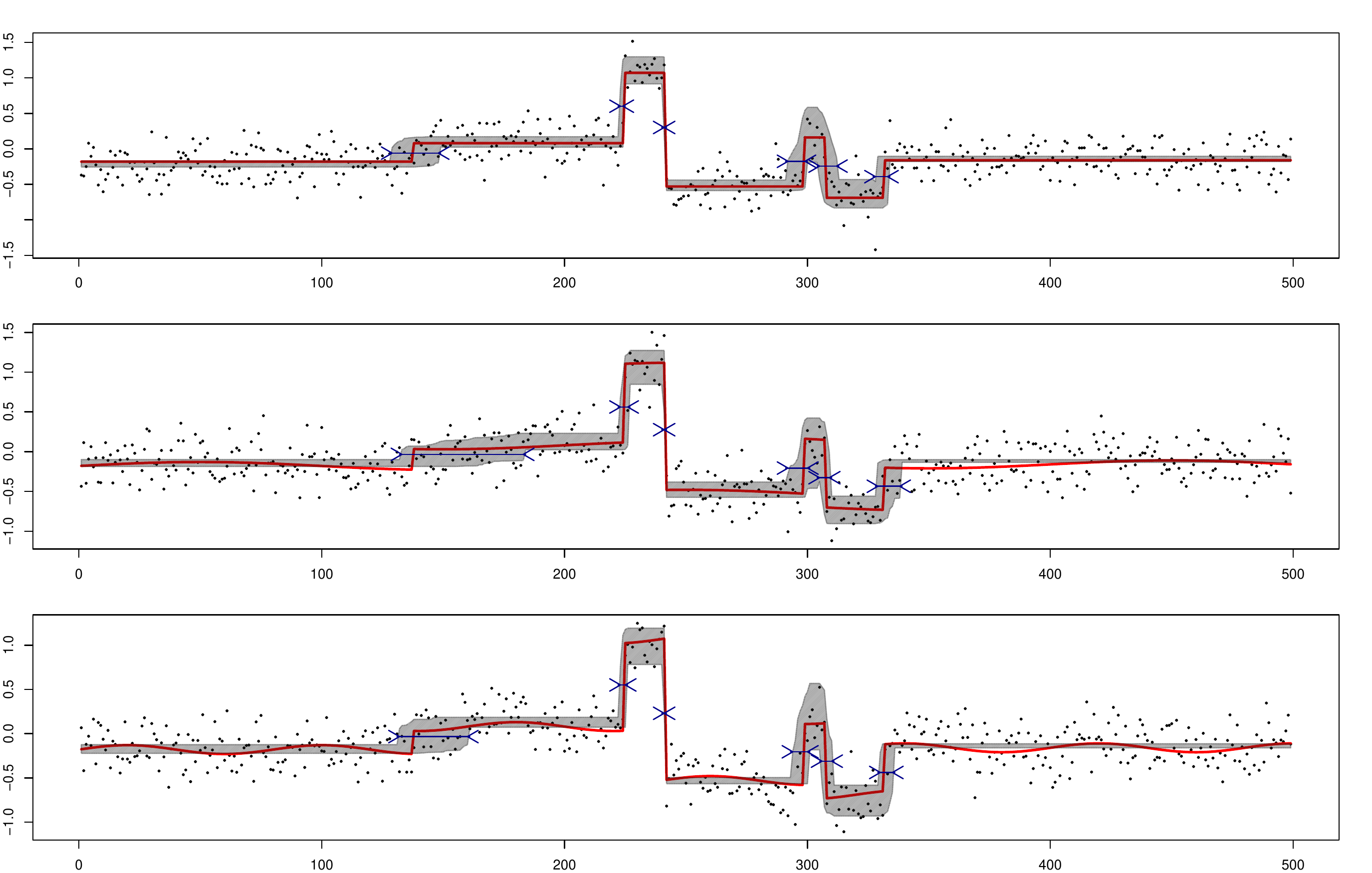}
 \caption{True signal (solid line), simulated data (dots) and
confidence bands (grey hatched) and confidence intervals for the change-points
(inwards pointing arrows) for $a=0$ (left), $a=0.01$ (middle) and $a=0.025$ (right) and $\sigma^2=0.2$}
 \label{sim:gaussmean:sim}
\end{center}
\end{figure}

We follow the simulation setup considered in \citep{ZhaSie07,OlsVenLucWig04}.
The application they bear in mind is the analysis of array-based comparative
genomic hybridization (array-CGH) data. Array-CGH is a technique for recording
the number of copies of genomic DNA (cf. \citep{KalKalSudRut92}). As pointed out
in \citep{OlsVenLucWig04}, piecewise constant regression is a natural model for
array DNA copy number data (see also Section \ref{sec:realdata:gauss}). 
Here, one has $n=497$ observations with constant variance $\sigma^2=0.04$ and the true regression function has $6$ change-points at locations $\tau_i = l_i\slash n$ and $(l_1,\ldots,l_6) =
(138,225,242,299,308,332)$ with intensities $(\theta_0,\ldots,\theta_6) =
(-0.18, 0.08, 1.07, -0.53, 0.16, -0.69,$ $ -0.16$). In order to investigate
robustness against small deviations from the model with step
functions, a local trend component is included in these simulations, i.e.
\ha{\begin{equation}\label{sim:modeltrend}
 Y_i\sim \mathcal{N}(\vartheta(i\slash n) + 0.25 b \sin(a \pi i),
\sigma^2),\quad i=1,\ldots,n.
\end{equation} 
}

\begin{table}[ht]
\tiny
\centering
\begin{tabular}{lll||cc>{\columncolor{tablebg}}ccc|cc}
 & trend & $\sigma$ & $\leq 4$ & 5 & 6 & 7 & $\geq 8$ & MSE & MAE \\ 
  \hline
  SMUCE ($1-\alpha=0.55$)& no & 0.1 & 0.000 & 0.000 & 0.988 & 0.012 & 0.000 & \textbf{0.00019} & 0.00885 \\ 
  PLoracle & no & 0.1 & 0.000 & 0.000 & \textbf{1.000} & 0.000 & 0.000 & \textbf{0.00019} & \textbf{0.00874} \\ 
  mBIC \citep{ZhaSie07} & no & 0.1 & 0.000 & 0.000 & 0.964 & 0.031 & 0.005 & 0.00020 & 0.00888 \\ 
  CBS \citep{OlsVenLucWig04} & no & 0.1 & 0.000 & 0.000 & 0.922 & 0.044 & 0.034 & 0.00023 & 0.00903 \\ 
  unbalhaar \citep{Fry07} & no & 0.1 & 0.000 & 0.000 & 0.751 & 0.137 & 0.112 & 0.00026 & 0.00926 \\ 
  $\text{FL}^\text{c-p}$ & no & 0.1 & 0.124 & 0.122 & 0.419 & 0.134 & 0.201 & 0.00928 & 0.15821 \\ 
  $\text{FL}^\text{MSE}$ & no & 0.1 & 0.000 & 0.000 & 0.000 & 0.000 & 1.000 & 0.00042 & 0.00274 \\  \hline
  SMUCE ($1-\alpha=0.55$)& no & 0.2 & 0.000 & 0.000 & \textbf{0.986} & 0.014 & 0.000 & \textbf{0.00117} & \textbf{0.01887} \\ 
  PLoracle & no & 0.2 & 0.024 & 0.001 & 0.975 & 0.000 & 0.000 & 0.00138 & 0.01915 \\ 
  mBIC \citep{ZhaSie07} & no & 0.2 & 0.000 & 0.000 & 0.960 & 0.037 & 0.003 & 0.00120 & 0.01894 \\ 
  CBS \citep{OlsVenLucWig04} & no & 0.2 & 0.000 & 0.000 & 0.870 & 0.089 & 0.041 & 0.00146 & 0.01969 \\
  unbalhaar \citep{Fry07} & no & 0.2 & 0.000 & 0.000 & 0.637 & 0.222 & 0.141 & 0.00174 & 0.02063 \\
  $\text{FL}^\text{c-p}$ & no & 0.2 & 0.184 & 0.162 & 0.219 & 0.174 & 0.261 & 0.08932 & 0.23644 \\ 
  $\text{FL}^\text{MSE}$ & no & 0.2 & 0.000 & 0.000 & 0.000 & 0.000 & 1.000 & 0.00297 & 0.03692 \\  \hline
  SMUCE ($1-\alpha=0.55$)& long & 0.2 & 0.000 & 0.000 & 0.825 & 0.171 & 0.004 & \textbf{0.00209} & \textbf{0.03314} \\ 
  PLoracle & long & 0.2  & 0.026 & 0.030 & \textbf{0.944} & 0.000 & 0.000 & 0.00245 & 0.03452 \\ 
  mBIC \citep{ZhaSie07} & long & 0.2  & 0.000 & 0.000 & 0.753 & 0.215 & 0.032 & 0.00214 & 0.03347 \\ 
  CBS \citep{OlsVenLucWig04} & long & 0.2  & 0.000 & 0.000 & 0.708 & 0.130 & 0.162 & 0.00266 & 0.03501 \\ 
  unbalhaar \citep{Fry07} & long & 0.2  & 0.000 & 0.000 & 0.447 & 0.308 & 0.245 & 0.00279 & 0.03515 \\  
  $\text{FL}^\text{c-p}$ & long & 0.2  & 0.078 & 0.112 & 0.219 & 0.215 & 0.376 & 0.08389 & 0.22319 \\ 
  $\text{FL}^\text{MSE}$ & long & 0.2  & 0.000 & 0.000 & 0.000 & 0.000 & 1.000 & 0.00302 & 0.03782 \\ \hline
  SMUCE ($1-\alpha=0.55$)& short & 0.2  & 0.000 & 0.002 & \textbf{0.903} & 0.088 & 0.007 & \textbf{0.00235} & \textbf{0.03683} \\ 
  PLoracle & short & 0.2   & 0.121 & 0.002 & 0.877 & 0.000 & 0.000 & 0.00325 & 0.03846 \\ 
  mBIC \citep{ZhaSie07}& short & 0.2   & 0.000 & 0.000 & 0.878 & 0.107 & 0.015 & 0.00238 & 0.03695 \\ 
  CBS \citep{OlsVenLucWig04} & short & 0.2  & 0.000 & 0.000 & 0.675 & 0.182 & 0.143 & 0.00267 & 0.03806 \\ 
  unbalhaar \citep{Fry07}& short & 0.2   & 0.000 & 0.000 & 0.602 & 0.225 & 0.173 & 0.00288 & 0.03849 \\ 
  $\text{FL}^\text{c-p}$ & short & 0.2  & 0.175 & 0.126 & 0.192 & 0.210 & 0.297 & 0.08765 & 0.23105 \\ 
  $\text{FL}^\text{MSE}$ & short & 0.2  & 0.000 & 0.000 & 0.000 & 0.000 & 1.000 & 0.00331 & 0.04111 \\  \hline
  SMUCE ($1-\alpha=0.55$)& no & 0.3   & 0.030 & 0.340 & 0.623 & 0.007 & 0.000 & 0.00660 & 0.03829 \\ 
  PLoracle & no & 0.3    & 0.181 & 0.031 & 0.788 & 0.000 & 0.000 & 0.00505 & 0.03447 \\ 
  mBIC \citep{ZhaSie07}  & no & 0.3 & 0.015 & 0.006 & \textbf{0.927} & 0.050 & 0.002 & \textbf{0.00364} & \textbf{0.03123} \\
  CBS \citep{OlsVenLucWig04} & no &0.3 & 0.006 & 0.019 & 0.764 & 0.157 & 0.054 & 0.00449 & 0.03404 \\ 
  unbalhaar \citep{Fry07}  & no & 0.3   & 0.008 & 0.004 & 0.602 & 0.244 & 0.142 & 0.00556 & 0.03792 \\
  $\text{FL}^\text{c-p}$  & no & 0.3   & 0.038 & 0.059 & 0.088 & 0.115 & 0.700 & 0.08792 & 0.23496 \\ 
  $\text{FL}^\text{MSE}$  & no & 0.3   & 0.531 & 0.200 & 0.125 & 0.078 & 0.066 & 0.09670 & 0.24131 \\ \hline
  SMUCE ($1-\alpha=0.4$) & no & 0.3 & 0.000 & 0.099 & 0.798 & 0.089 & 0.000 & 0.00468& 0.03499\\ \hline
\end{tabular}
\caption{\ha{Frequencies of estimated number of change-points and MISE by model
selection for SMUCE, PLoracle, mBIC
\citep{ZhaSie07}, CBS \citep{OlsVenLucWig04}, the fused lasso oracles $\text{FL}^\text{c-p}$ and
$\text{FL}^\text{MSE}$ as well as the unbalanced haar wavelets estimator \citep{Fry07}. The true
signals, shown in Figure \ref{sim:gaussmean:sim}, have $6$ change-points.}}
\label{sim:gau:tablecomp}
\end{table}

\ha{Following \citep{ZhaSie07} we simulate data for $\sigma=0.2$ and $a=0$ (no trend), $a=0.01$ (long trend) and  $a=0.025$
(short trend) (see Figure \ref{sim:gaussmean:sim}). Moreover, we included a scenario with a smaller signal-to-noise ratio, i.e. $\sigma=0.3$ and $a=0$ and one with a higher signal-to-noise ratio, i.e. $\sigma=0.3$ and $a=0$. For both scenarios we do not display results with a local trend, since we found the effect to be very similar to the results with $\sigma=0.2$.}

\ha{
Table \ref{sim:gau:tablecomp} shows the frequencies of the number of detected
change-points for all mentioned methods and the corresponding MISE and MIAE. 
Moreover, in Figure \ref{sim:gaussmean:typex} we displayed typical observation of model
\eqref{sim:modeltrend} with $a=0.1$ and $b=0.1$ and the aforementioned
estimators.
The results show that the \text{SMUCE} outperforms the mBIC \citep{ZhaSie07} slightly for $\sigma=0.2$ and appears to be less vulnerable for trends, in particular. Notably, \text{SMUCE} often performs even better than the PLoracle. 
For $\sigma=0.3$ SMUCE has a tendency to underestimate the number of change-points by one, while CBS and in particular mBIC estimates the true number $K=6$ with high probability correctly. As it is illustrated in Figure \ref{sim:gaussmean:typex2} this is due to the fact that SMUCE can not detect all change-points at level $1-\alpha \approx 0.55$ as we have chosen it following the simple rule \eqref{sim:thresh:optq} in Section $4$. For further investigation, we lowered the level to $1-\alpha = 0.4$ (see last row in Table \ref{sim:gau:tablecomp}). Even though this improves estimation, SMUCE performs comparably to CBS and the PLoracle now, it is still worse than mBIC.
}

\ha{
For an evaluation of $\text{FL}^\text{MSE}$ and $\text{FL}^\text{c-p}$ one should account for the 
quite different nature of the fused lasso: 
The weight $\lambda_1$ in \eqref{sim:gaussmean:fs} penalizes
estimators with large absolute values,  while $\lambda_2$ penalizes the
cumulated jump height. However, none of them encourages directly sparsity w.r.t the number of change-points.
That is why these estimators often incorporate many
small jumps (well known as the \emph{staircase effect}).
In comparison to SMUCE one finds that SMUCE outperforms the $\text{FL}^\text{MSE}$ w.r.t the MISE
and it outperforms $\text{FL}^\text{c-p}$ w.r.t. the frequency of correctly estimating the number of change-points.
 The example in Figure \ref{sim:gaussmean:typex} suggests
that the major features of the true signal are recovered by $\text{FL}^\text{MSE}$.
But additionally, there are also some artificial features in the estimator which suggest that an additional filtering step has to be included (see \citep{Tib08}).
}

\ha{
The unbalanced Haar estimator also has a tendency to include too many jumps, even though the effect is much smaller than for LASSO type methods, i.e. it is much sparser w.r.t. the number of change-points.
}
\begin{figure}[htp]
 \includegraphics[width= \columnwidth]{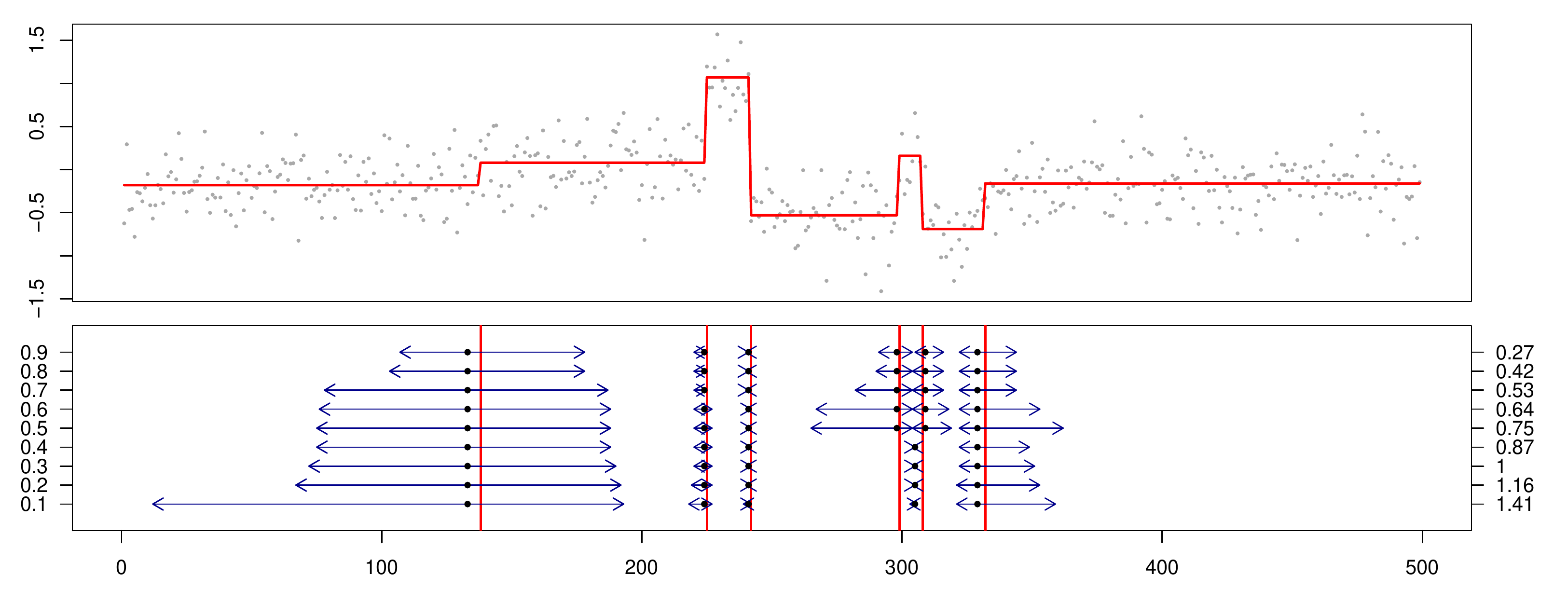}
 \caption{\ha{Top: typical example of model \eqref{sim:modeltrend} for $b=0$ and $\sigma^2=0.3$; bottom: change-points and confidence intervals for SMUCE with $\alpha =0.1,\ldots,0.9$} (left y-axis) and the corresponding quantiles $q_{1-\alpha}$ (right $y$-axis)}
 \label{sim:gaussmean:typex2}
\end{figure}
 
\begin{figure}[b!]
 \includegraphics[width=\columnwidth]{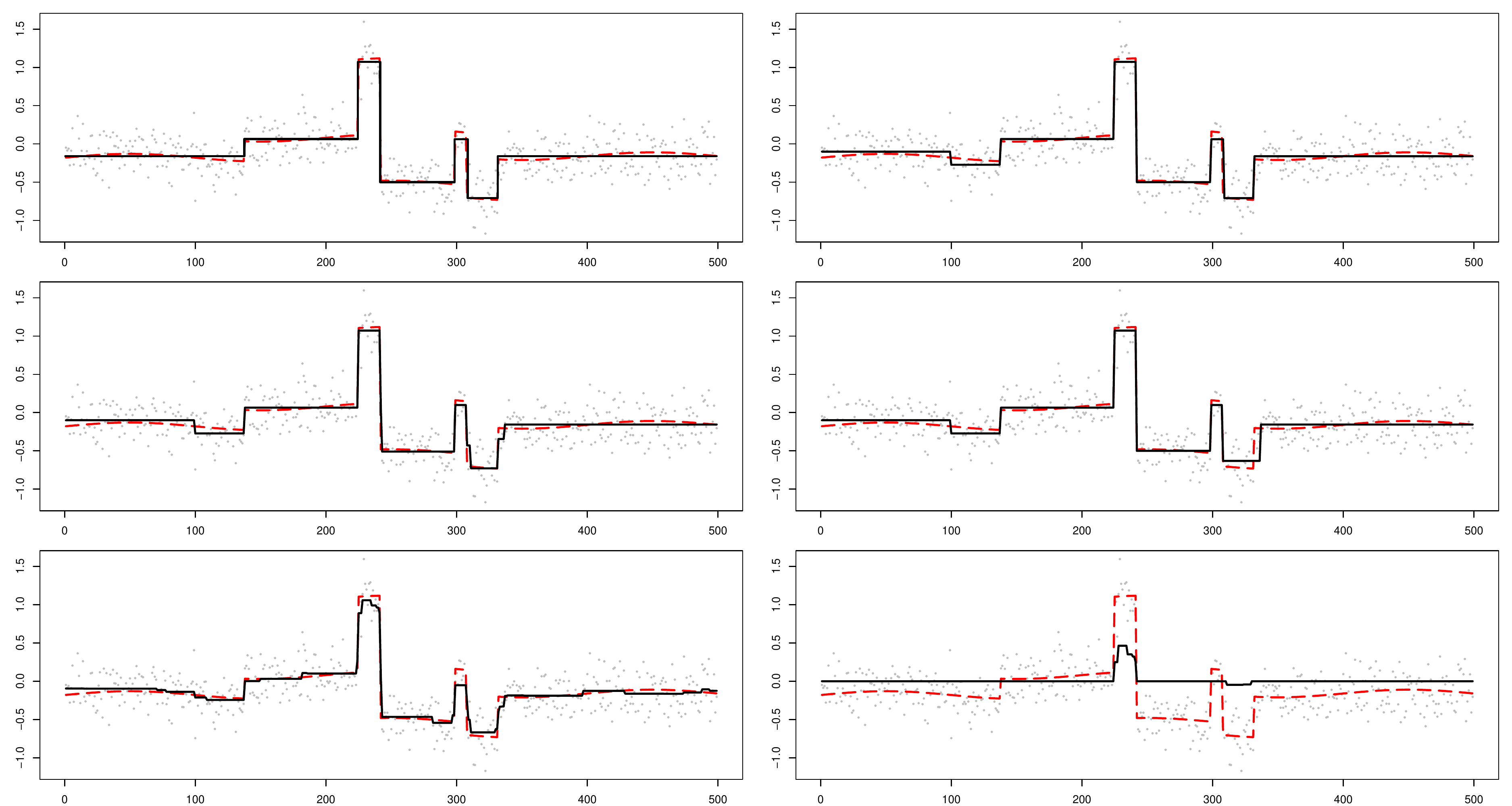}
\caption{\ha{An example of model \eqref{sim:modeltrend} for $a=0.01$, $b=0.1$ and $\sigma=0.2$.
From top left to bottom right: SMUCE, mBIC, unbalhaar, CBS,
$\text{FL}^\text{MSE}$ and $\text{FL}^\text{cp}$} (solid black line) and the true signal (dashed red line).}
\label{sim:gaussmean:typex}
\end{figure}

Again, we note that Table \ref{sim:gau:tablecomp} can be complemented by the
simulation study in \citep{ZhaSie07} which accounts for the classical
BIC \citep{Sch78} and the method suggested in \citep{FriSniAntPinAlb04}.

\subsection{Gaussian variance regression}
Again, we consider normal data $Y_i$, however,
in contrast to the previous section we aim to estimate the variance $\sigma^2 \in
\S$. For simplicity we set $\mu=0$. This constitutes a natural exponential family with natural
parameter $\theta=-(2\sigma^2)^{-1}$ and $\psi(\theta)=-\log(-2\theta)/2$ for
the sufficient statistic $Z_i = Y_i^2$, $i=1,\ldots,n$. It is easily seen that the MR-statistic in this
case reads as
\begin{equation*}
T_n(Z,\hat\sigma^2) = \max_{0\leq k\leq \hat K}\max_{\hat l_k<i\leq j\leq \hat
l_{k+1}} \left( \frac{\sqrt{j-i+1}}{\sqrt{2}}\sqrt{ \frac{
\overline{Z}_i^j}{\hat\sigma_k^2} - \log\frac{\overline{Z}_i^j}{\hat\sigma_k^2}
-1 } - \sqrt{2 \log\frac{e n}{j-i+1}} \right).
\end{equation*}
After selecting the model $\hat K(q)$ according to
\eqref{conscp:estnocp}, the SMUCE is given by
\begin{equation*}
\hat \sigma^2(q) = \argmax_{\hat\sigma^2\in \S_n[\hat
K(q)]} \sum_{k=0}^{\hat K(q)}(\hat l_{k+1} - \hat
l_k)\left(\log \frac{1}{\hat\sigma_k^2} - \frac{\overline
Z_{\hat l_l}^{\hat l_{k+1}}}{\hat\sigma_k^2} \right),\quad\text{ s.t. }\quad
T_n(Z,\hat\sigma^2)\leq q.
\end{equation*}   
 
We compare our method to \citep{Hoe08, DavHoeKra12}. Similar to
\text{SMUCE} they propose to minimize the number of change-points under a multiscale
constraint.
They additionally restrict their final estimator to coincide with the local maximum likelihood 
estimator on  constant segments. As pointed out by the authors this may increase the number of detected change-points.
Following their simulation study we consider test signals $\sigma_k$
with $k=0,1,4,9,19$ equidistant change-points and constant values alternating
from 1 to 2 ($k=1$), from 1 to 2 ($k=4$), from 1 to 2.5 ($k=9$) and from 1 to
3.5 ($k=19$). For this simulation the parameter of both procedures are chosen
such that the number of changes should not be overestimated  with probability $0.9$.
\ha{For any signal we computed both estimates in $1000$ simulations. The difference of true and estimated number change-points as well as the MISE and MIAE are shown in Table \ref{sim:gaussvar:tab1}. 
Considering the number of correctly estimated change-points, it shows that
\text{SMUCE} performs better for few changes ($k=1,4,9$) and worse for many
changes ($k=19$). This may be explained by the fact that the multiscale test in
\citep{DavHoeKra12} does not include a scale-calibration and is hence more
sensible on small scales than on larger ones, see also Subsection \ref{subsec:diss:pen}. With respect to MISE and MIAE
the \text{SMUCE} outperforms in every scenario, interestingly even for $k=19$, where \citep{DavHoeKra12} performs better w.r.t. the estimated number of change-points.}

\begin{table}[ht]
\tiny
\centering
\begin{tabular}{cc||ccc>{\columncolor{tablebg}}cccc|cc}

&k & -3 & -2 & -1 & 0 & +1 & +2 & +3 & MISE & MIAE \\ 
  \hline
SMUCE & 0 & 0.000 & 0.000 & 0.000 & \textbf{0.945} & 0.053 & 0.002 & 0.000 & \textbf{0.00072} & \textbf{0.02040} \\ 
 \citep{DavHoeKra12} & 0 & 0.000 & 0.000 & 0.000 & 0.854 & 0.127 & 0.019 & 0.000 & 0.00093 & 0.02122 \\ \hline
SMUCE &  1 & 0.000 & 0.000 & 0.000 & \textbf{0.975} & 0.024 & 0.001 & 0.000 & \textbf{0.00653} & \textbf{0.04295} \\ 
 \citep{DavHoeKra12} & 1 & 0.000 & 0.000 & 0.000 & 0.901 & 0.089 & 0.009 & 0.001 & 0.00935 & 0.04648 \\ \hline
SMUCE &  4 & 0.000 & 0.000 & 0.000 & \textbf{0.997} & 0.003 & 0.000 & 0.000 & \textbf{0.02153} & \textbf{0.07967} \\ 
  \citep{DavHoeKra12} &4 & 0.000 & 0.000 & 0.000 & 0.957 & 0.042 & 0.001 & 0.000 & 0.03378 & 0.09655 \\ \hline
SMUCE &  9 & 0.000 & 0.001 & 0.023 & \textbf{0.973} & 0.003 & 0.000 & 0.000 & \textbf{0.06456} & \textbf{0.13206} \\ 
 \citep{DavHoeKra12} & 9 & 0.000 & 0.000 & 0.009 & 0.968 & 0.023 & 0.000 & 0.000 & 0.11669 & 0.18297 \\ \hline
SMUCE & 19 & 0.000 & 0.027 & 0.222 & 0.751 & 0.000 & 0.000 & 0.000 & \textbf{0.26076} & \textbf{0.27468} \\ 
 \citep{DavHoeKra12} &  19 & 0.000 & 0.008 & 0.074 & \textbf{0.912} & 0.006 & 0.000 & 0.000 & 0.47105 & 0.40606 \\ 
\end{tabular}
  \caption{\ha{Comparison of \text{SMUCE} and the method in \citep{DavHoeKra12}. Difference between the estimated and the true number of change-points for $k=0,1,4,19$ change-points as well as MISE and MIAE for both
  estimators. }}\label{sim:gaussvar:tab1}
\end{table}

%  \begin{table}[ht]
% \tiny 
%  \begin{center}
%  \begin{tabular}{c||ccccc|ccccc}
%  & \multicolumn{5}{c|}{\text{SMUCE}}& \multicolumn{5}{c}{\citep{DavHoeKra12}} \\
%    \hline
%  & 0 & 1 & 4 & 9 & 19 & 0 & 1 & 4 & 9 & 19 \\ 
%    \hline
%    avg & \textbf{0.1190} & \textbf{1.0640} & \textbf{4.0290} & 9.0050 & 18.7360 & 0.1560 & 1.1050 & 4.0470 & 9.0050 & \textbf{18.9270} \\ 
%    under & 0.0000 & 0.0000 & 0.0000 & \textbf{0.0050} & 0.2430 & 0.0000 & 0.0000 & 0.0000 & 0.0100 & \textbf{0.0700} \\ 
%    exact & \textbf{0.8860} & \textbf{0.9410} & \textbf{0.9720} & \textbf{0.9850} & 0.7550 & 0.8640 & 0.9010 & 0.9550 & 0.9750 & \textbf{0.9260 }\\ 
%    over & \textbf{0.1140} & \textbf{0.0590} &\textbf{ 0.0280 }& \textbf{0.0100} &\textbf{ 0.0020} & 0.1360 & 0.0990 & 0.0450 & 0.0150 & 0.0040 \\  \hline
%    MISE & \textbf{0.0007} &\textbf{ 0.0066 }& \textbf{0.0214 }& \textbf{0.0616 }& \textbf{0.2584} & 0.0008 & 0.0090 & 0.0341 & 0.1171 & 0.4664 \\ 
%    MIAE & \textbf{0.0198} & \textbf{0.0445} & \textbf{0.0792} & \textbf{0.1289} & \textbf{0.2726} & 0.0193 & 0.0472 & 0.0964 & 0.1842 & 0.4027 \\ 
%   \end{tabular}
%   \end{center}
%   \caption{Comparison of \text{SMUCE} and the method in \citep{DavHoeKra12}. Average
%   number of estimated change-points, frequencies of underestimation, exact
%   estimation and overestimation as well as MISE and MIAE for both
%   estimators.}\label{sim:gaussvar:tab1}
%   \end{table}
  
\subsection{Poisson regression}
We consider the Poisson-family of
distributions with intensity $\mu>0$. Then, $\theta= \log \mu $ and
$\psi(\theta)=\exp \theta$. The MR-statistic is computed as
\begin{equation*}
T_n(Y,\hat \mu) = \max_{0\leq k\leq \hat K}\max_{\hat l_k<i\leq j\leq \hat
l_{k+1}} \left( \sqrt{2 (j-i+1 )}\sqrt{ \overline Y_i^j \log \frac{\overline Y_i^j}{\mu_k} +
\mu_k - \overline Y_i^j} - \sqrt{2 \log\frac{en}{j-i+1}} \right).
\end{equation*}

For $\hat K(q)$ as in \eqref{conscp:estnocp}, the \text{SMUCE} is given by
\begin{equation*}
\hat \mu(q) = \argmax_{\hat\mu\in \S_n[\hat
K(q)]} \sum_{k=0}^{\hat K(q)}(\hat l_{k+1} - \hat
l_k)(\overline Y_{\hat l_l}^{\hat l_{k+1}} \log \hat\mu_k - \hat
\mu_k)\quad\text{ s.t. }\quad T_n(Y,\hat\mu)\leq q.
\end{equation*} 
In applications (c.f. the example from photoemission spectroscopy below), one is often faced
with the problem of \emph{low count} Poisson data, i.e. when the intensity 
$\mu$ is small. It will turn out that in this case, data transformation towards Gaussian variables such as variance
 stabilizing transformations are not always sufficient and it pays off to take into account
the Poisson likelihood into SMUCE.

\begin{figure}[h!] 
\includegraphics[width=1\columnwidth]{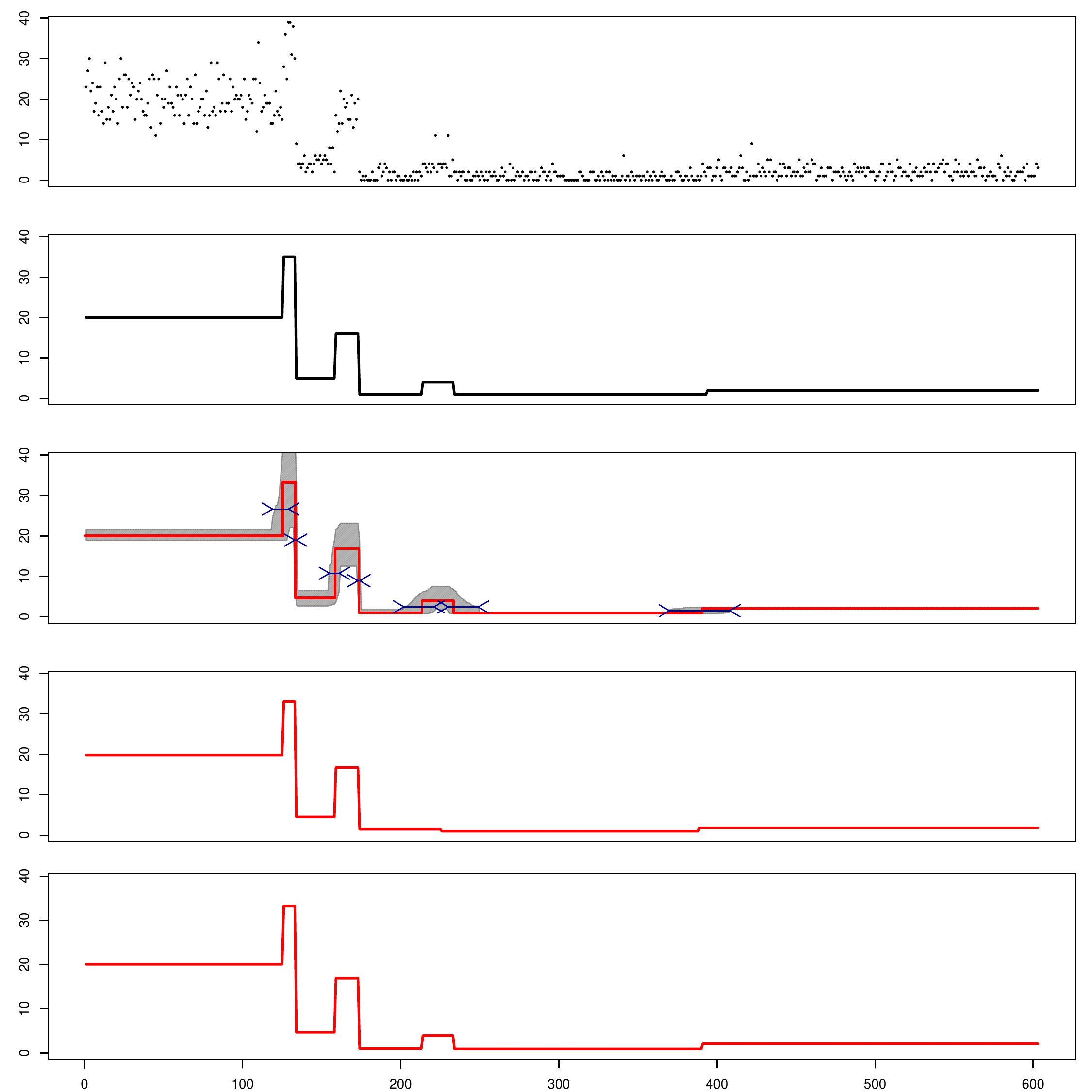}
  \caption{from top to bottom: simulated data, true signal, SMUCE with confidence bands for the signal intensities
  (gray area) and confidence intervals for the change-points (inward pointed
  arrows), $\text{SMUCE}_\text{mm}$ and Ploracle.}
\label{sec:pois:testf}
\end{figure}

In the following we perform a simulation study where we use a signal with a low
count and a spike part (see top panel of Figure \ref{sec:pois:testf}).
 In order to evaluate the performance of the \text{SMUCE} we compare it to the
 BIC estimator and the PLoracle as described before. Moreover, we included a
version of the SMUCE which is based on variance stabilizing transformations of the data.
 To this end, we applied the \emph{mean-matching} transformation \citep{BroCaiZho10} to preprocess the data.
 We then compute the SMUCE under a Gaussian model and retransform the obtained estimator by the inverse mean-matching transform. The resulting
 estimator is referred to as $\text{SMUCE}_\text{mm}$. Moreover, as a benchmark, we compute
 the (parametric) maximum likelihood estimator with $K=7$ change-points, which
 is referred to as MLoracle. 

\begin{table}[ht]
\tiny
\begin{center}
\begin{tabular}{l||cc>{\columncolor{tablebg}}ccc|ccc}
 & $\leq$5 & 6 & 7 & 8 & $\geq$9 & MISE & MIAE & Kullback-Leibler \\ 
  \hline
  SMUCE &0.000 & 0.067 & \textbf{0.929} & 0.004 & 0.004 & \textbf{0.274} & \textbf{0.217} & \textbf{0.0187} \\
  $\text{SMUCE}_{\text{MS}}$ &0.000 & 0.067 & \textbf{0.929} & 0.004 & 0.004 & {0.282} & {0.219} &  {0.0194} \\
   BIC & 0.000 & 0.000 & 0.080 & 0.094 & 0.920 & 0.575 & 0.313 & 0.0417 \\ 
%   $\text{SMUCE}_{\text{ANS}}$ & 0.011 & 0.372 & 0.608 & 0.008 & 0.009 & 0.399 & 0.350 & 0.044 \\ 
  $\text{SMUCE}_{\text{mm}}$ & 0.013 & 0.420 & 0.561 & 0.005 & 0.006 & 0.434 & 0.364 & 0.0418 \\   \hline
  PLoracle &0.045 & 0.014 & 0.942 & 0.000 & 0.000 & 0.275 & 0.217 & 0.0185 \\ 
  MLoracle & 0.000 & 0.000 & 1.000 & 0.000 & 0.000 & 0.258 & 0.208 & 0.0143 \\ 
\end{tabular}
    \caption{Frequencies of $\hat K$ and distance measures for SMUCE, the BIC
    \citep{Sch78}, the SMUCE for variance stabilized signals as well as the
    PLoracle and MLoracle.}
    \label{sim:pois:values}
\end{center}
\end{table} 

Table \ref{sim:pois:values} summarizes the simulation results. As to be expected the standard BIC
performs far from satisfactorily. We stress that SMUCE clearly outperforms the $\text{SMUCE}_\text{mm}$, which is based on
Gaussian transformations. Note, that the $\text{SMUCE}_\text{mm}$ systematically underestimates the number of change-points $K=7$
which highlights the difficulty to capture those parts of the signal correctly, where the intensity is low. 
Again, SMUCE performs almost as good as the Poisson-oracle PLoracle.
 To get a visual impression along with the results of Table \ref{sim:pois:values}, we
illustrated these estimators in Figure \ref{sec:pois:testf}. 

\subsection{Quantile regression}

\ha{Finally, we extend our methodology to quantile regression. Let the
observations $Y_1,\ldots,Y_n$ be given by model \eqref{intro:model}, without any assumption on
 the underlying distribution. For some $\beta\in(0,1)$, we now aim for estimating the corresponding
 (piecewise-constant) $\beta$-quantile function, which will be denoted by $\vartheta_\beta$.
This problem can be turned into a Bernoulli regression as follows: Given the $\beta$-quantile function $\vt_\beta$ define the random variables $W(\vt)=(W_1,\ldots,W_m)$ as
\begin{equation*}
 W_i=\begin{cases}
      1 & \text{ if } Y_i\leq \vt_\beta(i/n) \\
      0 & \text{ otherwise }
     \end{cases}
,\quad i=1,\ldots,n.
\end{equation*}
Then, $W_1,\ldots,W_n$ are i.i.d. Bernoulli random variables with mean value
$\beta$. Extending the idea in Subsection \ref{subsec:intro:estimator} we compute a solution of
\eqref{intro:smre}, where $T_n(W(\vt_\beta))$ denotes the multiscale statistic for Bernoulli observations which reads as
\begin{equation*}\label{sim:quantile:MSS}
T_n(W(\vt_\beta),\beta)=\max_{\substack{1\leq i \leq j \leq n \\ \vt_\beta \text{is constant on }[i/n,j/n]}} \left( \sqrt{2 T_i^j(W(\vt_\beta),\beta)} - \sqrt{2 \log \frac{e n}{j-i+1}} \right)
\end{equation*}
with
\begin{equation*}
T_i^j(W(\vt_\beta)),\beta) = (j-i+1) \left(\bar{W}_i^j \log\left(\frac{\bar{W}_i^j}{{\beta}} \right) + (1-\bar{W}_i^j) \log \left( \frac{1- \bar{W}_i^j}{1-{ \beta}} \right) \right).
\end{equation*}
In other words, we compute the estimate with fewest change-points, such that the signs of the residuals fulfill the multiscale test for Bernoulli observations with mean $\beta$.
The computation of this estimate hence results in the same type of optimization problem as treated in subsection \ref{subsec:mincosts} and we can apply the proposed methodology.}

\ha{In the following we compare this approach with a generalized taut string algorithm \citep{DavKov01}, which was proposed in \citep{DueKov09}, for estimating quantile functions. The estimate is constructed in such a way that it minimizes the number of local
extreme values among a specified class of functions. Here, a local extreme value is either a local maximum or a local minimum.}

\ha{In contrast to SMUCE the number of change-points is not penalized. In a simulation study the authors showed that their method is particularly suitable to detect local extremes of a signal.
We follow this idea and repeated their simulations. The results which also include the estimated number of change-points, are shown in Table \ref{sim:quantile:tab}. It can be seen that the gen. taut string estimates the number of local extremes slightly better 
than SMUCE, while the number of change-points is overestimated for $n=2048$ and $n=4096$. This may be explained by the fact that the generalized taut string is not primarily designed to have few change-points rather few local extremes.}

% \begin{table}
% \tiny
%  \begin{tabular}{c||c|c|c||c|c|c}
% 	&	\multicolumn{3}{c||}{gen. taut string}		&	\multicolumn{3}{c}{\text{SMUCE}}		\\ \hline
%    $n\backslash \beta$	&	0.5 	&	0.1 	&	0.9  	&	0.5  	&	0.1  	&	0.9  	\\ \hline
%     512 &	3 (6.0)	&	1 (7.4)	&	0 (8.4)	&	3 (6.1)	&	2 (7.2)	&	2 (7.1)	\\
%     2048&	9 (0.9)	&	4 (4.7)	&	3 (5.5)	&	9 (0.4)	&	4 (5.2)	&	4 (4.9)	\\
%     8192&	9 (0.0)	&	6 (3.4)	&	5 (4.1)	&	9 (0)	&	6 (3.2)	&	5 (4.2)	\\
%  \end{tabular}
% \caption{Comparison of \text{SMUCE} and generalized taut string \citep{DueKov09}. 
% Median of local extremes of the estimator and mean absolute difference (in brackets) to true number of local extremes. The true number of local extremes equals $9$.}
% \label{sim:quantile:tab}
% \end{table}

\begin{table}[ht]
\tiny
\centering
\begin{tabular}{ll||c|c|c|c|c|c}
& & \multicolumn{3}{c|}{local extreme values} & \multicolumn{3}{c}{change-points} \\ \hline
& n & $\beta=0.5$ & $\beta=0.1$ & $\beta=0.9$ & $\beta=0.5$ & $\beta=0.1$ & $\beta=0.9$  \\ \hline \hline
SMUCE & $512$ & 3 (5.9) & 1 (7.9) & 2 (7.4) & 5 (5.8) & 2 (9.1) & 3 (8.3) \\
gen. taut string & $512$ & 3(6.0) & 3 (6.6) & 3 (6.6) & 12 (2.0) & 6 (4.9) & 7 (4.0) \\ \hline
SMUCE & $2048$ & 9 (0.4) & 4 (5.4) & 3 (5.8) & 11 (0.1) & 6 (5.2) & 5 (5.9) \\
gen. taut string & $2048$ & 9 (0.7) & 5 (4.0) & 3 (5.7) & 26 (15.3) & 18 (7.1) & 16 (5.7) \\ \hline
SMUCE & $4096$ & 9 (0.1) & 4 (4.3) & 5 (4.5) & 11 (0.2) & 8 (3.1) & 6 (4.8) \\
gen. taut string & $4096$ & 9 (0.0) & 6 (3.1) & 3 (5.3) & 35 (24.1) & 25 (13.8) & 21 (9.9) \\
\end{tabular}
\caption{Comparison of \text{SMUCE} and generalized taut string \citep{DueKov09}. 
\ha{Median of local extreme values/ change-points of the estimators and mean absolute difference (in brackets) to true number of local extremes/ change-points. The true number of local extremes equals $9$ and the true number of change-points equals $11$.}}
\label{sim:quantile:tab}
\end{table}

\begin{figure}[h!]
  \includegraphics[width=0.75 \columnwidth]{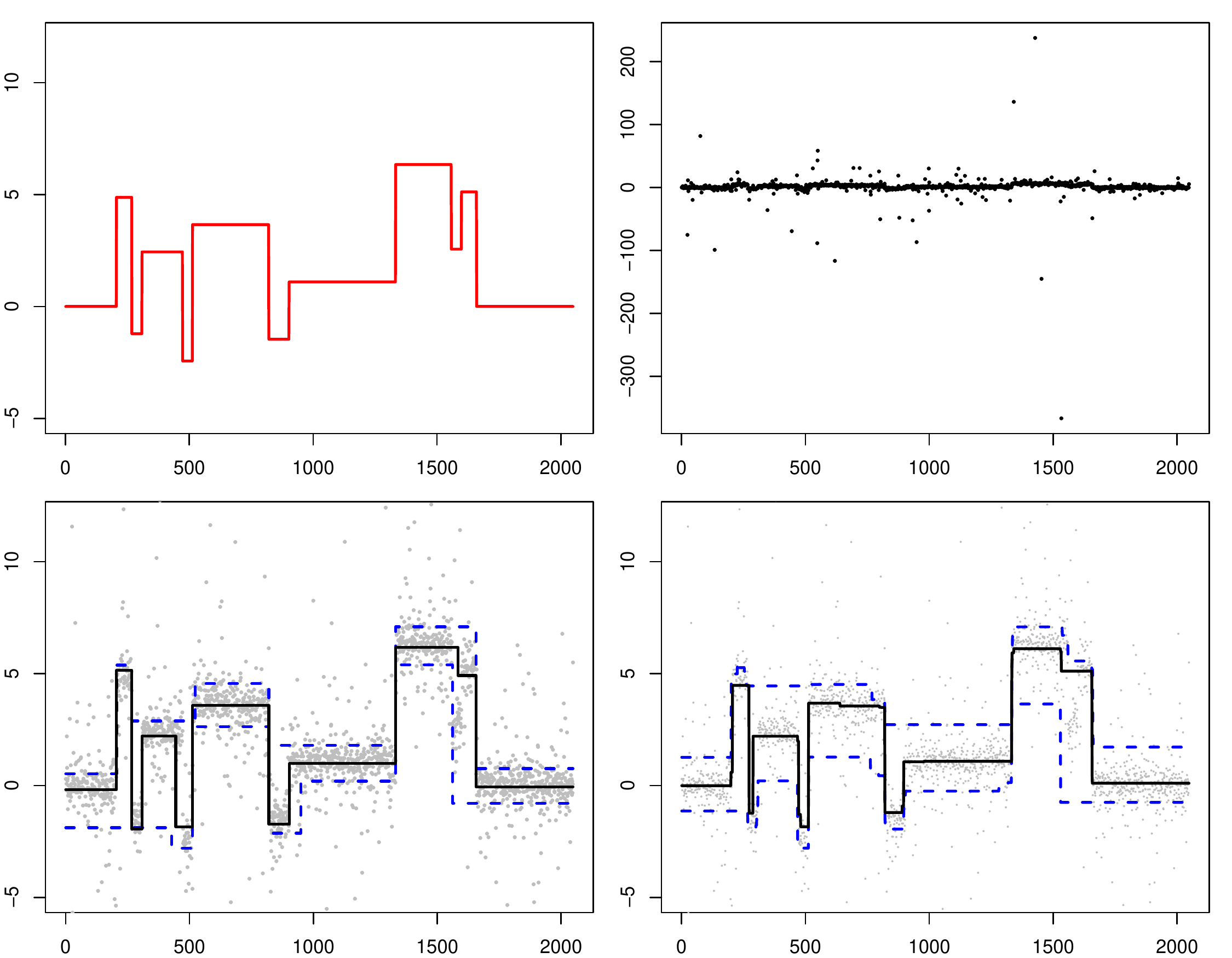}
  \caption{First row: block signal (left) and simulated data (right). Second row: Estimator for median (solid), 0.1 and 0.9-quantiles (dashed) from SMUCE (left) and generalized taut string (right)}
\end{figure}

\subsection{On the coverage of confidence sets $I(q)$}
In Section \ref{sec:convset} we gave asymptotic results on the simultaneous
coverage of the confidence sets $I(q)$ as defined in \eqref{convset:maindef}.
 In our simulations we choose $q=q_{1-\alpha}$ to be the $1-\alpha$-quantile of
 $M$ as in \eqref{limit:limitstat}. It then follows from Corollary
 \ref{confband:corhonest} that asymptotically the simultaneous coverage is
 larger than $1-\alpha$. We now investigate empirically the simultaneous
 coverage of $I(q_{1-\alpha})$. To this end, we consider the test signals  shown
 in Figure \ref{sim:cov:signals} for Gaussian observations with varying mean, Gaussian
observations with varying variance, Poisson observations and Bernoulli
observations.

Table \ref{sim:cov:tab1} summarizes the empirical coverage for
different values for $\alpha$ and $n$ obtained by $500$ simulation runs each and the relative frequencies
of correctly estimated change-points, which are given in brackets.
The results show that for $n = 2000$ the empirical coverage exceeds $1-\alpha$
in all scenarios. The same is not true for smaller $n$ (indicated by bold
letters), since here the number of change-points is misspecified rather
frequently (see numbers in brackets). Given $K$ has been estimated correctly, we find that
the empirical coverage of bands and intervals is in fact larger
than the nominal $1-\alpha$ for all simulations.

\begin{figure}[h!]
 \includegraphics[width=\columnwidth]{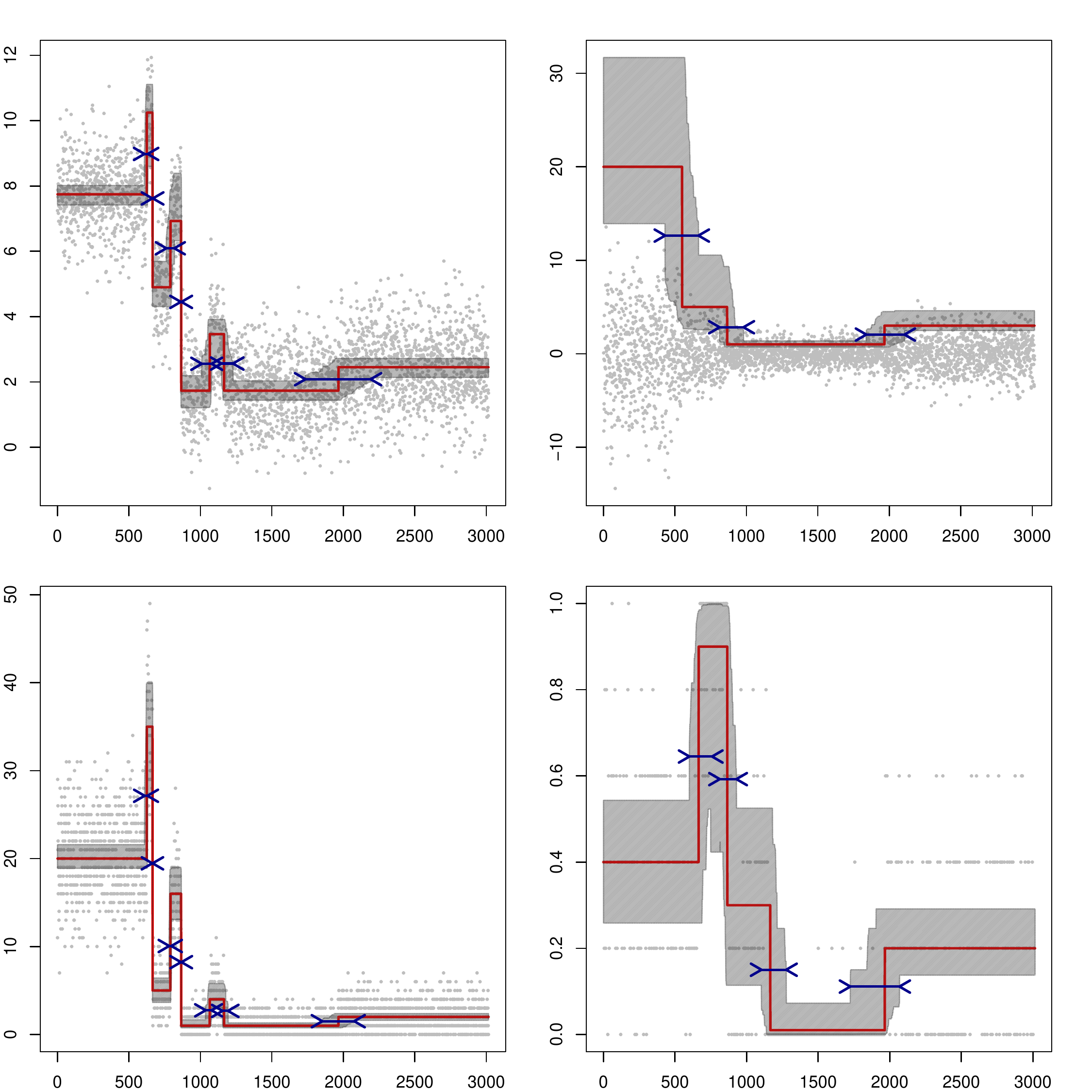}
\caption{f.l.t.r.: Gaussian observations with varying mean, Gaussian
observations with varying variance, Poisson and (binned)
 Bernoulli observations and SMUCE (solid red line) with confidence bands (grey hatched)
 and confidence intervals for change-points (inwards pointing arrows).}
\label{sim:cov:signals}
\end{figure}

\begin{table}[h!]
\tiny
\begin{center}
\begin{tabular}{l|l||ccc|ccc|ccc|ccc}
$n$&$1-\alpha$	 & \multicolumn{3}{c|}{Gaussian} & \multicolumn{3}{c|}{Gaussian} & \multicolumn{3}{c|}{Poisson} & \multicolumn{3}{c}{Bernoulli} \\
   &		 &\multicolumn{3}{c|}{(mean)}&\multicolumn{3}{c|}{(variance)}&&&&&& \\ \hline \hline
& $0.8$ 	& \textbf{0.59} & 0.64 & 0.92 &\textbf{ 0.66}& 0.68& 0.97 & 0.87& 0.89& 0.98 & 0.85& 0.90& 0.94\\ 
$1000$& $0.9$ 	& \textbf{0.48} &0.49& 0.98 & \textbf{0.39}& 0.39& 1.00 & \textbf{0.85}& 0.86& 0.99 & \textbf{0.86}& 0.86& 0.99 \\ 
& $0.95$ 	& \textbf{0.28} &0.28& 1.00 & \textbf{0.16}& 0.18& 0.93 & \textbf{0.71}& 0.74& 0.96 & \textbf{0.66}& 0.70& 0.94 \\ 	\hline

& $0.8$ 	& 0.84& 0.90& 0.93 &  0.87& 0.88& 0.98 & 0.92& 0.95& 0.96 & 0.93& 0.97& 0.96 \\ 
$1500$& $0.9$ 	& \textbf{0.73}& 0.74& 0.98 & \textbf{0.72} & 0.74 & 0.97 & \textbf{0.95}& 0.97& 0.98 &{0.96} &0.97& 0.99 \\
& $0.95$ & 	\textbf{0.55}   &0.56& 0.98 & \textbf{0.45} &0.47& 0.98 & \textbf{0.92} &0.93& 0.99 & \textbf{0.89}& 0.90& 0.99 \\ 	\hline

& $0.8$ 	& 0.94 & 0.99 & 0.95  & 0.98& 1.00& 0.98 & 0.95 &0.99& 0.95 & 0.96 &0.99& 0.97 \\ 
$2000$& $0.9$ 	& 0.98 &1.00& 0.98 & 0.99 &1.00& 0.99 & 0.96 &0.99& 0.96 & 0.97 &0.99& 0.98 \\ 
& $0.95$ 	& 0.99 &1.00& 0.99 & 0.97 &0.99& 0.98 & 1.00 &1.00& 1.00 & 0.99 &1.00& 0.99 \\ 	\hline
% 
% & $0.8$ 	& 0.94(0.99) & 0.95(1.00) & 0.90(0.97) & 0.92(0.99) \\ 
% $3000$& $0.9$ 	& 0.96(1.00) & 0.97(1.00) & 0.97(0.98) & 0.96(1.00) \\ 
% & $0.95$ 	& 0.99(1.00) & 0.99(1.00) & 0.98(1.00) & 0.98(1.00) \\ 	\hline
% 
% & $0.8$ 	&0.92(0.99) & 0.94(0.99) & 0.87(0.95) & 0.93(1.00) \\ 
% $6000$& $0.9$ 	& 0.95(1.00) & 0.96(1.00) & 0.95(0.98) & 0.96(1.00) \\ 
% & $0.95$ 	& 0.99(1.00) & 0.99(1.00) & 0.98(0.99) & 0.98(1.00) \\ 
   \hline
\end{tabular}
\caption{Empirical coverage obtained from $500$ simulations for the signals shown in Figure \ref{sim:cov:signals}.
For each choice of $\alpha$ and $n$ we computed the simultaneous coverage of $I(q)$, as in \eqref{convset:maindef} (first value),
the percentage of correctly estimated number of change-points (second value) and the simultaneous coverage of confidence 
 bands and intervals for the change-points given $\hat K(q)=K$ (third value).}
 \label{sim:cov:tab1}
\end{center}
\end{table}

\subsection{Real data results}\label{sec:realdata}
In this section we analyze two real data examples. The examples show the variety of
possible applications for \text{SMUCE}. Moreover, we revisit the issue of
choosing $q$ as proposed in Section \ref{sim:thresh} and illustrate \ha{its}
applicability to the present tasks.
\subsubsection{Array CGH data}\label{sec:realdata:gauss}
Array Comparative Genomic Hybridization (CGH) data show aberrations in genomic
DNA. The observations consist of the log-ratios of normalized intensities from
disease and control samples. The statistical problem at hand is to identify
regions on which the ratio differs significantly from $0$ (which corresponds to
a gain or a loss). These are often referred to as aberration regions.

A thorough overview of the topic and a comparison of
several methods is given in \citep{Lai}. We compute the \text{SMUCE} for
two data sets studied in \citep{Lai} and more recently in \citep{DuKou12, Tib08}.
The data sets show the Array-CGH profile of chromosome 7 
in GBM29 and chromosome 13 in GBM31, respectively (see also again \citep{DuKou12, Lai}).

By means of these two data examples we illustrate how the developed theory in Section \ref{sec:mrstat}
 can be used for applications. As it was stressed in
\citep{Lai} many algorithms in change-point detection do strongly depend on the proper
choice of a tuning parameter, which is often a difficult task in practice. We point out
that our proposed choice of the threshold parameter $q$ has in fact a statistical
meaningful interpretation as it determines the level of the confidence set $C(q)$.
Moreover, we will emphasize the usefulness of confidence
bands and intervals for Array CGH data.
 
\begin{figure}[h!]
 \includegraphics[width=0.9\columnwidth]{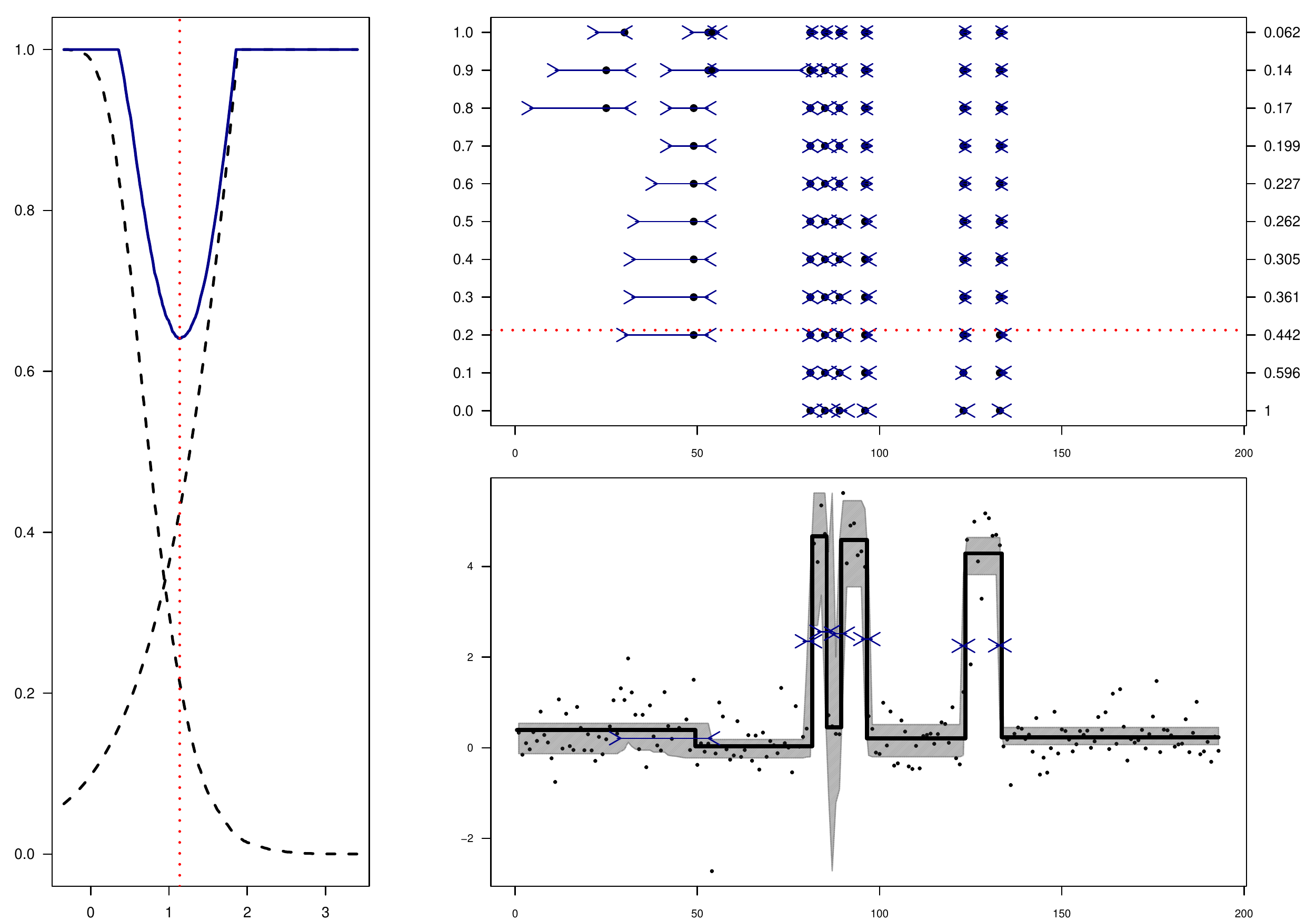}
 \caption{\ha{Left: Probability for over/underestimating  (decreasing/increasing dashed line) the number of
 change-points in dependence of $q$ ($x$-axis) and their sum (solid line). Top right: Detected
 change-points with confidence intervals for different values of $\alpha$ (left $y$-axis) with the probabilty of underestimation (right $y$-axis). Bottom right: SMUCE (solid line) computed for the optimal $q^*\approx 1.1$ with confidence bands (grey hatched) and confidence intervals for change-points (inwards pointing arrows).}}
\label{data:CGH:29:1}
\end{figure}  

\ha{We first consider the GBM29 data.
In order to choose $q$ according to the suggested proceeding in \eqref{sim:q:qopt1}, assumptions 
on $\lambda$ and $\Delta$ have to be imposed.
\ha{
As mentioned above log ratios of copy numbers may take on a finite number of values which are approximately $\left\{\log(1), \log(3/2), \log(2), \log(5/2),\ldots \right\}.$
It therefore seems reasonable to assume that the smallest jumps size is $\Delta=log(3/2)$.
}
Moreover, we choose  $\lambda\geq 0.2$.We stress that the final solution of the
 SMUCE will not be restricted to these assumptions. They enter as prioir assumptions for the choice of $q$. If the data speak strongly against these assumptions SMUCE will adapt to this.} 
  
In the left panel of Figure \ref{data:CGH:29:1}
we depict the probability of overestimating the number of change-points as a function of $q$ (decreasing dashed line)
and the probability of overestimating the number of change-points as a function of $q$ (increasing dashed line) under the
above stated assumption on $\lambda$ and $\Delta$.
\ha{One may interpret the plot in the following way. It provides a tool for finding jumps of minimal height $\Delta=log(3/2)$ on scales of at least $\lambda=0.2$. For the optimized $q^*$ we obtain, that the number jumps is misspecified with probability less than $0.35$. For the corresponding estimate
see Figure \ref{data:CGH:29:1}.}

Moreover, for different choices of $q$ we displayed the \text{SMUCE}.
The top-right panel of Figure \ref{data:CGH:29:1} shows the
estimated change-points with its confidence intervals. Bounds for
the probability that $K$ is overestimated can be found on the left axis,
bounds for underestimation on the right axis.

Note from the top-right image in Figure \ref{data:CGH:29:1} that the SMUCE is quite
robust w.r.t. $q=q_{1-\alpha}$. For $\alpha\in [0.2,0.7]$ SMUCE always detects exactly $7$ change-points
in the signal.
\ha{The results show that a jump of the size $\approx \Delta$ is found in the data on an interval, which length is even slightly smaller than $\lambda$. However, SMUCE is also able to detect larger abberations on smaller intervals, which makes it quite robust against wrong choices of $\Delta$ and $\lambda$.}

\ha{Recall that one goal in Array CGH data analysis} is to determine segments on
which the signals differs from $0$. The confidence sets in the right lower plot 
indicate three intervals with signal different from $0$. 
Moreover, as indicated by the blue arrows, the change-point locations are
detected very precisely. Actually, the estimator suggests one more change-point
in the data. However, it can be seen from the confidence bands that there is only small 
evidence for the signal to be nonzero.
\ha{Further, the confidence bands may be used to decide which segments belong to the same copy number event. In this particular example the confidence bands suggest that these three segments belong to the same copy number event, i.e. have the same mean value.}

Put differently, not only an estimator for the true signal is obtained, but also
$3$ regions of abberation were detected and simultaneous
confidence intervals for the signal's value on this regions at a level of
$1-\alpha=0.9$ are given. This is in accordance with others' findings \citep{Lai, DuKou12}.

\begin{figure}[h!]
 \includegraphics[width=0.9\columnwidth]{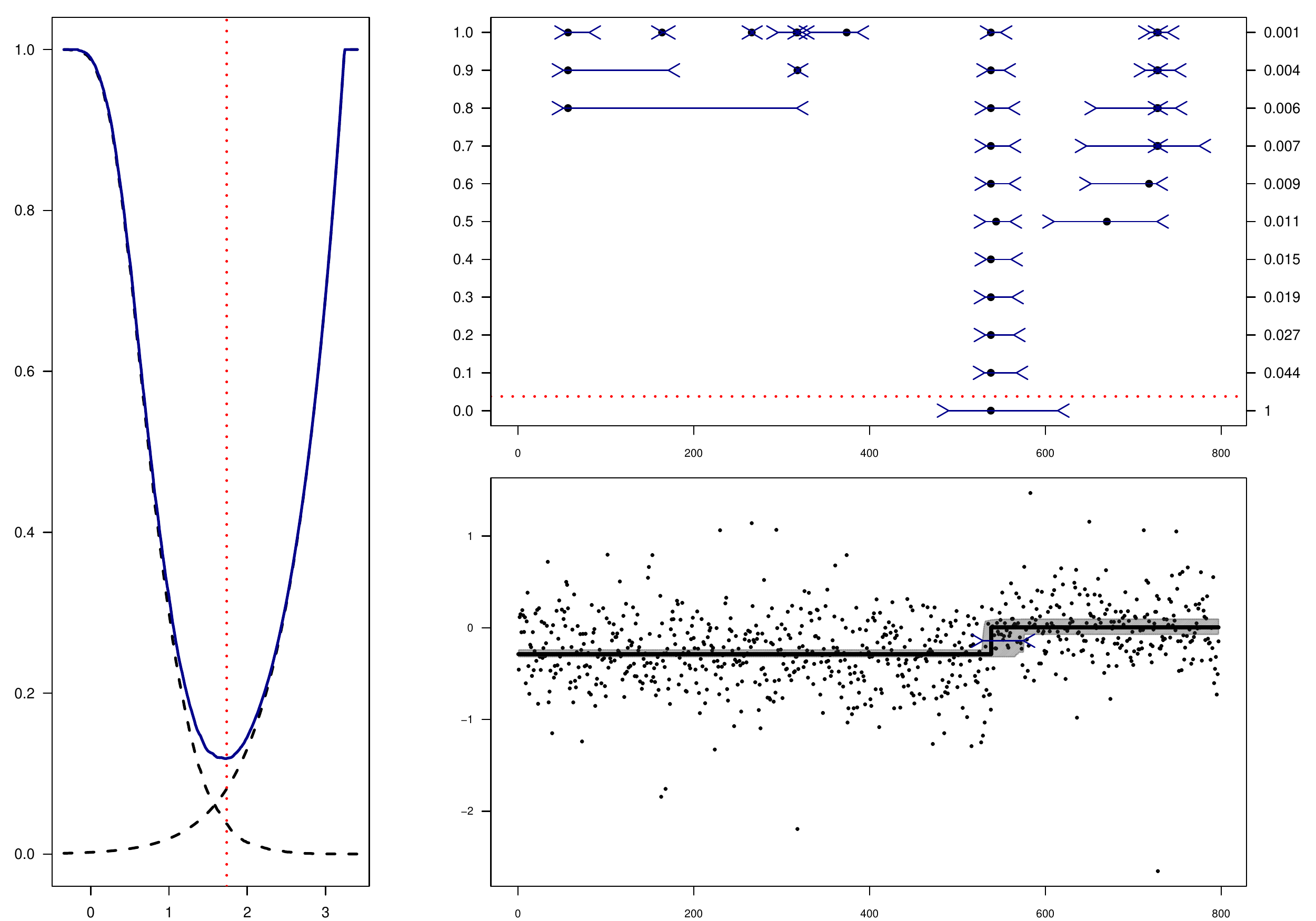}
caption{\ha{Left: Probability for over/underestimating  (decreasing/increasing dashed line) the number of
 change-points in dependence of $q$ ($x$-axis) and their sum (solid line). Top right: Detected
 change-points with confidence intervals for different values of $\alpha$ (left $y$-axis) with the probabilty of underestimation (right $y$-axis). Bottom right: SMUCE (solid line) computed for the optimal $q^*\approx 1.7$ with confidence bands (grey hatched) and confidence intervals for change-points (inwards pointing arrows).}}
\label{data:CGH:31:1}
\end{figure}

The same procedure as above is repeated for the GBM31 data as shown in Figure
\ref{data:CGH:31:1}. For the bounds on underestimating the number of
change-points we assumed again that $\Delta \geq \log(3/2)$ and chose $\lambda\geq 0.025$.
The plots in Figure \ref{data:CGH:31:1} show that $\Delta \geq \log(3/2)$ for the sample size of $n=797$ the probability
of misspecification can be bounded by $\approx 0.12$ for the minimal length $\lambda = 0.025$, which corresponds to $19$
observations.
Using the same reasoning as above we identify one large region of abberation
and obtain a confidence interval for the corresponding change-point as well as
for the signal's value.
Here, the optimized $q^* \approx 1.7$ in the sense of \eqref{sim:thresh:optq} gives $\alpha\approx 0.04$
which yields a SMUCE with one jump with high significamce.

\subsubsection{Photoemission Spectroscopy (PES)}

Electron emission from nanostructures triggered by ultrashort laser pulses has 
numerous applications in time-resolved electron imaging and spectroscopy \citep{RopSolSchLieEls07}.
 In addition, it holds promise for fundamental insight into electron correlations in microscopic volumes,
 including antibunching \citep{KieRenHas02}. Single-shot measurements of the number of electrons emitted 
per laser pulse \citep{BorGulWeiYalRop10, HerSolGulRop12} will allow for the disentanglement of various 
competing processes governing the electron statistics, such as classical fluctuations, Pauli blocking and space charge effects.

\begin{figure}[h!]
\includegraphics[width=0.9\columnwidth]{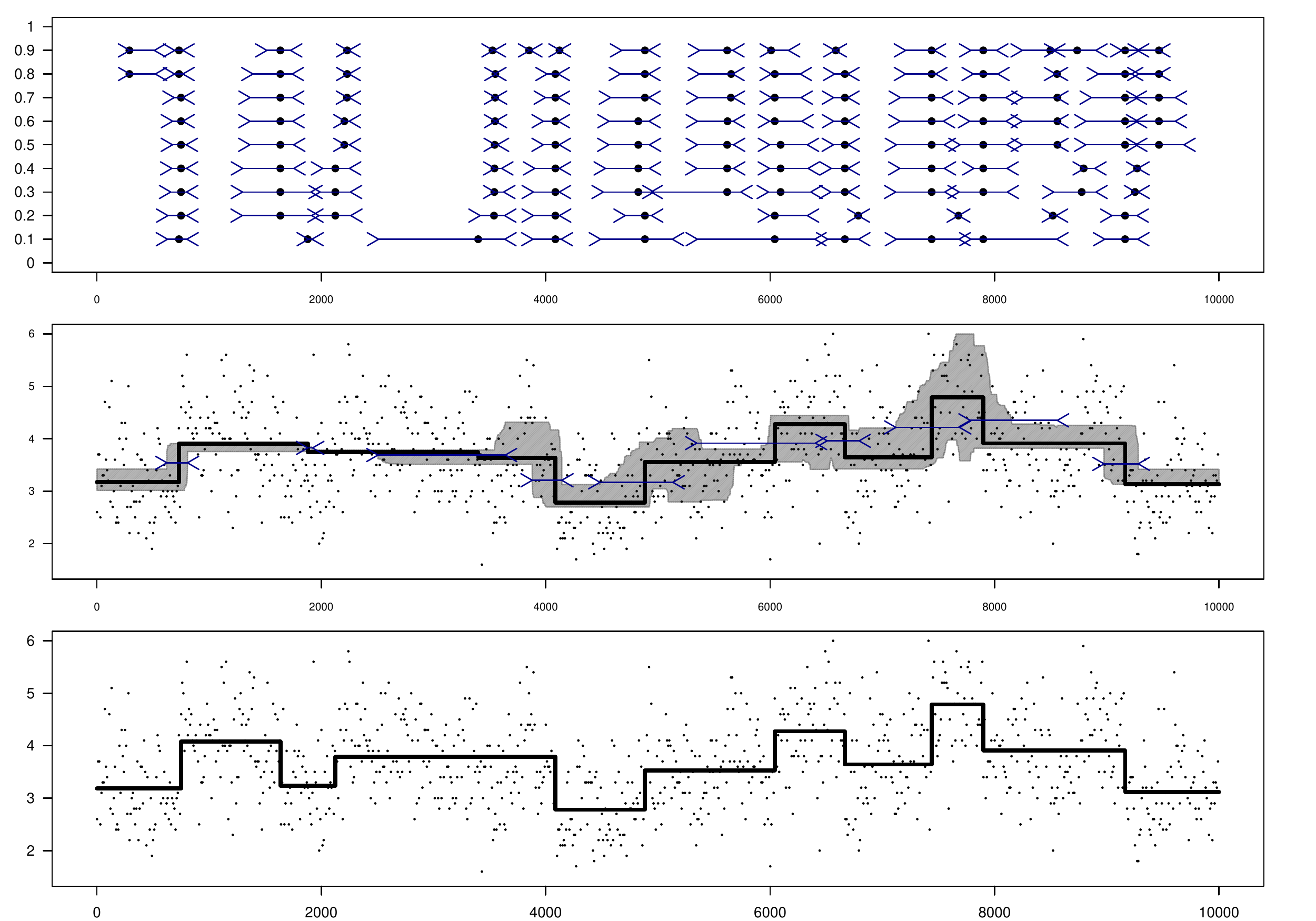}
\caption{Top: Detected change-points and confidence intervals for different values of $\alpha$ \ha{($y$-axis). Middle: SMUCE with confidence bands (grey hatched), confidence intervals for the changepoints (inwards pointing arrows) and binned PES data. Bottom: ML-Estimator
 with 10 change-points.}}
\label{sim:real:est}
\end{figure}

We investigate with the SMUCE approach PES data displayed in the bottom panel of Figure \ref{sim:real:est}.
It represents a time series of electron numbers recorded from a PES experiment performed in the Ropers lab
 (Department of Biophysics, University of Goettingen, see \citep{BorGulWeiYalRop10}).
It is custom to model PES data by Poisson regression with unknown intensity. This intensity is known
to show long term fluctuations which correspond to variation in laser power and laser beam pointing,
which cannot be controlled in the experiment and typically leads to an overall over-dispersion effect. However, on a short time scale, the interesting
task is to investigate underdispersion in the distribution.
Such underdispersion would indicate an electron interaction
 in which the emission of one (or a few) electrons decreases the likelihood of further emission events. Specifically,
 a significant underdispersion in the single-shot electron number histogram would evidence an anticorrelation caused by
 electrons being Fermions that obey the Pauli exclusion principle. A
piecewise constant mean that models sudden changes in the laser intensity to reflect the large scale fluctuations is used for segmentation of the data
 for further investigation of under- or overdispersion in these segments.

Figure \ref{sim:real:est} shows the estimated change-points of SMUCE (and
the corresponding confidence intervals) for $\alpha=0.05,0.1,\ldots,0.9$ in the top panel. 
We also display the SMUCE with confidence bands for $\alpha=0.9$ (middle) and for comparison the MLE
 with $\hat K_{\text{SMUCE}}(q)=10$ change-points (bottom). Note, that
the MLE is computed without the additional constraint $T_n(Y,\hat \mu)\leq q$, in contrast 
to \text{SMUCE}. Remarkably, this results in a different estimator.

We estimate the dispersion of data $Y_1,\ldots,Y_m$ by $\hat\rho=\hat \sigma^2$/$\hat\mu$, where $\hat \mu= 1/m \sum_{i=1}^{m} Y_i$
and $\hat\sigma = 1/m \sum_{i=1}^{m} (Y_i-\hat \mu)^2$.
In Table \ref{sim:nano:tab1} $\hat \mu= 1/m \sum_{i=1}^{m} Y_i$ is shown for the whole dataset as well as for the segments 
identified by SMUCE. It can be seen that our segmentation allows to explain the overall overdispersion to a large extent, by the long term fluctuations.
However, the results in Table \ref{sim:nano:tab1} do not indicate significant underdispersion on any of the identified segments.
This may be explained by a masking effect due to fluctuations of the emission current.
Future experiments using more stable emission currents are underway.

\begin{table}[h]
\tiny
\begin{center}
\begin{tabular}{r|r|rrrrrrrrrrr}
  \hline
segment & overall &1 & 2 & 3 & 4 & 5 & 6 & 7 & 8 & 9 & 10 & 11  \\ 
  \hline
 $\hat \rho$ &1.02 & 0.98 & 1.02 &{0.98} & 1.04 & 1.01 & 1.04 & {0.98} & 1.03 & {0.99} & 0.98 & 1.05  \\ 
   \hline
\end{tabular}
\caption{Dispersion estimator $\hat \rho$ of the whole dataset and on the segments identified by SMUCE }\label{sim:nano:tab1}
\end{center}
\end{table}

\section{Discussion}\label{sec:disc}

\subsection{Dependent Data}\label{subsec:diss:pen}

\kl{So far the theoretical justification for SMUCE relies on the
independence of the data in model \eqref{intro:model} (see Section \ref{sec:mrstat}), as for example the
optimal power results in Section \ref{subsec:gauss:van}. We claim, however, that
SMUCE as introduced in this paper can be extended to piecewise
constant regression problems with \emph{serially dependent} data. A comprehensive discussion is above the scope of this paper an will be addressed in future work. Here,  we confine ourselves to the case of a Gaussian moving average process of order 1, a similar strategy has been applied in \citep{Hot12} for $m$-dependent data.}
\begin{example}
\ha{For a piecewise constant function $\mu\in \mathcal S$ we consider the MA(1) model
\begin{equation*}
 Y_i = \mu (i/n) + \varepsilon_i + \beta \varepsilon_{i-1} \quad \text{ for } \quad  i=1,\ldots,n,
\end{equation*}
where $\beta<1$ and $\varepsilon_0,\varepsilon_1,\ldots, \varepsilon_{n} \overset{i.i.d.}{ \sim} \mathcal{N}(0,\sigma^2)$.
We aim to adapt the SMUCE to this situation. Following the local likelihood approach underlying the multiscale constraint in \eqref{intro:optprob} one simply might replace the local statistic $\sqrt{2 T_i^j(Y,\mu_0)}$ for $\mu_0\in\R$ in \eqref{intro:mrstat} by the (modified) local statistics
\begin{equation} \label{diss:dep:locstat}
\sqrt{2 \tilde{T}_i^j(Y,\mu_0) } = \frac{\abs{\sum_{l=i}^j Y_l-\mu_0}}{\sqrt{\sigma^2 \left[ (j-i+1) (1+\beta^2) + (j-i) \beta \right]}}.
\end{equation}
This is motivated by the fact, that $\text{Var}(\sum_{l=i}^j Y_l) =\sigma^2 \left[ (j-i+1) (1+\beta^2) + (j-i) \beta \right]$.
Under the null-hypothesis the local statistics $\tilde T_i^j$ then marginally have a $\chi^2_1$ distribution, as $T_i^j$ in \eqref{intro:likeratstat} for independent Gaussian observations.}

\ha{
In order to control the overestimation error as in Section \ref{sec:conscp}, one now has to compute
the null distribution of 
\begin{equation*}
 \tilde T_n(Y,\mu ) = \max_{\substack{1\leq i<j\leq n \\ \mu(t) = \mu_0 \text{ for } t
\in [i\slash n, j\slash n]}} \left(  \sqrt{2 \tilde T_i^j(Y,\mu_0)} -
\sqrt{2\log\frac{en}{j-i+1}} \right).
\end{equation*}
To this end, we used Monte-Carlo simulations for a sample size of $n=500$. We reconsider the test signal from Section \ref{subsec:gaussmean} with $\sigma=0.2$ and $a=0$.
The empirical null-distribution of $\tilde T_n$ and a probability-probability plot of the null distribution of $T_n$ against $\tilde T_n$ are shown in Figure \ref{disc:depend:nulldis}.
\begin{figure}
 \includegraphics[width= 0.6 \columnwidth]{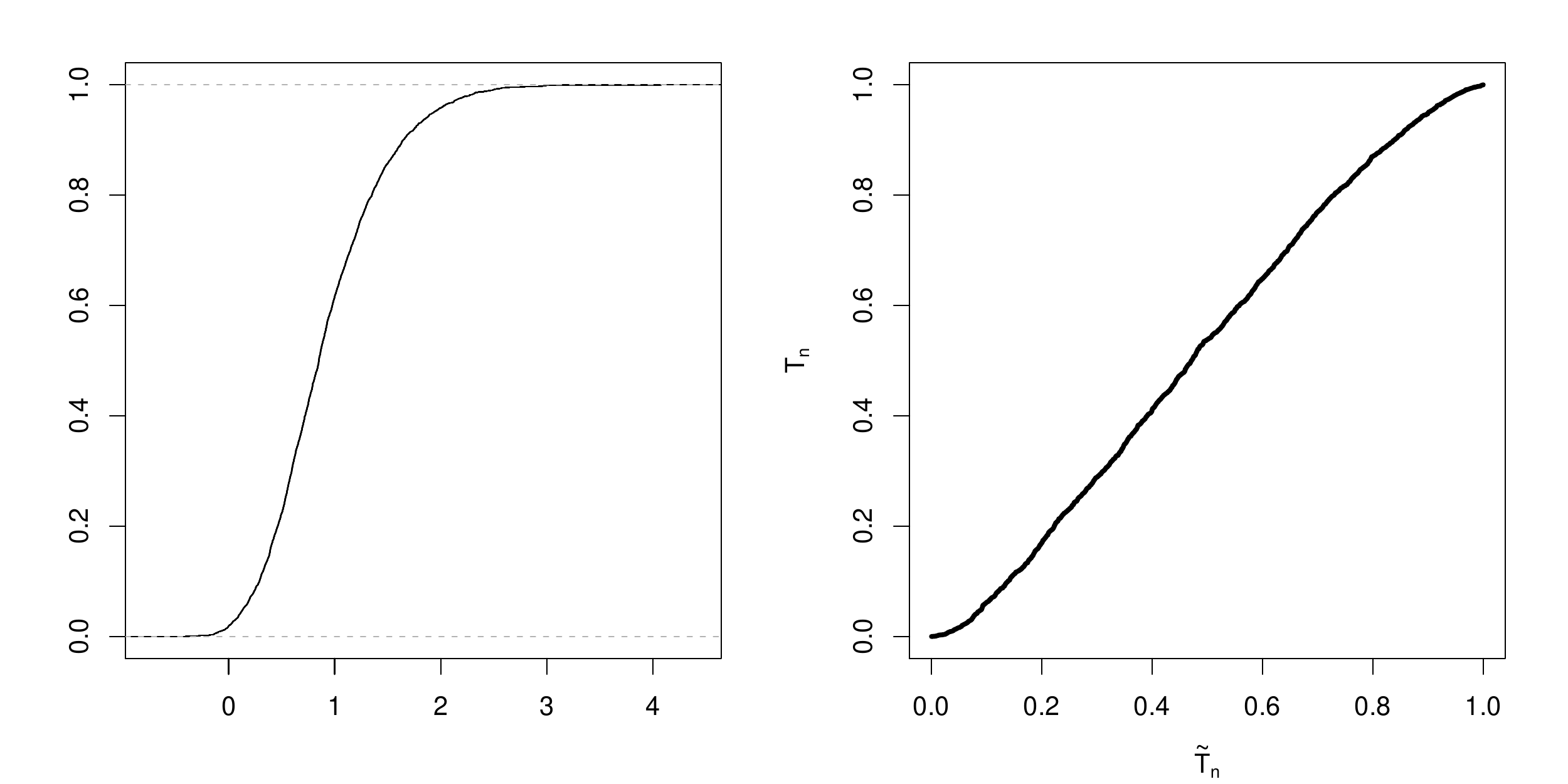}
 \caption{\ha{ecdf of the null distribution for dependent observations with $\beta=0.3$ and PP plot against the null distribution for independent observations.}}
 \label{disc:depend:nulldis}
\end{figure}
For $\beta=0.1$ and $\beta=0.3$, which corresponds to a correlation of $\rho=0.1$ and $\rho=0.27$, we ran $1000$ simulations each. We computed the modified SMUCE, as in \eqref{diss:dep:locstat}, and the SMUCE for independent Gaussian observations. For both procedures we chose $q$ to be the $0.75$-quantile of the null-distribution. The results are shown in Table \ref{diss:depend:sim}. For $\beta=0.1$ both procedures perform similarly, which indicates that SMUCE is robust to such weak dependences, while for $\beta=0.3$ the modified version performs much better w.r.t. the estimated number of change-points.
}
\begin{table}[ht]
\tiny
\centering
\begin{tabular}{lc||c>{\columncolor{tablebg}}cccc|cc}
  \hline
& $\beta$ & 5 & 6 & 7 & 8 & $\geq 9$ & MISE & MIAE \\ 
\hline
 modified SMUCE& 0.1 & 0.02 & \textbf{0.98} & 0.00 & 0.00 & 0.00 & {0.00154} & \textbf{0.02104} \\ 
 SMUCE& 0.1 & 0.00 & 0.95 & 0.04 & 0.00 & 0.00 & \textbf{0.00142} & 0.02117 \\ 
%  mBIC & 0.1 & 0.00 & 0.85 & 0.11 & 0.03 & 0.01 & 0.00159 & 0.02189 \\ 
%  CBS & 0.1 & 0.00 & 0.75 & 0.14 & 0.09 & 0.02 & 0.00185 & 0.02269 \\ 
%  unbalhaar & 0.1 & 0.00 & 0.42 & 0.23 & 0.12 & 0.23 & 0.00231 & 0.02485 \\ 
\hline
 modified SMUCE& 0.3 & 0.27 & \textbf{0.73} & 0.00 & 0.00 & 0.00 &{0.00435} & \textbf{0.03084} \\ 
 SMUCE& 0.3 & 0.00 & 0.29 & 0.34 & 0.24 & 0.13 &  \textbf{0.00277} & 0.03229 \\ 
%  mBIC \citep{ZhaSie07} & 0.3 &0.00 & 0.32 & 0.22 & 0.18 & 0.28 & 0.00363 & 0.03323 \\ 
%  CBS \citep{OlsVenLucWig04} & 0.3 &0.00 & 0.46 & 0.15 & 0.20 & 0.19 & 0.00325 & 0.03071 \\ 
%  unbalhaar & 0.3 & 0.00 & 0.05 & 0.06 & 0.08 & 0.81 & 0.00505 & 0.04053 \\ 
   \hline
\end{tabular}
\caption{\ha{Frequencies of estimated number of change-points and MISE by model
selection for the modified SMUCE and SMUCE.}}
\label{diss:depend:sim}
\end{table}
\end{example}

\kl{The example illustrates that SMUCE as in \eqref{intro:optprob} can be
successfully applied to the case of dependent data after an adjustment of the underlying
multiscale statistic $T_n$ to the dependence structure. The asymptotic null-distribution of this
modified multiscale statistic is certainly not obvious and postponed to future work.}

\ha{
\subsection{Scale-calibration of $T_n$}\label{subsec:diss:pen}
The penalization of different scales as in \eqref{intro:mrstat} is borrowed from \citep{DueSpok01} and calibrates the number of intervals on a given scale. This prevents the small intervals to dominate the statistic. For this purpose, one might also consider the statistic
\begin{equation*}
T^1_n(Y,\vt) = \max_{\substack{1\leq i<j\leq n \\ \vt(t) = \theta \text{ for } t
\in [i\slash n, j\slash n]}} \left( \frac{  T_i^j(Y,\theta) -
2\log\frac{n}{j-i+1}} {\log\log \frac{e^e n}{j-i+1}} \right),
\end{equation*}
which is finite a.s. as $n\rightarrow \infty$ (see again \citep[Theorem 6.1]{DueSpok01} or \citep{SchMunDue11}).
A multiscale statistic without scale calibration
\begin{equation*}
T^2_n(Y,\vt) = \max_{\substack{1\leq i<j\leq n \\ \vt(t) = \theta \text{ for } t
\in [i\slash n, j\slash n]}} T_i^j(Y,\theta).
\end{equation*}
was e.g. considered in \citep{DavHoeKra12}. We illustrate the calibration effect of the statistics $T_n$, as in \eqref{intro:mrstat}, $T_n^1$ and $T_n^2$ in Figure \ref{dis:fig:diffpen}. The graphic shows the frequencies at which the corresponding $0.75$-quantiles of the statistics $T_n$, $T_n^1$ and $T_n^2$ is exceeded at a certain scale (scales are displayed on the $x$-axis). It can be seen, that $T_n^2$ puts much emphasis on small scales, while the penalized statistics $T_n$ and $T_n^1$ distribute the scales more uniformly.
For our purposes this calibration is beneficial in two ways: First it is required to obtain the optimal detection rates in Theorem \ref{gauss:opt:nojumpmissing} and Theorem \ref{gauss:opt:nojumpmissing} as it was shown in \citep{ChaWal11}. Second, the asymptotical behavior is determined by a process of the type \eqref{limit:mainres} and not by a extreme value limit as to be expected in the uncalibrated case, where the maximum is attained 
at scales of the magnitude $\log n$ with high probability (see \citep[Theorem 3.1 and the proof of Theorem 1.1]{KabMun08}) in accordance with Figure \ref{dis:fig:diffpen}.
}
\begin{figure}
 \includegraphics[width = 0.75 \columnwidth]{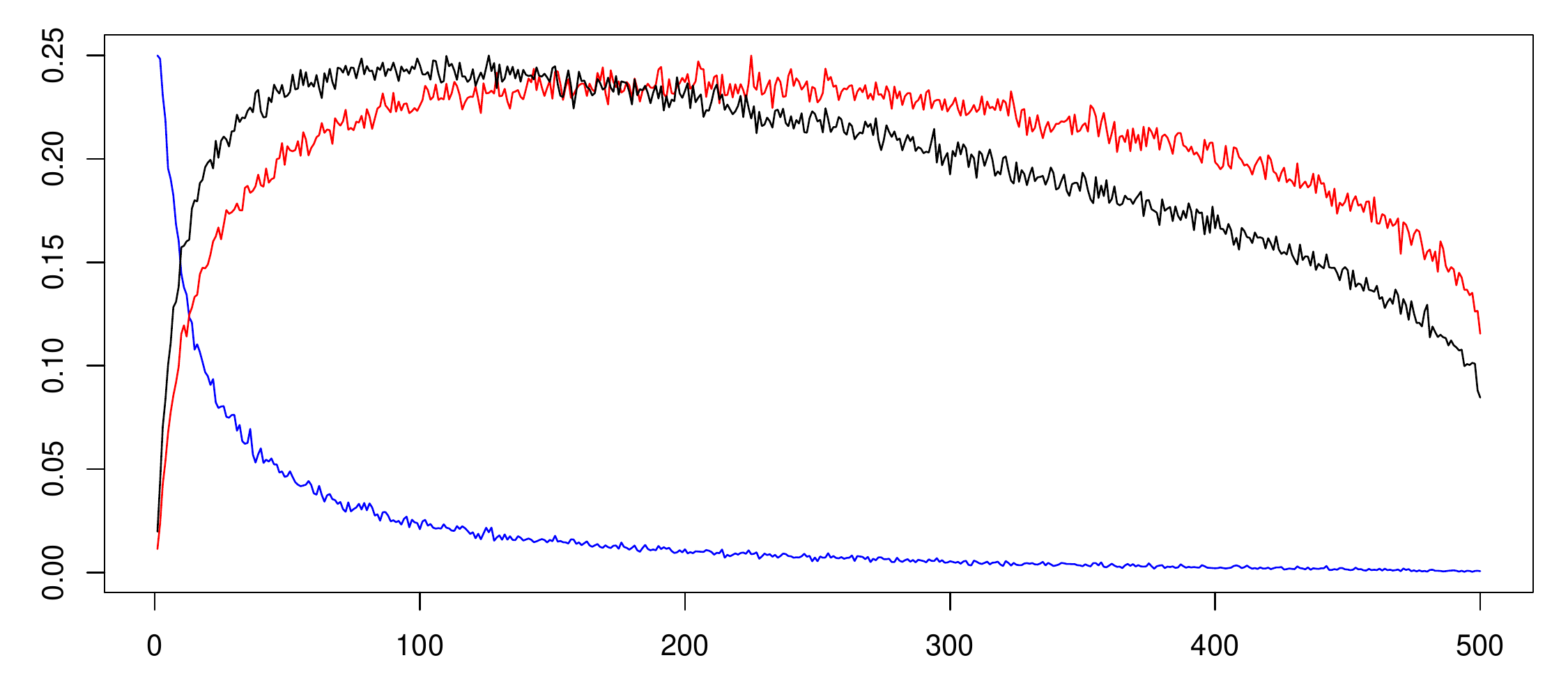}
 \caption{\ha{Frequencies of violations of the multiscale constraint for the different multiscale statistics $T_n$ (black), $T_n^1$ (red) and $T_n^2$ (blue)} obtained from 10.000 simulations on certain scales (scales are on the $x$-axis).}
 \label{dis:fig:diffpen}
\end{figure}

\subsection{SMUCE from a linear models perspective}
For normal mean regression one may rewrite the change-point regression model in \eqref{gauss:opt:model}
as a linear model
\begin{equation*}
 Y=X\beta + \sigma \epsilon,
\end{equation*}
where $\beta_i =\vartheta_i-\vartheta_{i-1}$
denotes the jump heights. If we add a vector of ones and a coefficient
$\beta_0$
to define the offset of the function, then
$X$ is an $(n\times n)$ upper triangular matrix with entries $X_{i,j}= 1$,
$i\geq j$ and zero else.
Hence, in the terminology of high dimensional linear models, we have an
``$n=p$'' problem in contrast to
the ``$p\gg n$'' situation which has perceived enormous attention during the last
two decades.
If we rescale by $1/\sqrt{n}$, then we find that $X^t X/n = \min (i,j)/n$ tends to
the covariance function of a standard Brownian motion.
From this limiting covariance it becomes immediately clear that
assumptions like the restricted isometry property and related conditions (see \citep{BueGee11,CanTao07,MeiYu09}) 
fail without additional restrictions, e.g. \am{an $s$-sparseness ($s\ll p$)} assumption on the jump locations. 
For a thorough discussion see \citep{BoyKemLieMunWit09} or the Appendix in \citep{HarLev10}.
Roughly speaking, these assumptions guarantee that estimators which are based on minimizing $\ell_0(\beta)$,
i.e. the number of jumps, can be obtained  by
the $\ell_1(\beta)$ surrogate with large probability. This is not the case in our set up when the number of jumps can be arbitrarily large.
This may be taken as a rough explanation for the empirical observations that $TV$ and $\ell_1$ penalization method do not perform 
competitive in the multiscale framework discussed in this paper for estimating location and number of change-points, as they built in too many little jumps. SMUCE employs a weaker notion of sparsity, i.e. $s=n=p$.
\subsection{Risk measures}
SMUCE aims to maximize the probability of correctly specifying the number of
jumps $P(\hat K = K)$
uniformly over sequences of models such that $\lambda_n \Delta_n^2$ tends to
zero not as fast as $\log n/n$.
This is conceptually very different from optimizing $\hat\vt$ w.r.t. convex risk measures such as the
mean squared error and related concepts.
The latter measures do not primarily target on the jump locations and number
of jumps. Therefore, we argue that in those applications, where the primary focus is on the jump locations  SMUCE may
be advantageous. In fact, maximizing the probability of correctly estimating the number of
jumps as SMUCE advocates has some analogy to risk measures for variable selection problems, shown to perform adequately successful in high
dimensional models. This includes the false discovery rate (FDR) \citep{BenHoc95} and related
ideas (see e.g. \citep{GenWas04}). Whereas in our context the latter ones aim
to minimize the expected relative number of wrongly selected change-points, SMUCE
is able to give at the same hand a guarantee that the true change-points will be detected
with large probability and hence controls the false acceptance rate (FAR) as well.

\subsection{Computational costs}
In \citep{KillFeaEck12} the authors showed that their pruned exact linear time 
method leads to an algorithm which expected complexity is linear in $n$ in some
cases.
As stressed in Section \ref{sec:alg}, our algorithm includes similar pruning steps.
Due to the complicated structure of the cost functional, however, it seems impossible
to prove such a result for the computation of SMUCE. The computation can, of
course, be further reduced significantly if e.g. only intervals of dyadic lengths 
are incorporated into the multiscale statistic.
Since the dynamic approach leads to a recursive computation, SMUCE can be updated in linear time,
 if applied to sequential data.
Another interesting strategy to reduce the computational costs could be adapted from \citep{Wal10,RivWal12}
 who suggest to restrict the multiscale constraint to a specific system of intervals of size $\bigo (n)$ which still
 guarantees optimal detection.

\subsection{The choice of $\alpha$}
We have offered a strategy to select the threshold $q=q_\alpha$ and hence the
confidence level $\alpha$
 in a sensible way to minimize $\Prob (\hat K \neq K)$, by balancing the
probabilities of over- and underestimation of $K$, simultaneously. This is based
on the inequalities in Section \ref{sim:thresh} depending on
$\lambda,\Delta$ and $n$. As indicated in Figures
\ref{intro:example},\ref{data:CGH:29:1},\ref{data:CGH:31:1}
and \ref{sim:real:est} this can be used to consider the evolution of SMUCE
depending on $\alpha$ as a universal ``objective'' smoothing parameter. The
features (jumps) of each SMUCE given $\alpha$ then may be regarded as ``present
with certain confidence'' similar in spirit to ideas underlying siZer (see
\citep{ChaMar99, ChaMar00}). It is striking that in many simulations we found
that features (jumps) remain persistent for a large range of levels $\alpha$. Of
course, other strategies to balance $\Prob(\hat K(q)>K)$ and $\Prob(\hat
K(q)<K)$ are of interest, e.g. if one of these probabilities is considered as
less important. For a first screening of jumps, $\Prob(\hat K(q)>K)$ is the less
serious error and $\Prob(\hat K(q)<K)$ should be minimized primarily. This can
be achieved by optimizing the convex combination $\delta \Prob(\hat K(q)>K) +
(1-\delta) \Prob(\hat K(q)<K)$ for a weight $\delta$ close to 1 along the lines
described in Section \ref{sim:thresh}.

\section*{Acknowledgments.} Klaus Frick, Axel Munk and Hannes Sieling were supported by DFG/SNF grant FOR 916. Axel Munk was also supported by CRC 803, CRC 755 and Volkswagen Foundation. This paper benefited from discussions with colleagues. We specifically would like to acknowledge L.D. Brown, T. Cai, L. Davies, L. D\"umbgen,
 E. George, C. Holmes, T. Hotz, S. Kou, O. Lepski, R.Samworth, D.Siegmund, A.Tsybakov and G. Walther. Various helpful comments and suggestions of screeners and reviewers of JRRS-B are gratefully acknowledged.

\bibliographystyle{chicago}
\bibliography{literature}

\newpage
\setcounter{page}{1}

\section{Supplement to ``Multiscale change-point inference''}\label{sec:proofs}

\maketitle

In this supplement we collect the proofs of the main assertions in the paper
together with some auxiliary lemmas. We further give more general versions of some results in the paper.  

\subsection{Large deviation and power estimates}

We begin by recalling some large deviation results for
exponential families. By $D(\theta||\tilde\theta)$ we will denote the
\emph{Kullback-Leibler divergence} of $F_\theta$ and $F_{\tilde\theta}$, i.e.
\begin{equation}\label{proofs:kl}
D(\theta||\tilde\theta) = \int_\R f_\theta(x)
\log\frac{f_\theta(x)}{f_{\tilde\theta}(x)} \diff\nu(x) = \psi(\tilde \theta) - \psi(\theta) - (\tilde\theta -
\theta)m(\theta). %= \int_\theta^{\tilde \theta} (\tilde \theta - t)v(t)\diff t.
\end{equation}
With the techniques used in \citepsupp[Thm.7.1]{Bro86} it is
readily seen that for a sequence of independent and $F_\theta$-distributed r.v.
$Y_1,\ldots,Y_n$ one has that
\begin{equation}\label{proofs:largedev}
\Prob\left(\overline{Y} - m(\theta)\geq \eta\right) \leq e^{n (D(\theta||\theta
+ \eps) - \eta\eps)}
\end{equation}
for all $\eps>0$ such that $\theta+\eps\in \Theta$.
The following restatement of inequality \eqref{proofs:largedev} turns out to be
very useful.
\begin{lem}\label{proofscons:bound2}
Let $Y = (Y_1,\ldots,Y_n)$ be independent random variables such that $Y_i\sim
F_{\theta}$ and assume that $\delta > 0$ is such that $\theta + \delta \in
\Theta$. Then, 
\begin{equation*}
\Prob(m^{-1}(\overline{Y}) \geq \theta + \delta) \leq e^{-n D(\theta + \delta
|| \theta)}.
\end{equation*}
\end{lem}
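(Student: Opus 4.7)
The plan is to reduce the statement directly to the large deviation bound \eqref{proofs:largedev} that has just been recalled from \citepsupp{Bro86}. Since $m$ is strictly increasing on $\Theta$, the event $\{m^{-1}(\overline Y) \geq \theta + \delta\}$ coincides with $\{\overline Y \geq m(\theta + \delta)\} = \{\overline Y - m(\theta) \geq \eta\}$, where $\eta := m(\theta + \delta) - m(\theta)$. Strict convexity of $\psi$ guarantees $\eta > 0$, so \eqref{proofs:largedev} is applicable. Thus the work reduces to choosing the free parameter $\eps$ appearing in \eqref{proofs:largedev} so that the exponent in the resulting tail bound matches $-D(\theta + \delta \Vert \theta)$.

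The natural choice is $\eps = \delta$, which is admissible by assumption. Then \eqref{proofs:largedev} gives
\begin{equation*}
\Prob(m^{-1}(\overline Y) \geq \theta + \delta)
= \Prob(\overline Y - m(\theta) \geq \eta)
\leq \exp\!\bigl(n\,[D(\theta \Vert \theta + \delta) - \eta\,\delta]\bigr).
\end{equation*}

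The only remaining step is a short algebraic identity using the explicit form in \eqref{proofs:kl}: expanding $D(\theta\Vert \theta + \delta) = \psi(\theta + \delta) - \psi(\theta) - \delta\, m(\theta)$ and substituting $\eta = m(\theta + \delta) - m(\theta)$ yields
\begin{equation*}
D(\theta \Vert \theta + \delta) - \eta\,\delta
= \psi(\theta + \delta) - \psi(\theta) - \delta\,m(\theta + \delta)
= -D(\theta + \delta \Vert \theta),
\end{equation*}
where the last equality is obtained by applying \eqref{proofs:kl} once more, now with the roles of the two parameters swapped. Plugging this back gives the claimed bound.

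I do not expect any real obstacle here: the regularity/minimality assumptions on $\mathcal F$ ensure that $m$ is a strictly increasing bijection of $\Theta$ onto its image, so the inversion step is automatic, and the identity reducing $D(\theta\Vert\theta+\delta) - \eta\delta$ to $-D(\theta+\delta\Vert\theta)$ is a direct computation from \eqref{proofs:kl}. The lemma is essentially a convenient repackaging of \eqref{proofs:largedev} in terms of the inverse-mean parametrization that will be used throughout the underestimation proofs.
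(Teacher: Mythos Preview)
Your proof is correct and essentially identical to the paper's own argument: both rewrite the event via monotonicity of $m$, apply \eqref{proofs:largedev} with the choice $\eps=\delta$, and then use the algebraic identity from \eqref{proofs:kl} to convert $D(\theta\Vert\theta+\delta)-\eta\delta$ into $-D(\theta+\delta\Vert\theta)$. The only difference is cosmetic---you name $\eta$ and make the choice $\eps=\delta$ explicit, whereas the paper substitutes directly.
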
 
\begin{proof}
First observe that according to \eqref{proofs:largedev}
\begin{align*}
\Prob(m^{-1}(\overline{Y}) \geq \theta + \delta) & = \Prob(\overline Y -
m(\theta) \geq m(\theta + \delta) - m(\theta)) \\
& \leq \exp(n(D(\theta||\theta + \delta) - (m(\theta + \delta) -
m(\theta))\delta)).
\end{align*}
Now it follows from \eqref{proofs:kl} that
\begin{align*}
D(\theta||\theta + \delta) - (m(\theta + \delta) - m(\theta))\delta & =
\psi(\theta+\delta) - \psi(\theta) - m(\theta + \delta)\delta \\
& = -(\psi(\theta) - \psi(\theta + \delta) - (\theta - (\theta +
\delta))m(\theta + \delta)) \\
& = -D(\theta + \delta||\theta). 
\end{align*}
\end{proof}
From \eqref{proofs:largedev} we further derive a basic power estimate for the
likelihood ratio statistic
\eqref{intro:likeratstat}.

\begin{lem}\label{proofs:power}
Let $Y = (Y_1,\ldots,Y_n)$ be independent random variables such that $Y_i\sim
F_{\theta}$ and assume that $\delta \in \R$ is such that $\theta + \delta \in
\Theta$. Then, 
\begin{equation*}
\Prob\left( T_1^n(Y, \theta + \delta) \geq q \right) \geq 1 - \exp\left(
n  \inf_{\eps\in[0,\delta]} \left[
 D(\theta||\theta + \eps)-
 \frac{\eps}{\delta} D(\theta||\theta + \delta)+\frac{\eps q}{n \delta}  \right]
 \right).
\end{equation*}
\end{lem}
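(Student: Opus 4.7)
The plan is to reduce the event $\{T_1^n(Y,\theta+\delta)<q\}$ to a one-sided large-deviation event for $\overline Y=\overline Y_1^n$ and then apply the Chernoff-type bound \eqref{proofs:largedev}. The only non-trivial ingredient is the elementary inequality $\phi(x)\geq \theta x-\psi(\theta)$, valid for every $x\in\R$ by the very definition of the Legendre--Fenchel conjugate $\phi$ of $\psi$. Applied to $\overline Y$ this yields
\begin{equation*}
J(\overline Y,\theta+\delta)=\phi(\overline Y)-(\theta+\delta)\overline Y+\psi(\theta+\delta)\;\geq\; D(\theta||\theta+\delta)-\delta\bigl(\overline Y-m(\theta)\bigr),
\end{equation*}
where the last identity uses \eqref{proofs:kl}. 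Assuming $\delta>0$ without loss of generality (the case $\delta<0$ is symmetric, producing a left-tail deviation), this turns $\{T_1^n(Y,\theta+\delta)<q\}=\{J(\overline Y,\theta+\delta)<q/n\}$ into a subset of $\{\overline Y-m(\theta)>\eta\}$, where $\eta=\bigl(D(\theta||\theta+\delta)-q/n\bigr)/\delta$.

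Next I invoke \eqref{proofs:largedev} on this deviation event. For every $\eps\in(0,\delta]$ one has $\theta+\eps\in\Theta$ by convexity of the natural parameter space, hence
\begin{equation*}
\Prob\bigl(\overline Y-m(\theta)\geq\eta\bigr)\;\leq\; \exp\!\Bigl(n\bigl[D(\theta||\theta+\eps)-\eta\eps\bigr]\Bigr)\;=\;\exp\!\left(n\Bigl[D(\theta||\theta+\eps)-\tfrac{\eps}{\delta}D(\theta||\theta+\delta)+\tfrac{\eps q}{n\delta}\Bigr]\right).
\end{equation*}
Combining the last two displays, taking the infimum of the exponent over $\eps\in[0,\delta]$ (the endpoint $\eps=0$ adds the trivial bound $\exp(0)=1$ and so never worsens the infimum), and passing to the complement through $\Prob(A)\geq 1-\Prob(A^c)$ produces the claimed inequality.

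I do not anticipate any serious obstacle. The only points requiring a moment of care are the sign-of-$\delta$ case split and the observation that the bound is informative precisely when $\eta\geq 0$, i.e.\ when $q/n\leq D(\theta||\theta+\delta)$; in the complementary regime the infimum is attained at $\eps=0$ and the statement reduces to the tautology $\Prob(\,\cdot\,)\geq 0$, which is the correct degenerate behaviour.
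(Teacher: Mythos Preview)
Your proof is correct and follows essentially the same route as the paper. The only cosmetic difference is that you invoke the Fenchel inequality $\phi(\overline Y)\geq\theta\overline Y-\psi(\theta)$ directly, whereas the paper first writes the identity $J(\overline Y,\theta+\delta)=J(\overline Y,\theta)-\delta\overline Y-\psi(\theta)+\psi(\theta+\delta)$ and then uses $J(\overline Y,\theta)\geq 0$; these are of course the same inequality in different clothing, and the remainder of the argument (reduction to a one-sided deviation of $\overline Y$, application of \eqref{proofs:largedev}, and optimization over $\eps$) is identical.
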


\begin{proof}
For
\begin{align*}
J(\overline Y, \theta) & = \phi(\overline Y) - \left( \overline
Y \theta - \psi(\theta) \right)
\end{align*}
we obtain
\begin{equation}\label{proofs:consaux1}
J(\overline Y, \theta + \delta) = J(\overline Y, \theta) -
\delta \overline Y - \psi(\theta) + \psi(\theta + \delta). 
\end{equation}
Thus, we have
\begin{align*}
\Pi(q,n,\delta) & := \Prob\bigl(T_1^n(Y, \theta + \delta) 
 \geq q\bigr)\\ & = \Prob\left( J(\overline
Y, \theta + \delta) \geq \frac{q }{n}
\right) \\
& = \Prob\left( J(\overline Y, \theta) -\delta \overline Y
  \geq \frac{q}{n} -
\psi(\theta + \delta) + \psi(\theta)\right) \\
& \geq \Prob\left( -\delta \overline Y
  \geq \frac{q}{n} -
\psi(\theta + \delta) + \psi(\theta)\right),
\end{align*}
where in the last inequality holds since $J(x,\theta)\geq 0$
for all $x\in\R$ and $\theta\in\Theta$. Now, let us first assume that $\delta >
0$. Then by \eqref{proofs:kl} we find
\begin{align}\label{proofs:consaux2}
\Prob\left( -  \delta\overline Y  \geq \frac{q}{n} - \psi(\theta + \delta) + \psi(\theta)\right) 
  & = \Prob\left( \overline Y - m(\theta)\leq - \frac{q}{\delta n } +
  \frac{D(\theta||\theta + \delta)}{\delta}\right).
\end{align}
Combining this
with the large deviation inequality \eqref{proofs:largedev} yields
\begin{align*}
\Pi(q,n,\delta) & \geq 1-\exp\left(n( D(\theta||\theta + \eps) -
{\eps\over\delta}D(\theta||\theta + \delta)) + {\eps q \over \delta} \right),
\end{align*}
for all $0\leq \eps\leq \delta$. The case when $\delta < 0$ follows analogously.
\end{proof}

For Gaussian observations the estimate can be made explicit.

\begin{lem}\label{proofs:power:gauss}
Let $Y_1,\ldots,Y_n$ be i.i.d. random variables such that $Y_1\sim
\mathcal{N}(0,1)$ and let $x_+ = \max(0,x)$ for $x \in \R$. Then,
\begin{equation}\label{proofs:gauss:handyest}
\Prob\left( T_1^n(Y,\delta) \geq q \right) \geq 1 -
\exp\left(-\frac{1}{8}\left(\sqrt{n}\delta - \sqrt{2q}\right)_+^2\right).
\end{equation}
\end{lem}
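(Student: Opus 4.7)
The plan is to specialize Lemma~\ref{proofs:power} to the standard Gaussian family $F_\theta = \mathcal{N}(\theta,1)$, applied with the background parameter $\theta = 0$. Since $\psi(\theta) = \theta^2/2$, the Kullback-Leibler divergence collapses to $D(\theta_1\|\theta_2) = (\theta_1-\theta_2)^2/2$, so the infimum in Lemma~\ref{proofs:power} becomes
\begin{equation*}
\inf_{\eps\in[0,\delta]}\left[\frac{\eps^2}{2} - \frac{\eps\delta}{2} + \frac{\eps q}{n\delta}\right].
\end{equation*}

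First I would dispatch the trivial regime: if $\sqrt{n}\delta \leq \sqrt{2q}$, then $(\sqrt{n}\delta - \sqrt{2q})_+ = 0$, the right-hand side of \eqref{proofs:gauss:handyest} equals $0$, and there is nothing to show. In the remaining case $\sqrt{n}\delta > \sqrt{2q} \geq 0$, in particular $\delta > 0$, so the above quadratic in $\eps$ attains its unconstrained minimum at
\begin{equation*}
\eps^* = \frac{\delta}{2} - \frac{q}{n\delta} = \frac{n\delta^2 - 2q}{2n\delta} \in (0,\delta),
\end{equation*}
with minimal value $-(\eps^*)^2/2$.

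Next I would reorganise the resulting exponent. Setting $u = \sqrt{n}\delta$ and $v = \sqrt{2q}$, the factorisation $n\delta^2 - 2q = u^2 - v^2 = (u-v)(u+v)$ gives
\begin{equation*}
n \cdot \inf_{\eps\in[0,\delta]}\left[\frac{\eps^2}{2} - \frac{\eps\delta}{2} + \frac{\eps q}{n\delta}\right] = -\frac{n(\eps^*)^2}{2} = -\frac{(u-v)^2(u+v)^2}{8u^2}.
\end{equation*}
Since $u \geq v \geq 0$ and $u > 0$, one has $(u+v)^2 \geq u^2$, so the exponent is bounded above by $-(u-v)^2/8 = -(\sqrt{n}\delta - \sqrt{2q})_+^2/8$.

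Plugging this into the conclusion of Lemma~\ref{proofs:power} yields the desired inequality. The only subtle step is the factorisation/bound $(u+v)^2/u^2 \geq 1$ that turns the infimum value (which by itself gives a slightly stronger but uglier bound) into the clean form $-(\sqrt{n}\delta - \sqrt{2q})_+^2/8$; everything else is routine calculus and sign bookkeeping.
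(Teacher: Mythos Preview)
Your proof is correct and follows the same route as the paper: specialise Lemma~\ref{proofs:power} to the Gaussian case with $D(\theta\|\theta+\eps)=\eps^2/2$, minimise the resulting quadratic in $\eps$, and then coarsen the exact value $-\frac{1}{2}\bigl(\tfrac{\delta\sqrt{n}}{2}-\tfrac{q}{\delta\sqrt{n}}\bigr)^2$ to $-\frac{1}{8}(\sqrt{n}\delta-\sqrt{2q})^2$ via $(u+v)^2\geq u^2$. Your explicit treatment of the trivial regime $\sqrt{n}\delta\leq\sqrt{2q}$ and the check $\eps^*\in(0,\delta)$ are details the paper leaves implicit, but the argument is otherwise identical.
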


\begin{proof}
Since $D(\theta||\theta + \eps) = \eps^2\slash 2$ we find that
\begin{equation*}
\inf_ {\eps\in[0,\delta]} n \left[ D(\theta||\theta + \eps)-
 \frac{\eps}{\delta} D(\theta||\theta + \delta)+\frac{\eps q}{n \delta}  \right]
 =  - \frac{1}{2} \left( \frac{\delta\sqrt n}{2}-\frac{q}{\delta \sqrt n}
 \right)^2 \leq -\frac{1}{8}\left(\sqrt{n}\delta - \sqrt{2q}\right)^2,
\end{equation*}
if $\sqrt{n}\delta \geq \sqrt{2q}$.
\end{proof}

\subsection{Proof of Theorem \ref{limit:mainthm}}

Throughout this section we will assume that $Y=(Y_1,\ldots,Y_n)$ are
independent and identically distributed random variables with $Y_1\sim F_\theta$
and $\theta\in \Theta$. Without loss of generality we will assume that
$m(\theta) = \dot\psi(\theta) = 0$ and $v(\theta) = \ddot\psi(\theta) =
1$. Moreover, assume that $(c_n)_{n\in\N}$ satisfies \eqref{def:lowerbound} and
introduce $\Ic= \{ (i,j) : j-i+1 \geq  c_n n\}$. We start with some
approximation results for the extreme value statistic of the partial sums $\overline Y_i^j$.

\begin{lem}\label{proofs:kmt}
There exist i.i.d standard normally distributed r.v. $Z_1,\ldots,Z_n$ on the
same probability space as $Y_1,\dots,Y_n$ such that
\begin{equation*}
\lim_{n\ra\infty}\sqrt{\log n}\max_{(i,j)\in \Ic}
\left(\sqrt{j-i+1}\abs{\bigl|\overline Y_i^j\bigr| -\bigl|\overline Z_i^j\bigr|}\right) = 0 \quad \text{a.s.}
\end{equation*}
\end{lem}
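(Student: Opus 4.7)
The plan is to reduce the statement to the classical Komlós–Major–Tusnády strong approximation theorem. Since the exponential family is regular with $m(\theta)=0$ and $v(\theta)=1$ (by the normalization at the beginning of the section), the cumulant generating function of $Y_1$ is finite in an open neighborhood of the origin. This is precisely the moment condition required by the KMT theorem, which supplies, on a sufficiently rich probability space, i.i.d.\ standard normal variables $Z_1,\ldots,Z_n$ such that the partial sums $S_k=Y_1+\cdots+Y_k$ and $T_k=Z_1+\cdots+Z_k$ satisfy
\begin{equation*}
\max_{1\leq k\leq n}\abs{S_k-T_k} = \bigo(\log n)\quad \text{a.s.}
\end{equation*}

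The second step is purely algebraic. Writing $\overline Y_i^j-\overline Z_i^j = \bigl((S_j-T_j)-(S_{i-1}-T_{i-1})\bigr)/(j-i+1)$ and using the reverse triangle inequality $\bigl|\,|\overline Y_i^j|-|\overline Z_i^j|\,\bigr|\leq |\overline Y_i^j-\overline Z_i^j|$, one obtains
\begin{equation*}
\sqrt{j-i+1}\,\bigl|\,|\overline Y_i^j|-|\overline Z_i^j|\,\bigr|
\;\leq\;\frac{2\,\max_{1\leq k\leq n}|S_k-T_k|}{\sqrt{j-i+1}}.
\end{equation*}
Taking the maximum over $(i,j)\in\mathcal I(c_n)$ and using that $j-i+1\geq c_n n$ on this index set, the right-hand side is bounded by $2\max_k|S_k-T_k|/\sqrt{c_n n} = \bigo(\log n/\sqrt{c_n n})$ almost surely.

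The final step multiplies by $\sqrt{\log n}$ and invokes the standing assumption \eqref{def:lowerbound}: the resulting bound is of order $\sqrt{\log^3 n/(c_n n)}$, which tends to zero by hypothesis. I do not anticipate any serious obstacle; the only delicate point is citing KMT in the appropriate strength (for sums of i.i.d.\ random variables with a finite m.g.f.\ in a neighborhood of zero, see Komlós–Major–Tusnády 1975/76), and observing that the approximation can be constructed once on an enlarged probability space so that the same $(Z_i)$ serves all indices $(i,j)$ simultaneously.
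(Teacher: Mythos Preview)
Your proposal is correct and follows essentially the same route as the paper's own proof: reduce to partial sums, apply the reverse triangle inequality, invoke the KMT strong approximation (yielding $\max_k|S_k-T_k|=\bigo(\log n)$ a.s.\ under the finite m.g.f.\ condition guaranteed by the exponential family), divide by $\sqrt{j-i+1}\geq\sqrt{c_n n}$, and conclude via \eqref{def:lowerbound}. If anything, you are slightly more explicit than the paper in spelling out the moment hypothesis needed for KMT and the resulting $\sqrt{\log^3 n/(c_n n)}$ rate.
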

\begin{proof}
We define the partial sums $S_0^Y= 0$ and $S_l^Y = Y_1+\ldots+ Y_l$
 and observe that $(j-i+1)\bigl|\overline Y_i^j\bigr| = \abs{S_j^Y -
S_{i-1}^Y}$. Analogously we define $S_l^Z$. Now let $(i,j)$ such
that $j-i+1\geq n c_n$ and observe that
\begin{equation*}
\abs{\frac{\abs{S^Y_j - S^Y_{i-1}}}{\sqrt{j-i+1}} - \frac{\abs{S^Z_j -
S^Z_{i-1}}}{\sqrt{j-i+1}}}  
\leq  \frac{\abs{S^Y_j - S^Z_j}}{\sqrt{n c_n}} + \frac{\abs{S^Y_i -
S^Z_i}}{\sqrt{n c_n}} \leq   2\max_{0\leq l \leq n} \frac{\abs{S^Y_l -
S^Z_l}}{\sqrt{nc_n}}.
\end{equation*}
It follows from the KMT inequality \citepsupp[Thm.
1]{KomMajTus76} and \eqref{def:lowerbound} that
\begin{equation*}
\sqrt{\log n}\max_{0\leq l \leq n} \frac{\abs{S^Y_l -
S^Z_l}}{\sqrt{n c_n}} =  \smallo(1) \quad \text{a.s.}
\end{equation*}
\end{proof}

\begin{lem}\label{proofs:taylor}
\begin{equation*}
 \max_{(i,j)\in \Ic}
 \abs{ \sqrt{2T_i^j(Y,\theta)} - \sqrt{j-i+1} \bigl| \overline Y_i^j
 \bigr|} = \smallo_{\Prob}(1)
\end{equation*}
\end{lem}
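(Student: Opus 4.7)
The plan is to reduce the claim to a quadratic approximation of $J(\cdot,\theta)$ and then combine this with a uniform high-probability bound on the scan statistic $\max_{(i,j)\in\mathcal{I}(c_n)}\sqrt{j-i+1}\bigl|\overline Y_i^j\bigr|$.

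First I would exploit smoothness of $J$. Because $\mathcal{F}$ is regular and minimal, $\psi$ is smooth and strictly convex on $\Theta$, so its Legendre--Fenchel conjugate $\phi$ is smooth on the interior of $m(\Theta)$. Using $\phi'=m^{-1}$ and $\phi''=1/(v\circ m^{-1})$ together with the normalization $m(\theta)=0$, $v(\theta)=1$, one finds $J(0,\theta)=0$, $\partial_x J(0,\theta)=m^{-1}(0)-\theta=0$ and $\partial_x^2 J(0,\theta)=1$. Hence there exist a compact neighborhood $U$ of $0$ in $\mathrm{int}(m(\Theta))$ and a constant $C>0$ such that $\bigl|2J(x,\theta)-x^2\bigr|\leq C|x|^3$ for all $x\in U$. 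On the event $\{\overline Y_i^j\in U\}$, using $|\sqrt{a}-\sqrt{b}|=|a-b|/(\sqrt{a}+\sqrt{b})$ and the crude lower bound $\sqrt{2T_i^j(Y,\theta)}+\sqrt{j-i+1}\bigl|\overline Y_i^j\bigr|\geq \sqrt{j-i+1}\bigl|\overline Y_i^j\bigr|$, I get
\begin{equation*}
\bigl|\sqrt{2T_i^j(Y,\theta)}-\sqrt{j-i+1}\bigl|\overline Y_i^j\bigr|\bigr|
\leq \frac{(j-i+1)\bigl|2J(\overline Y_i^j,\theta)-(\overline Y_i^j)^2\bigr|}{\sqrt{j-i+1}\bigl|\overline Y_i^j\bigr|}
\leq C\sqrt{j-i+1}\,(\overline Y_i^j)^2.
\end{equation*}

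Next I would control the scan statistic. By Lemma \ref{proofs:kmt}, up to an a.s.\ $o(1/\sqrt{\log n})$ error uniformly over $\mathcal{I}(c_n)$, one may replace $\bigl|\overline Y_i^j\bigr|$ by the Gaussian analogue $\bigl|\overline Z_i^j\bigr|$. By the classical scale-calibrated multiscale bound for Gaussian partial sums (cf.\ the bound by $M$ in Theorem \ref{limit:mainthm} or \citepsupp{DueSpok01}), $\max_{(i,j)\in\mathcal{I}(c_n)}\bigl(\sqrt{j-i+1}\bigl|\overline Z_i^j\bigr|-\sqrt{2\log(en/(j-i+1))}\bigr)=O_\Prob(1)$, so
\begin{equation*}
\max_{(i,j)\in\mathcal{I}(c_n)}\sqrt{j-i+1}\bigl|\overline Y_i^j\bigr|=O_\Prob\bigl(\sqrt{\log n}\bigr).
\end{equation*}
Combined with the lower bound $j-i+1\geq nc_n$, this yields $\max_{(i,j)\in\mathcal{I}(c_n)}\bigl|\overline Y_i^j\bigr|=O_\Prob\bigl(\sqrt{\log n/(nc_n)}\bigr)=o_\Prob(1)$ by \eqref{def:lowerbound}. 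In particular, for $n$ large enough the event $\{\overline Y_i^j\in U\text{ for all }(i,j)\in\mathcal{I}(c_n)\}$ has probability tending to one, so the Taylor bound from the first step applies simultaneously to all indices with probability $1-o(1)$.

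Finally I would assemble the two estimates. On the good event,
\begin{equation*}
\max_{(i,j)\in\mathcal{I}(c_n)}\sqrt{j-i+1}\,(\overline Y_i^j)^2
=\max_{(i,j)\in\mathcal{I}(c_n)}\frac{\bigl(\sqrt{j-i+1}\bigl|\overline Y_i^j\bigr|\bigr)^2}{\sqrt{j-i+1}}
\leq \frac{O_\Prob(\log n)}{\sqrt{nc_n}},
\end{equation*}
which tends to zero in probability by \eqref{def:lowerbound} (which implies $\log^2 n/(nc_n)\to 0$). Thus the right-hand side of the Taylor bound is uniformly $o_\Prob(1)$, giving the lemma. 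The main obstacle is the uniformity of the Taylor remainder: it only holds while $\overline Y_i^j$ stays in a fixed compact set where $\phi$ is smooth, which is precisely why the KMT approximation plus scan-statistic bound on $\max\bigl|\overline Y_i^j\bigr|$, together with the growth condition \eqref{def:lowerbound} on $c_n$, are essential.
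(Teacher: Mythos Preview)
Your argument is correct and follows the same overall strategy as the paper: localize $\overline Y_i^j$ near $0$, use a third-order Taylor bound $|2J(x,\theta)-x^2|\leq C|x|^3$, and control the residual via an $O_\Prob(\sqrt{\log n})$ bound on the partial-sum scan statistic. The differences are in two places. First, for the scan statistic the paper appeals directly to Shao's theorem to get $L_n=\max_{i<j}\sqrt{j-i+1}\,|\overline Y_i^j|=O(\sqrt{\log n})$ almost surely for the original (non-Gaussian) observations, while you route through Lemma~\ref{proofs:kmt} and the Gaussian multiscale bound of \citepsupp{DueSpok01supp}. Both give what is needed, but Shao's theorem is more direct and avoids the slight circularity in your reference to Theorem~\ref{limit:mainthm} (which itself relies on the present lemma); your citation of D\"umbgen--Spokoiny is fine on its own, so this is only a presentational point. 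Second, for the square-root step the paper uses the elementary inequality $|\sqrt{a^2+b}-a|\leq \sqrt{|b|}$ to obtain a bound of order $\sqrt{(j-i+1)|\overline Y_i^j|^3}$, leading to $(L_n/\sqrt{\log n})^{3/2}(\log^3 n/(nc_n))^{1/4}$, whereas your conjugate-formula bound $C\sqrt{j-i+1}\,(\overline Y_i^j)^2$ is slightly sharper (by a factor $|\overline Y_i^j|^{1/2}$) and requires only $\log^2 n/(nc_n)\to 0$, which is weaker than \eqref{def:lowerbound}. Either route closes the argument.
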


\begin{proof}
Set $\xi = m^{-1}$ and note that $\xi$ is strictly increasing. Since
$\Theta$ is open, there exists for each given $\delta'>0$ a $\delta > 0$ such
that $\xi(B_\delta(0))\subset B_{\delta'}(\theta) \subset \Theta$. Next define
the random variable 
\begin{equation*}
L_n = \max_{1\leq i< j \leq n}\abs{\overline Y_i^j}\sqrt{j-i+1}.
\end{equation*}
Then it follows from Shao's Theorem \citepsupp{Sha95} that $L_n\slash \sqrt{\log n}$
converges a.s. to some finite constant and we hence find that
\begin{equation*}
 \max_{(i,j)\in \Ic}
\abs{\overline Y_i^j}   \leq \sqrt{\frac{\log n}{n c_n}} 
\frac{L_n}{\sqrt{\log n}} \ra 0\quad\text{ a.s.}
\end{equation*}
Thus, for each $\eps > 0$ there
exists an index $n_0 = n_0(\eps)\in\N$ such that for all $n\geq n_0$
\begin{equation*}
\Prob\left(\max_{(i,j) \in \substack{\Ic}}
\abs{\overline Y_i^j} \geq \delta\right) \leq \eps. 
\end{equation*}
In other words, $\xi(\overline Y_i^j) \in B_\ha{\delta}(\theta)$ uniformly over $\Ic$
with probability not less than $1-\eps$. Consequently, $\phi(\overline Y_i^j) =
\max_{\theta\in\Theta} \theta \overline Y_i^j - \psi(\theta) = \xi(\overline Y_i^j) \overline
Y_i^j - \psi(\xi(\overline Y_i^j))$ which in turn implies that
\begin{equation*}
J(\overline Y_i^j,\theta)  = \phi(\overline Y_i^j) - \theta \overline Y_i^j +
\psi(\theta) 
 = (\xi(\overline Y_i^j) - \theta)\overline Y_i^j - (\psi(\xi(\overline Y_i^j)) -
 \psi(\theta) ).
\end{equation*}
Taylor expansion of $\psi$ around $\theta$ gives (recall that
$\dot\psi(\theta) = 0$ and $\ddot\psi(\theta) = 1$)
\begin{equation*}
\psi(\xi(\overline Y_i^j)) - \psi(\theta) =  \frac{1}{2}(\xi(\overline Y_i^j) - \theta)^2 
+ \frac{1}{6} \dddot\psi(\tilde\theta)(\xi(\overline Y_i^j) - \theta)^3
\end{equation*}
for some $\tilde\theta \in B_\eps(\theta)$. This implies
\begin{equation*}
J(\overline Y_i^j,\theta) = (\xi(\overline Y_i^j) - \theta)(\overline Y_i^j)
- \frac{1}{2}(\xi(\overline Y_i^j) - \theta)^2  - \frac{1}{6}
\dddot\psi(\tilde\theta)(\xi(\overline Y_i^j) - \theta)^3.
\end{equation*}
Again, Taylor expansion of $\xi = m^{-1}$ around $0$ shows
\begin{equation*}
\xi(\overline Y_i^j) - \theta = \overline Y_i^j - \frac{\dddot
\psi(\tilde \theta)}{2(v(\tilde \theta))^2}(\overline Y_i^j)^2
\end{equation*}
for some  $\tilde\theta \in B_{\delta'}(\theta)$. This finally proves
that 
\begin{equation*}
2T_i^j(Y,\theta) = (j-i+1) J(\overline Y_i^j, \theta) = (j-i+1)(\overline
Y_i^j)^2 + (j-i+1)f_n(\overline Y_i^j)
\end{equation*}
where $f_n$ is such that $\bigl|f_n(\overline Y_i^j)\bigr|\leq C^2
\cdot(\overline Y_i^j)^3$ for a constant $C = C(\delta')>0$ (independent of
$\eps$, $i$ and $j$) and for all $n\geq n_0$. It thus holds with probability not
less than $1-\eps$ that
\begin{align*}
 \max_{(i,j)\in \Ic} \abs{\sqrt{2T_i^j(Y,\theta^*)} - \sqrt{j-i+1}\bigl|
\overline Y_i^j\bigr|} 
\leq & C \max_{(i,j)\in \Ic} \abs{(j-i+1)
\left(\overline Y_i^j\right)^3}^{1\slash 2}\\
= & C \max_{(i,j)\in \Ic} \abs{
\frac{\sum_{l=i}^j Y_l}{\sqrt{j-i+1} }(j-i+1)^{-1/6}}^{3\slash 2}\\
\leq & C \left(\frac{L_n}{\sqrt{\log n }} \right)^{3\slash 2}
\sqrt[4]{\frac{\log^{3}n}{{nc_n}}}.
\end{align*}
From Shao's Theorem it follows that the last term vanishes almost surely
as $n\ra\infty$.
\end{proof}

Combination of Lemma \ref{proofs:kmt} and \ref{proofs:taylor} yields

\begin{prop}\label{proofs:standard}
There exist i.i.d standard normally distributed r.v. $Z_1,\ldots,Z_n$ on the
same probability space as $Y_1,\dots,Y_n$ such that
\begin{equation*}
 \max_{(i,j)\in \Ic} \abs{ \sqrt{2T_i^j(Y,\theta)} - \sqrt{j-i+1} \bigl|\bar
 Z_i^j\bigr|} = \smallo_{\Prob}(1).
\end{equation*} 
\end{prop}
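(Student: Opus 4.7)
The plan is to obtain Proposition \ref{proofs:standard} by a direct triangle inequality combining the two preceding lemmas. I would first fix the coupling: take the i.i.d.\ standard normal random variables $Z_1,\ldots,Z_n$ on the same probability space as $Y_1,\ldots,Y_n$ that are provided by Lemma \ref{proofs:kmt}. This is the only nontrivial input at the measure-theoretic level, since Lemma \ref{proofs:taylor} does not involve the $Z_i$ at all, and we need the Gaussian sequence to coexist with $Y$ on one space for the final $\smallo_{\Prob}(1)$ statement to make sense.

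Then, for every pair $(i,j) \in \Ic$, I would write the triangle inequality
\begin{equation*}
\abs{\sqrt{2T_i^j(Y,\theta)} - \sqrt{j-i+1}\,\bigl|\overline{Z}_i^j\bigr|}
\leq \abs{\sqrt{2T_i^j(Y,\theta)} - \sqrt{j-i+1}\,\bigl|\overline{Y}_i^j\bigr|}
+ \sqrt{j-i+1}\,\abs{\bigl|\overline{Y}_i^j\bigr| - \bigl|\overline{Z}_i^j\bigr|}.
\end{equation*}
Taking the maximum over $(i,j) \in \Ic$ on both sides and using that the maximum of a sum is at most the sum of the maxima, it suffices to show that each of the two terms on the right, maximized over $\Ic$, is $\smallo_{\Prob}(1)$.

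The first term is exactly the object controlled by Lemma \ref{proofs:taylor}, which gives $\smallo_{\Prob}(1)$ directly. For the second term, Lemma \ref{proofs:kmt} gives an even stronger conclusion: the maximum multiplied by $\sqrt{\log n}$ converges to zero almost surely, so a fortiori the maximum itself converges to zero almost surely, hence in probability. Summing the two estimates yields the claimed uniform $\smallo_{\Prob}(1)$ bound and completes the proof.

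There is really no obstacle here beyond consistent bookkeeping: the serious analytic work (the KMT coupling and the Taylor expansion of $J(\cdot,\theta)$ around the true parameter, together with Shao's law for the extremes of normalised partial sums) has already been discharged in Lemmas \ref{proofs:kmt} and \ref{proofs:taylor}. The role of the assumption \eqref{def:lowerbound} on $c_n$ is hidden inside those lemmas and does not need to be invoked again.
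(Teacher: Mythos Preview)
Your proof is correct and matches the paper's approach exactly: the paper simply states that the proposition follows from the ``combination of Lemma \ref{proofs:kmt} and \ref{proofs:taylor}'' without further detail, and your triangle-inequality argument is precisely the intended combination.
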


% \begin{lem}\label{proofs:limitvanish}Let $B$ denote the standard Brownian
% motion. Then, for all $\eps>0$
% \begin{equation*}
% \lim_{c\ra 0^+}\Prob\left( \sup_{\substack{0\leq s< t \leq 1 \\ t-s <
% c}} \left(\frac{\abs{B(t) - B(s)}}{\sqrt{t-s}} -
% \sqrt{2\log\frac{e}{t-s}}\right) > \eps\right) = 0.
% \end{equation*}
% \end{lem}
% \begin{proof}
% Let $c\in (0,1)$, $r\geq 0$  and $D(x) = \log^{-1\slash 2}(e\slash x)\log\log(e^e\slash
% x)$ for $x>0$. Now set
% \begin{equation*}
% \Pi_{c,r} := \Prob\left( \frac{\abs{B(t) - B(s)}}{\sqrt{t-s}} -
% \sqrt{2\log \frac{e}{t-s}} > r D(t-s) \text{ for some } 0\leq s < t \leq 1,\ t
% -s <c \right). 
% \end{equation*}
% From \citepsupp[Thm. 6.1]{DueSpok01supp} (and the subsequent Remark 1) it then
% follows that $\Pi_{c,r}\leq \Pi_{1,r}\leq p(r)$, where $p(r)$ is such that
% $p(r)\ra 0$ as $r\ra \infty$. Note that $D$ is increasing on $[0,\infty)$ and that $D(x)\ra 0^+$ for
% $x\ra 0^+$. Now let $\eps> 0$ and choose $r = r(c) = \eps\slash D(c)$ and obtain
% \begin{equation*}
% \Prob\left( \sup_{\substack{0\leq s< t \leq 1 \\ t-s <
% c}} \left(\frac{\abs{B(t) - B(s)}}{\sqrt{t-s}} -
% \sqrt{2\log\frac{e}{t-s}}\right) > \eps \right) \leq  \Pi_{c,r(c)} \leq
% p(\eps\slash D(c)).
% \end{equation*}
% \end{proof}

\begin{lem}\label{proofs:contfunc}
For $n\in\N$, define the continuous functionals
$h,h_n:\mathcal{C}([0,1])\ra \R$  by
\begin{eqnarray*}
h(x,c) & = & \sup_{\substack{0 \leq s < t \leq 1 \\ t-s\geq c}}
\left(\frac{\abs{x(t) -
x(s)}}{\sqrt{t-s}}-\sqrt{2\log\frac{e}{t-s}}\right)\quad\text{ and } \\ h_n(x,c)
& = &\max_{\substack{1\leq i< j\leq n \\ (j-i+1)\slash n \geq
c}}\left(\frac{\abs{x(j\slash n) - x(i\slash n)}}{\sqrt{(j-i+1)\slash n}} -
\sqrt{2\ha{\log}\frac{en}{j-i+1}}\right),
\end{eqnarray*}
respectively. Moreover assume that $\set{x_n}_{n\in\N}\subset
\mathcal{C}([0,1])$ is such that $x_n \ra x$ for some $x\in
\mathcal{C}([0,1])$. Then $h_n(x_n,c) \ra h(x,c)$.
\end{lem}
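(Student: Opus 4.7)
The plan is to split the target via the triangle inequality
\[
|h_n(x_n,c) - h(x,c)| \;\leq\; |h_n(x_n,c) - h_n(x,c)| \;+\; |h_n(x,c) - h(x,c)|,
\]
handling a \emph{stability-in-the-argument} step and a \emph{discretisation} step separately.

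For the stability step, observe that $h_n(x_n,c)$ and $h_n(x,c)$ are maxima over the same finite index set, with identical denominators $\sqrt{(j-i+1)/n}\geq\sqrt{c}$ and identical scale-calibration terms; only the numerators differ. The reverse triangle inequality on the numerators yields
\[
|h_n(x_n,c) - h_n(x,c)| \;\leq\; \frac{2}{\sqrt{c}}\sup_{[0,1]}|x_n-x|,
\]
which vanishes by uniform convergence $x_n\to x$.

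For the discretisation step, write
\[
g(s,t) = \frac{|x(t)-x(s)|}{\sqrt{t-s}} - \sqrt{2\log\frac{e}{t-s}},
\]
and $g_n(i,j)$ for the corresponding discrete integrand. The key technical bound I would establish is
\[
\max_{(i,j)\ \text{admissible}} \bigl|g_n(i,j) - g(i/n,j/n)\bigr| = O(1/n),
\]
obtained by applying the mean value theorem to $\xi\mapsto 1/\sqrt{\xi}$ and $\xi\mapsto \sqrt{2\log(e/\xi)}$ at the two nearby scales $(j-i+1)/n$ and $(j-i)/n$, both of which are bounded below by $c-1/n$ for large $n$. Since admissible pairs satisfy $(j-i)/n\geq c-1/n$, this yields $\limsup_n h_n(x,c)\leq \limsup_n h(x,c-1/n)$, and continuity of $c'\mapsto h(x,c')$ at $c$---proved by extracting a convergent subsequence of maximisers on the compact set $\{(s,t):0\leq s\leq t\leq 1,\;t-s\geq c'\}$ on which $g$ is continuous---gives $\limsup h_n(x,c)\leq h(x,c)$. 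For the matching lower bound, I would choose a maximiser $(s^*,t^*)$ of $g$ on $\{t-s\geq c\}$ (exists by compactness-continuity) and discretise it by $i_n=\max(1,\lfloor ns^*\rfloor)$ and $j_n=\min(n,\lceil nt^*\rceil)$; a direct check shows $(j_n-i_n+1)/n\geq c$ automatically, and uniform continuity of $x$ on $[0,1]$ gives $g_n(i_n,j_n)\to g(s^*,t^*)=h(x,c)$, so $\liminf h_n(x,c)\geq h(x,c)$.

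The principal obstacle is reconciling two small discrepancies at once: the discrete calibration uses the scale $(j-i+1)/n$, while the increment $x(j/n)-x(i/n)$ is taken over the actual interval length $(j-i)/n$; and the extremal pair $(s^*,t^*)$ for $h(x,c)$ may sit on the boundary $t^*-s^*=c$. Both are navigated by the uniform $O(1/n)$ estimate for $g_n-g(i/n,j/n)$, combined with the asymmetric floor/ceiling choice of $(i_n,j_n)$ and the compactness-based continuity of $c'\mapsto h(x,c')$.
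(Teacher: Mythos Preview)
Your proposal is correct and follows essentially the same two-step decomposition as the paper: a stability-in-$x$ step followed by a discretisation step. The paper's proof is extremely terse --- after noting $h_n(x)=h_n(x_n)+O(\delta)$ it simply invokes uniform continuity of $u\mapsto\sqrt{2\log(e/u)}$ on $[c,1]$ and declares $h_n(x)\to h(x)$ --- so your explicit treatment of the $(j-i+1)/n$ versus $(j-i)/n$ mismatch, the compactness-based continuity of $c'\mapsto h(x,c')$, and the floor/ceiling discretisation of a maximiser actually supplies details that the paper leaves implicit.
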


\begin{proof}
Let $\delta > 0$. Then
there exists an index $n_0\in\N$ such that $\abs{x_n(t) - x(t)}\leq \delta$ for
all $n\geq n_0$ and $t\in [0,1]$. Thus, it follows directly from the definition that $h_n(x) = h_n(x_n) + \bigo(\delta)$
for $n\geq n_0$. Since $u\mapsto \sqrt{2\log e\slash u}$ is 
uniformly continuous on $[c, 1]$ we consequently have that $h_n(x) \ra h(x)$
as $n\ra\infty$ and the assertion follows.
\end{proof}
 
Before we proceed, recall the definition of $M$ in \eqref{limit:limitstat}.
Moreover, we introduce for $0<c\leq 1$ the statistic
\begin{equation}\label{proofs:limitstatbnd}
M(c) := \sup_{\substack{0\leq s < t \leq 1 \\ t-s > c}} \left(\frac{\abs{B(t) -
B(s)}}{\sqrt{t-s}} - \sqrt{2\log\frac{e}{t-s}}\right).
\end{equation}
From \citepsupp[Thm. 6.1]{DueSpok01supp} (and the subsequent Remark 1) it can be
seen that $M(c)$ converges weakly to $M$ as $c\ra 0^+$.

\begin{prop}\label{proofs:largeintsconv} 
Let $c>0$ and define
\begin{equation*}
T_n^c(Y,\theta) = \max_{(i,j)\in \mathcal{I}(c)} \left(
\sqrt{2T_i^j(Y,\theta)} - \sqrt{2\log\frac{en}{j-i+1}}\right).
\end{equation*}
Then $\lim_{c\ra 0^+} \lim_{n\ra\infty} T_n^c(Y,\theta) = M$, weakly.
\end{prop}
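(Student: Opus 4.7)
The plan is to reduce $T_n^c(Y,\theta)$ to a Brownian functional via strong approximation, let $n\to\infty$ for fixed $c>0$, and only then let $c\to 0^+$.

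First I would apply Proposition \ref{proofs:standard} with the constant sequence $c_n\equiv c$, which trivially satisfies \eqref{def:lowerbound} since $\log^3 n/(nc)\to 0$. This provides i.i.d.\ standard normal variables $Z_1,\ldots,Z_n$ on the same probability space as $Y_1,\ldots,Y_n$ such that
\begin{equation*}
T_n^c(Y,\theta) \;=\; \max_{(i,j)\in\mathcal I(c)}\left(\sqrt{j-i+1}\,\bigl|\bar Z_i^j\bigr| - \sqrt{2\log\frac{en}{j-i+1}}\right) + o_\Prob(1).
\end{equation*}
Since the $Z_k$ are i.i.d.\ standard normal, they can be realised as scaled Brownian increments $Z_k=\sqrt{n}(B(k/n)-B((k-1)/n))$ for a standard Brownian motion $B$ on $[0,1]$, which yields the exact identity
\begin{equation*}
\sqrt{j-i+1}\,\bigl|\bar Z_i^j\bigr| \;=\; \frac{|B(j/n)-B((i-1)/n)|}{\sqrt{(j-i+1)/n}}.
\end{equation*}
Setting $s=(i-1)/n$ and $t=j/n$ turns the above maximum into a discrete supremum of the Brownian functional defining $M(c)$, taken over lattice pairs $(s,t)\in\{k/n:0\leq k\leq n\}^2$ with $t-s\geq c$.

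Next I would invoke Lemma \ref{proofs:contfunc} with $x_n=x=B$ (the off-by-one between $(j-i+1)/n$ and $t-s$ contributes a uniform $\bigo(1/n)$ correction inside both the square root and the logarithmic term, which is negligible). Since $B$ is almost surely uniformly continuous on $[0,1]$ and the calibration weight $u\mapsto\sqrt{2\log(e/u)}$ is uniformly continuous on $[c,1]$, the discrete maxima converge almost surely as $n\to\infty$ to
\begin{equation*}
\sup_{\substack{0\leq s<t\leq 1\\ t-s\geq c}}\left(\frac{|B(t)-B(s)|}{\sqrt{t-s}} - \sqrt{2\log\frac{e}{t-s}}\right) \;=\; M(c).
\end{equation*}
Combined with the first step and Slutsky's lemma, this gives $T_n^c(Y,\theta)\stackrel{D}{\ra}M(c)$ for each fixed $c>0$.

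Finally, the outer limit $c\to 0^+$ follows from the weak convergence $M(c)\Rightarrow M$ as $c\to 0^+$ which was noted just after \eqref{proofs:limitstatbnd} (see \citepsupp[Thm.~6.1 and the subsequent remark]{DueSpok01supp}); monotonicity of $c\mapsto M(c)$ and $c\mapsto T_n^c$ guarantees that the iterated limit is well-defined. The main obstacle sits entirely in step one: the uniform-in-$(i,j)$ comparison of the local likelihood ratio with a Gaussian partial-sum statistic. This is precisely the content of Proposition \ref{proofs:standard}, whose proof combines the KMT coupling (Lemma \ref{proofs:kmt}) with the Taylor expansion of $J(\cdot,\theta)$ controlled via Shao's maximal inequality (Lemma \ref{proofs:taylor}). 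Once this strong approximation is available, the remainder is a routine continuous-mapping argument.
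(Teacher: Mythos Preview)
Your argument is correct, but it takes a heavier route than the paper's. The paper does \emph{not} invoke Proposition \ref{proofs:standard} here; it only uses the Taylor-expansion Lemma \ref{proofs:taylor} to replace $\sqrt{2T_i^j(Y,\theta)}$ by $\sqrt{j-i+1}\,|\bar Y_i^j|$, and then applies Donsker's theorem to the linear interpolation $X_n$ of the partial sums of $Y$ itself. Weak convergence $X_n\Rightarrow B$ together with Lemma \ref{proofs:contfunc} and the extended continuous-mapping theorem then gives $h_n(X_n,c)\Rightarrow h(B,c)=M(c)$, and Slutsky closes the argument exactly as you do. The final passage $M(c)\Rightarrow M$ is identical.

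The practical difference is that you import the KMT coupling (through Proposition \ref{proofs:standard}) where the paper gets by with Donsker alone. Since $c>0$ is fixed in this proposition, scales are bounded away from zero and no almost-sure rate of Gaussian approximation is needed; weak convergence of partial-sum processes suffices. Your route is not wrong, and it has the advantage of reusing a single strong-approximation tool throughout the section, but the paper's version is more economical in its hypotheses and keeps the KMT machinery in reserve for Theorem \ref{proofs:limitiidcase}, where $c_n\to 0$ genuinely forces the stronger coupling. Your remark about the off-by-one in the index ($B((i-1)/n)$ versus $B(i/n)$) is well taken and applies equally to the paper's identification of $h_n(X_n,c)$ with the partial-sum maximum; in both cases it is harmless because the scale is bounded below by $c$.
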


\begin{proof}
Set $S_0 = 0$ and $S_n=Y_1+\ldots+Y_n$ and let $\set{X_n(t)}_{t\geq 0}$ be the
process that is linear on the intervals $[i\slash n, (i+1)\slash n]$ with  
values $X_n(i\slash n) = S_i\slash \sqrt{n}$. We obtain from Donsker's
Theorem that $X_n \stackrel{\mathcal{D}}{\ra} B$. Now, recall the definition of
$h$ and $h_n$ in Lemma \ref{proofs:contfunc} and observe that
\begin{equation*}
h_n(X_n,c) = \max_{(i,j)\in \mathcal{I}(c)}\left( \sqrt{j-i+1}\bigl|\overline
Y_i^j\bigr|-\sqrt{2\log\frac{en}{j-i+1}}\right).
\end{equation*}
It hence follows from Lemma \ref{proofs:taylor} that 
\begin{equation}\label{proofs:aux1}
  \abs{T_n^c(Y,\theta) - h_n(X_n,c)} \leq \max_{(i,j)\in \mathcal{I}(c)} \abs{\sqrt{2T_i^j(Y,\theta)} -
  \sqrt{j-i+1}\bigl|\overline Y_i^j \bigr|} =\smallo_{\Prob}(1).
\end{equation}
 Since $X_n \stackrel{\mathcal{D}}{\ra} B$, Lemma \ref{proofs:contfunc} and
 \citepsupp[Thm. 5.5]{Bil68} imply that $h_n(X_n,c)\stackrel{\mathcal{D},}{\ra}
h(B,c)$. Theorem 4.1 in \citep{Bil68} and \eqref{proofs:aux1} thus imply
that $T_n^c(Y,\theta)\stackrel{\mathcal{D}}{\ra} h(B,c) = M(c)$ as
$n\ra\infty$ for all $c>0$. Thus, the assertion finally follows, since $M(c)\ra
M$ weakly as $c\ra 0^+$
\end{proof}

\begin{thm}\label{proofs:limitiidcase}
Let $Y = (Y_1,\ldots,Y_n)$ be independent and identically distributed random
variables with distribution $F_\theta$, $\theta\in\Theta$. Moreover, assume that
$\set{c_n}_{n\in\N}$ is a sequence of positive numbers such that 
$n^{-1}\log^3 n\slash c_n\ra 0$ and set
\begin{equation*}
T_n(Y,\theta,c_n) = \max_{(i,j)\in \Ic} \left( \sqrt{2 T_i^j(Y,\theta)} -\sqrt{2
\log\frac{en}{j-i+1}}\right).
\end{equation*} 
Then, $T_n(Y,\theta,c_n)\ra M$ weakly as $n\ra\infty$. 
\end{thm}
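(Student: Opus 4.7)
The plan is to combine the strong Gaussian approximation of Proposition \ref{proofs:standard} with the fixed-scale convergence in Proposition \ref{proofs:largeintsconv}. I will first reduce the statistic $T_n(Y,\theta;c_n)$ to a Gaussian counterpart $\tilde M_n(c_n)$, then sandwich $\tilde M_n(c_n)$ between an inner approximation on a fixed scale $c>0$ and the continuous supremum $M(c_n)$, and finally let $c$ shrink after $n\to\infty$.

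For the reduction, Proposition \ref{proofs:standard} furnishes i.i.d.\ $\mathcal{N}(0,1)$ random variables $Z_1,\ldots,Z_n$ on the same probability space as $Y$ for which
\begin{equation*}
T_n(Y,\theta;c_n)\;=\;\tilde M_n(c_n) + o_{\Prob}(1), \qquad \tilde M_n(c)\;:=\;\max_{(i,j)\in\mathcal{I}(c)}\Bigl(\sqrt{j-i+1}\,|\overline Z_i^j| - \sqrt{2\log\tfrac{en}{j-i+1}}\Bigr).
\end{equation*}
Embedding the partial sums of the $Z_i$ into a standard Brownian motion $B$ via $Z_i=\sqrt n\,(B(i/n)-B((i-1)/n))$, each argument of the maximum becomes $|B(j/n)-B((i-1)/n)|/\sqrt{(j-i+1)/n}-\sqrt{2\log(e/((j-i+1)/n))}$, so pathwise $\tilde M_n(c_n)\leq M(c_n)\leq M$, which yields the upper distributional bound. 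For the matching lower bound, fix $c>0$; eventually $c_n<c$ so $\mathcal{I}(c)\subset\mathcal{I}(c_n)$ and $\tilde M_n(c_n)\geq \tilde M_n(c)$. Proposition \ref{proofs:largeintsconv} applied to the Gaussian observations $Z_i$ (with parameter $\theta=0$) gives $\tilde M_n(c)\stackrel{\mathcal D}{\to} M(c)$ as $n\to\infty$, whence $\liminf_n \Prob(\tilde M_n(c_n)>x)\geq \Prob(M(c)>x)$ at every continuity point $x$ of $M$. Sending $c\to 0^+$ and invoking the already-recorded weak convergence $M(c)\Rightarrow M$ closes the sandwich. Slutsky's theorem together with the $o_{\Prob}(1)$ coupling error then transfers $\tilde M_n(c_n)\stackrel{\mathcal D}{\to} M$ to $T_n(Y,\theta;c_n)\stackrel{\mathcal D}{\to} M$, completing the proof.

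The principal difficulty has in fact been isolated in Proposition \ref{proofs:standard}: extending the Gaussian approximation of $\sqrt{2T_i^j(Y,\theta)}$ uniformly over intervals as short as $nc_n$ is exactly what consumes the rate condition $n^{-1}\log^3 n/c_n\to 0$---the KMT coupling of Lemma \ref{proofs:kmt} uses one factor of $\log n$, while Shao's law of the iterated logarithm together with the cubic Taylor remainder of $\psi$ in Lemma \ref{proofs:taylor} accounts for the remaining $\log^2 n$. Once that coupling is in place, the argument above is a clean nested-limit interchange, driven by monotonicity of $c\mapsto\mathcal{I}(c)$ and the pre-established convergence $M(c)\Rightarrow M$ from \citepsupp[Thm.~6.1]{DueSpok01supp}.
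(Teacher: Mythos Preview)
Your argument is correct and follows essentially the same route as the paper's proof: Gaussian coupling via Proposition \ref{proofs:standard}, a pathwise upper bound through the Brownian embedding, and a lower bound by restricting to a fixed scale $c$ and invoking the $T_n^c\Rightarrow M(c)$ step established inside Proposition \ref{proofs:largeintsconv}. The only organizational difference is that you pass to the Gaussian side once and argue both bounds there, whereas the paper applies the coupling for the upper bound but runs the lower bound directly on $T_n(Y,\theta;c_n)\geq T_n^c(Y,\theta)$ before appealing to Proposition \ref{proofs:largeintsconv}; this is a reordering, not a different idea. One small wording point: the $Z_i$ produced by Proposition \ref{proofs:standard} are fixed, so the Brownian embedding should be read as a distributional identification (your $\tilde M_n(c_n)$ has the same law as its Brownian-increment counterpart), after which the pathwise inequality $\tilde M_n(c_n)\leq M$ is legitimate.
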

 
\begin{proof}
First observe that according to Proposition \ref{proofs:standard} we have for
all $t > 0$ that
\begin{equation*}
\begin{split}
\Prob\left(T_n(Y,\theta;c_n) \leq t \right) & = 
\Prob\left( \max_{(i,j)\in\Ic} \left(\sqrt{j-i+1}
 \bigl|\overline Z_i^j\bigr| - \sqrt{2\log\frac{en}{j-i+1}}\right) \leq
 t\right) + \smallo(1)\\ & \geq \Prob\left(  \sup_{0\leq s < t
 \leq 1} \left(\frac{\abs{B(t) - B(s)}}{\sqrt{t-s}}
 - \sqrt{2\log\frac{e}{t-s}}\right) \leq t\right) + \smallo(1)
\end{split}
\end{equation*}
This shows that for all $t>0$
\begin{equation*}
\liminf_{n\ra\infty} \Prob(T_n(Y,\theta,c_n)\leq t) \geq
\Prob(M\leq t)
\end{equation*}
Now let $c>0$ be fixed and assume w.l.o.g. $c_n<c$ for all $n\in\N$. With
$T_n^c$ as defined in Proposition \ref{proofs:largeintsconv} we
conversely find
\begin{equation*}
\limsup_{n\ra\infty} \Prob(T_n(Y,\theta,c_n)\leq t) \leq
\limsup_{n\ra\infty}\Prob(T_n^c(Y,\theta,c_n)\leq
t) = \Prob(M(c)\leq t).
\end{equation*}
Hence the assertion follows from Proposition \ref{proofs:largeintsconv} after
letting $c\ra 0^+$ and the fact that  $M > 0$ a.s.
\end{proof}
 
\begin{proof}[Proof of Theorem \ref{limit:mainthm}]
Let $T_n(Y,\vt; c_n)$ be defined as in \eqref{smre:mrstat}. From Theorem
\ref{proofs:limitiidcase} it then follows that
\begin{equation*}
T_n(Y,\vt;c_n) \stackrel{\mathcal{D}}{\ra} \max_{0\leq k\leq K }\sup_{\tau_k\leq
s < t \leq \tau_{k+1}} \left(\frac{\abs{B(t) - B(s)}}{\sqrt{t-s}} -
\sqrt{2\log\frac{e}{t-s}}\right).
\end{equation*}
Clearly the limiting statistic on the right hand side is stochastically bounded
from above by $M$. Conversely, we
observe by the scaling property of the Brownian motion that
\begin{multline*}
\sup_{\tau_k\leq s < t \leq \tau_{k+1}} \left(\frac{\abs{B(t) -
B(s)}}{\sqrt{t-s}} - \sqrt{2\log\frac{e}{t-s}}\right) \\
\stackrel{\mathcal{D}}{=} \sup_{0\leq s < t \leq 1} \left(\frac{\abs{B(t) -
B(s)}}{\sqrt{t-s}} - \sqrt{2\log\frac{e}{t-s} + 2\log\frac{1}{\tau_{k+1} -
\tau_k}}\right) \stackrel{\mathcal{D}}{\geq} M - \sqrt{2\log\frac{1}{\tau_{k+1}
- \tau_k}}.
\end{multline*}
\end{proof}

\subsection{A general exponential inequality}

In this section we give a general exponential inequality for the probability
that SMUCE underestimates the number of change-points. To this end, we will
make use of the functions
\begin{align}
\kappa_1^\pm(v,w,x,y)  & = \inf_{\substack{v\leq\theta\leq w \\ \theta \pm x \in
[v,w]}} \sup_{\eps\in[0,x]} \left[\frac{\eps}{x} \left( D(\theta||\theta \pm x )
-y\right) - D(\theta||\theta \pm \eps) \right], \label{conscp:kappa1}\\
\kappa_2^\pm(v,w,x) & = \inf_{\substack{v\leq\theta\leq w \\ \theta \pm x \in
[v,w]}} D(\theta\pm x||\theta)\label{conscp:kappa2}.
\end{align}

\begin{thm}\label{conscp:thmunderest}
Let $q\in \R$ and $\hat K(q)$ be defined as in \eqref{conscp:estnocp}.
Moreover, assume that $\kappa_1^\pm$ and $\kappa_2^\pm$ are defined as in
\eqref{conscp:kappa1} and \eqref{conscp:kappa2}, respectively and set
\begin{align*}
\kappa_1 & = \min\set{\kappa_1^+\left(\underline\theta, \overline \theta,
\frac{\Delta}{2},\frac{\left(q+\sqrt{2\log\frac{2e}{\lambda}}\right)^2}{n\lambda}\right),\kappa_1^-\left(\underline\theta,
\overline \theta, \frac{\Delta}{2},\frac{\left(q+\sqrt{2\log\frac{2e}{\lambda}}\right)^2}{n\lambda}\right)}\;\text{ and }\\
\kappa_2 & = \min\set{\kappa_2^+\left(\underline\theta, \overline \theta,
\frac{\Delta}{2}\right),\kappa_2^-\left(\underline\theta, \overline \theta,
\frac{\Delta}{2}\right)}.
\end{align*}
If $\lambda \geq 2 c_n$, then 
\begin{equation}
\Prob\left(\hat K(q) < K\right) \leq 2 K \left[e^{- \frac{n\lambda\kappa_1 }{2}
} + e^{-\frac{n\lambda\kappa_2}{2}}\right].
\end{equation}
\end{thm}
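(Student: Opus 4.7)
The plan is to extract, on the event $\{\hat K(q)<K\}$, a deterministic discrete sub-interval on which the multiscale constraint must fail with exponentially small probability, and then bound that probability via the large deviation Lemmas~\ref{proofscons:bound2} and~\ref{proofs:power}.

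First I would use a pigeonhole argument to locate a bad change-point. The $K$ neighborhoods $I_j := (\tau_j - \lambda/2, \tau_j + \lambda/2)$, $j=1,\ldots,K$, are pairwise disjoint since $\lambda$ is the minimal spacing of the change-points of $\vt$. If every $I_j$ met $J(\hat\vt)$, then $\hat K(q)\geq K$, contradicting the assumption. Hence some $I_j$ contains no $\hat\vt$-change-point, so $\hat\vt\equiv \hat\theta$ on $I_j$ for some $\hat\theta$. Because $|\theta_j-\theta_{j-1}|\geq\Delta$, at least one of $|\hat\theta-\theta_{j-1}|,|\hat\theta-\theta_j|$ exceeds $\Delta/2$; this yields a discrete sub-interval $[a,b]\subset I_j$ with $b-a+1\geq \lfloor n\lambda/2\rfloor\geq nc_n$ on which the data are i.i.d.\ $F_\theta$ (with $\theta\in\{\theta_{j-1},\theta_j\}$) while the candidate value $\hat\theta$ sits at distance $\geq \Delta/2$ from $\theta$.

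Next I would convert the global multiscale constraint on $[a,b]$ into a pointwise statement. From $b-a+1\geq n\lambda/2$ and $T_n(Y,\hat\vt;c_n)\leq q$ one reads off
\begin{equation*}
T_a^b(Y,\hat\theta) \leq y := \tfrac12\bigl(q+\sqrt{2\log(2e/\lambda)}\bigr)^2,
\end{equation*}
which is the origin of the factor $\log(2e/\lambda)$ in the definition of $\kappa_1$. The key reduction is that $\theta'\mapsto T_a^b(Y,\theta')$ is strictly convex with minimum at the MLE $m^{-1}(\bar Y_a^b)$, so its sub-level set $\{\theta':T_a^b(Y,\theta')\leq y\}$ is an interval; hence if this interval extends past $\theta\pm\Delta/2$ (which is what $|\hat\theta-\theta|\geq \Delta/2$ entails), then either (i) the MLE itself lies at distance $\geq \Delta/2$ from $\theta$, or (ii) the deterministic statistic $T_a^b(Y,\theta\pm\Delta/2)$ is at most $y$. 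I expect this step, which is what removes the data-dependence of $\hat\theta$, to be the main technical obstacle.

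Finally I would bound (i) and (ii) separately with sample size $n'=\lfloor n\lambda/2\rfloor$. Lemma~\ref{proofscons:bound2} controls (i) by $\exp(-n'D(\theta\pm\Delta/2\|\theta))$; an infimum over $\theta\in[\underline\theta,\overline\theta]$ then gives $\exp(-\tfrac{n\lambda}{2}\kappa_2^\pm)$. Lemma~\ref{proofs:power}, applied with displacement $\delta=\Delta/2$ and threshold $y$, bounds (ii) by $\exp(-n'\kappa_1^\pm(\underline\theta,\overline\theta,\Delta/2,y/n'))$; the tuning of $y$ is exactly such that $y/n'=(q+\sqrt{2\log(2e/\lambda)})^2/(n\lambda)$, matching the fourth argument of $\kappa_1^\pm$ in the statement. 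A union bound over the $K$ possible positions of $\tau_j$ and over the two signs produces the factor $2K$ and the sum of the two exponentials in the claimed inequality. Integer-rounding in $n'$ and the admissibility check $\lambda/2\geq c_n$ for the multiscale statistic are routine.
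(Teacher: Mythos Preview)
Your proposal is correct and follows essentially the same route as the paper's proof: the same pigeonhole over the $K$ disjoint neighborhoods $I_j=(\tau_j-\lambda/2,\tau_j+\lambda/2)$, the same split into the two half-intervals $I_j^\pm$ with the observation that $|\hat\theta-\theta_k^\pm|\geq\Delta/2$ on at least one side, the same convexity reduction to remove the data-dependence of $\hat\theta$, and the same application of Lemmas~\ref{proofscons:bound2} and~\ref{proofs:power} followed by a union bound. The only cosmetic difference is that the paper phrases the convexity step as ``$T_{I_k^-}(Y,\hat\theta)\geq T_{I_k^-}(Y,\theta_k^-+\Delta/2)$ whenever $m^{-1}(\overline Y_{I_k^-})\leq\theta_k^-+\Delta/2$'' rather than your sublevel-set formulation, but these are equivalent.
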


\begin{proof}
Let $\Delta$ and $\lambda$ be the smallest jump size and the smallest interval
length of the true regression function $\vartheta$, i.e.
\begin{equation*}
\Delta = \inf_{1\leq k\leq K} \abs{\theta_k - \theta_{k-1}}\quad \text{ and }\quad \lambda =
\inf_{0\leq k\leq K} \tau_{k+1} - \tau_k.
\end{equation*}
Now define $K$ disjoint intervals
$I_i=\left(\tau_i-\lambda/2,\tau_i+\lambda/2\right)\subset [0,1]$. Let
$\theta_i^+=\max\set{\theta_{i-1},\theta_i}$,
$\theta_i^-=\min\set{\theta_{i-1},\theta_i}$ and split each interval $I_i$ 
accordingly, i.e. $I_i^+=\{t\in I_i: \vartheta(t)=\theta_i^+\}$  and
$I_i^-=\{t\in I_i: \vartheta(t)=\theta_i^-\}$. Clearly $I_i=I_i^- \cup I_i^+$.

From the definition of the estimator $\hat K(q)$ it is clear that
\begin{equation*}
\hat K(q)<K\quad\Leftrightarrow\quad \exists \hat\vt \in
\S_n[K-1]\text{ such that }T_n(Y,\hat\vt)\leq q.
\end{equation*}
If $\hat \vartheta\in \S_n[K-1]$, then there exists an
index $k\in\set{1,\ldots,K}$ such that $\hat \vartheta$ is constant on \ha{$I_k$}.
Let $\Omega_k=\left\{\exists \hat\theta \in
 \Theta: \sqrt{T_{I_k^+}(Y,\hat\theta)} - \sqrt{\log\frac{en}{\# I_k^+}}\leq \frac{q}{\sqrt 2} \text{ and }
 \sqrt{T_{I_k^-}(Y,\hat\theta)} - \sqrt{\log\frac{en}{\# I_k^-}}\leq \frac{q}{\sqrt 2} \right\}$
Since the $K$ intervals $I_i$ are disjoint we find
\begin{align*}
 \Prob(\hat K(q) < K) & \leq  \sum_{k=1}^K \Prob\left(\Omega_k\right).
\end{align*}
If $\hat \vt\in \S_n[K-1]$ is constant on some $I_k$ with value $\hat
\theta$, then  either $\hat \theta \leq \theta_k^+ -
\Delta/2$ or $\hat \theta \geq \theta_k^- + \Delta/2$, by construction.
Set
\begin{align*}
 \Omega_k^{+}&=\left\{ \exists \hat\theta\leq \theta_k^{+} - \Delta/2 : \sqrt{T_{I_k^+}(Y,\hat\theta)} - \sqrt{\log\frac{en}{\# I_k^+}}\leq \frac{q}{\sqrt 2} \right\}\ \\
 \Omega_k^{-}&=\left\{ \exists \hat\theta\geq \theta_k^{-} + \Delta/2 
: \sqrt{T_{I_k^-}(Y,\hat\theta)} - \sqrt{\log\frac{en}{\# I_k^-}}\leq \frac{q}{\sqrt 2} \right\}
\end{align*}
and observe that $\Prob(\Omega_k)\leq \Prob(\Omega_k^+)+\Prob(\Omega_k^-)$.
We proof an upper bound for $\Prob(\Omega_k^-)$, the same bound can be obtained for $\Prob(\Omega_k^+)$ analogously.
Recall that $\theta\mapsto T_{I_k^-}(Y,\cdot)$ is
convex and has its minimum at $m^{-1}(\overline Y_{{I_k^-}})$. Thus,
$T_{I_k^-}(Y,\hat \theta)\geq T_{I_k^-}(Y,\theta^-_k + \Delta/2)$ whenever
$m^{-1}(\overline Y_{{I_k^-}})\leq \theta^-_k + \Delta/2$. This yields
\begin{align*} 
\Prob \left(\Omega_k^-\right) & \leq \Prob\left(\Omega_k^- \cap
\left\{ m^{-1}(\overline Y_{I_k^-}) \leq 
\theta_k^- + {\Delta\over 2} \right\}\right) + \Prob\left( m^{-1}(\overline
Y_{I_k^-}) > \theta_k^- + {\Delta\over 2} \right)\\ & \leq  1-
\Prob\left(T_{I_k^-}\left(Y,\theta_k^- + {\Delta\over 2}\right)\geq 1/2\left( q + \sqrt{2 \log (2e/ \lambda)} \right)^2 \right )  + \Prob\left( m^{-1}(\overline
Y_{I_k^-}) > \theta_k^- + {\Delta\over 2} \right) \\ 
& \leq \exp \left( \frac{\lambda n}{2} 
\inf_{\eps\in [0,\Delta\slash 2]}\left( D(\theta_k^-||\theta_k^- +
\eps) - \frac{\eps}{\Delta/ 2}D(\theta_k^-||\theta_k^- + \Delta / 2)  +
\frac{2\eps \left( q + \sqrt{2 \log (2e/ \lambda)} \right)^2}{\Delta \lambda n} \right) \right) \\ &
\hspace{0.05\textwidth} + \exp\left( - \frac{\lambda n}{2} D(\theta_k^- + \Delta \slash 2 || \theta_k^-) \right) \\
& \leq \exp\left(-\frac{n\lambda}{2}
\kappa_1^+\left(\underline\theta, \overline\theta,\frac{\Delta}{2},
\frac{\left( q + \sqrt{2 \log (2e/ \lambda)} \right)^2}{\lambda n}\right)\right)
+ \exp\left(-\frac{n\lambda}{2}\kappa_2^+\left(\underline\theta,
 \overline\theta,\frac{\Delta}{2}\right) \right)
\end{align*}
by Lemma \ref{proofscons:bound2} and Lemma \ref{proofs:power}. With the
definition of the constants $\kappa_j$ as in the Theorem ($j=1,2$) we eventually obtain
\begin{equation*}
 \Prob(\hat K(q) < K)\leq 2 K\left[ \exp\left(-\frac{n\lambda\kappa_1}{2}\right) +
 \exp\left(-\frac{n\lambda\kappa_2}{2}\right)\right].
\end{equation*}
\end{proof}
  
The constants $\kappa_i^\pm$ ($i=1,2$) basically depend on the exponential
family $\mathcal{F}$. Their explicit computation can be rather tedious and has
to be done for each exponential family separately (for the Gaussian case see
see below). Therefore, it is useful to have a lower bound for these constants.

\begin{lem}\label{conscp:kappalemma}
Let $v$ be as in \eqref{model:not_varmean} and $\kappa_1^\pm$ and $\kappa_2^\pm$ be defined as in \eqref{conscp:kappa1} and
\eqref{conscp:kappa2}, respectively. Then, 
\begin{equation*}
\kappa_1^\pm(v,w,x,y) \geq \frac{x^2}{8}
 \frac{\inf_{v\leq t \leq w} v(t)^2}{\sup_{v\leq t \leq w} v(t)} - y \; \text{
 and }\; \kappa_2^\pm(v,w,x) \geq \frac{x^2}{2}\inf_{v\leq t \leq w} v(t).
\end{equation*}
\end{lem}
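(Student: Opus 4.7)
The plan is to exploit the standard fact that for any fixed reference parameter the Kullback--Leibler divergence behaves like a quadratic in the perturbation size, with curvature equal to the variance function $v$. Concretely, from \eqref{proofs:kl} one sees that both $s \mapsto D(\theta \| \theta \pm s)$ and $s \mapsto D(\theta \pm s \| \theta)$ vanish together with their first derivative at $s=0$, and for the right choices of derivative one gets $\ddot\psi = v$ as the effective curvature. Thus each divergence can be sandwiched between $\tfrac{s^2}{2}\bar v$ and $\tfrac{s^2}{2}\hat v$, where I write $\bar v := \inf_{v\leq t\leq w} v(t)$ and $\hat v := \sup_{v\leq t\leq w} v(t)$.

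For $\kappa_2^\pm$, I would set $g(s) = D(\theta+s\|\theta) = \psi(\theta) - \psi(\theta+s) + s\, m(\theta+s)$, differentiate to obtain $g'(s) = s\, v(\theta+s)$ (with $g(0)=0$), and then write $g(x) = \int_0^x s\, v(\theta+s)\, ds \geq \bar v\int_0^x s\, ds = \tfrac{x^2 \bar v}{2}$. The same estimate, with $v(\theta-s)$ in place of $v(\theta+s)$, yields the corresponding bound for the ``$-$'' case. Taking the infimum over the admissible $\theta$ gives $\kappa_2^\pm(v,w,x) \geq \tfrac{x^2}{2}\bar v$, which is the claimed inequality.

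For $\kappa_1^\pm$, I would first record that for $h(s) := D(\theta\|\theta\pm s)$ one has $h(0)=h'(0)=0$ and $h''(s) = v(\theta \pm s)$, so Taylor's integral formula gives $h(s) = \int_0^s (s-u)\, v(\theta \pm u)\, du$, hence $\tfrac{s^2}{2}\bar v \leq h(s) \leq \tfrac{s^2}{2}\hat v$. Plugging the lower bound into $D(\theta\|\theta\pm x)$ and the upper bound into $D(\theta\|\theta\pm \eps)$ yields
\begin{equation*}
\frac{\eps}{x}\bigl(D(\theta\|\theta\pm x) - y\bigr) - D(\theta\|\theta\pm \eps) \;\geq\; \frac{\eps x \bar v}{2} - \frac{\eps y}{x} - \frac{\eps^2 \hat v}{2}.
\end{equation*}
The key step is then to evaluate the supremum over $\eps \in [0,x]$ at the test point $\eps^* = x\bar v/(2\hat v)$, which lies in $[0,x]$ because $\bar v \leq \hat v$. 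A direct substitution gives the value $\tfrac{x^2 \bar v^2}{8\hat v} - \tfrac{y \bar v}{2\hat v}$, and since in the relevant application $y\geq 0$ and $\bar v/(2\hat v)\leq 1$, we have $\tfrac{y\bar v}{2\hat v}\leq y$, so this is bounded below by $\tfrac{x^2 \bar v^2}{8\hat v} - y$. Taking the infimum over $\theta$ concludes the proof.

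The calculation is essentially mechanical once the quadratic sandwich bounds for the divergence are in place; the only mild subtlety is the choice of the explicit test point $\eps^*$, which comes from optimizing the quadratic $\eps \mapsto \tfrac{\eps x \bar v}{2} - \tfrac{\eps^2 \hat v}{2}$ (so one expects it to lie in the allowed range), and the final cosmetic step of replacing $y \bar v/(2\hat v)$ by $y$ to match the stated form of the bound.
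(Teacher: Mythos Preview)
Your proposal is correct and follows essentially the same route as the paper. Both arguments rest on the integral/Taylor representation $D(\theta\|\theta\pm s)=\int_0^s (s-u)\,v(\theta\pm u)\,du$ to obtain the quadratic sandwich $\tfrac{s^2}{2}\bar v \le D(\theta\|\theta\pm s)\le \tfrac{s^2}{2}\hat v$, then optimize the resulting quadratic in $\eps$ at the same test point $\eps^*=x\bar v/(2\hat v)$; the only cosmetic difference is that the paper separates off the $-\eps y/x$ term via $\eps y/x\le y$ before maximizing, whereas you keep it in and bound $y\bar v/(2\hat v)\le y$ afterwards---both steps using the (unstated in the paper, but noted by you) assumption $y\ge 0$, which holds in the application.
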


\begin{proof}
First observe from \eqref{proofs:kl}, that for any $\theta\in\Theta$ and
$\eps>0$ such that $\theta + \eps \in\Theta$ one has $D(\theta||\theta+\eps) = 
\int_\theta^{\theta+\eps}(\theta + \eps - t)v(t)\diff t$. Thus if follows that
for all $0\leq\eps\leq x$
\begin{align*}
\frac{\eps}{x} D(\theta||\theta + x)  - D(\theta||\theta +
\eps)  & = \frac{\eps}{x}\int_\theta^{\theta + x} (\theta +
x - t)v(t)\diff t - \int_\theta^{\theta + \eps}(\theta + \eps -t)v(t)\diff t 
\\ & \geq \frac{\eps x}{2} \inf_{t\in[\theta,\theta+x]} v(t) -
\frac{\eps^2}{2}\sup_{t\in[\theta,\theta+x]} v(t).
\end{align*}
Maximizing over $0\leq \eps \leq x$ then yields 
\begin{equation*}
 \sup_{\eps\in[0,x]}\frac{\eps}{x} D(\theta||\theta + x)  -
 D(\theta||\theta + \eps) \geq \frac{x^2}{8} \frac{\inf_{t\in[\theta,\theta+x]}
v(t)^2}{\sup_{t\in[\theta,\theta+x]} v(t)}. 
\end{equation*}
This proves that
\begin{equation*}
\kappa_1^+(v,w,x,y) \geq \frac{x^2}{8} \frac{\inf_{v\leq t
\leq w} v(t)^2}{\sup_{v\leq t \leq w} v(t)} - y.
\end{equation*}
Likewise, one finds
\begin{equation*}
 \kappa_2^+(v,w,x) \geq \frac{x^2}{2} \inf_{v\leq t \leq w} v(t).
\end{equation*}
The estimates for $\kappa_1^-$ and $\kappa_2^-$ are derived analogously.
\end{proof}

The combination of Theorem \ref{conscp:thmunderest} and the estimates in Lemma
\ref{conscp:kappalemma} yield the handy result in Theorem
\ref{conscp:corunderest}. For the case of Gaussian observations, the constants
$\kappa_i^\pm$ ($i=1,2$) can be computed explicitly and in particular
$\kappa_1$ is strictly larger than the
approximations obtained from Lemma \ref{conscp:kappalemma} by setting
$v(t)\equiv 1$. 

\begin{thm}\label{gauss:opt:mainthm}
Let $q\in \R$ and $\hat K(q)$ be defined as in \eqref{conscp:estnocp} and assume
that $\mathcal{F}$ is the family of Gaussian distributions with fixed variance
$1$. Then, 
\begin{align*}
 \Prob \left( \hat K (q) < K  \right) \leq & 2 K\left[\exp\left( - \frac{1}{8}
 \left( \frac{\Delta\sqrt{\lambda n}}{2\sqrt{2}} - 
 q-\sqrt{2\log\frac{2e}{\lambda}}
 \right)_+^2 \right) +  \exp\left(- \frac{\lambda n
 \Delta^2}{16} \right)\right]
\end{align*}
\end{thm}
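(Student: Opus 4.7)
The plan is to imitate the proof of Theorem \ref{conscp:thmunderest} step by step, but to replace the two general-exponential-family ingredients (the power bound of Lemma \ref{proofs:power} and the Bahadur-type bound of Lemma \ref{proofscons:bound2}) with their Gaussian counterparts, which produce explicit sub-Gaussian exponents rather than the implicit $\kappa_i^\pm$ constants. In the Gaussian case with unit variance the Kullback-Leibler divergence simplifies to $D(\theta \| \tilde\theta) = (\theta-\tilde\theta)^2/2$, and the sharpened power estimate of Lemma \ref{proofs:power:gauss} already gives the exact sub-Gaussian tail $\exp\!\bigl(-\tfrac18(\sqrt{n}\delta-\sqrt{2q})_+^2\bigr)$ that we want to appear in the final bound.

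First I would set up the same combinatorial decomposition as in the proof of Theorem \ref{conscp:thmunderest}. If $\hat K(q) < K$, there exists $\hat \vt \in \S_n[K-1]$ that is feasible, and by a pigeonhole argument $\hat\vt$ must be constant on at least one of the disjoint neighborhoods $I_k = (\tau_k-\lambda/2,\tau_k+\lambda/2)$ of the true change-points. Splitting each $I_k$ into $I_k^- \cup I_k^+$ according to the two levels of $\vt$ and defining $\Omega_k^\pm$ exactly as in Theorem \ref{conscp:thmunderest} gives, by disjointness and a union bound,
\begin{equation*}
\Prob(\hat K(q) < K) \;\leq\; \sum_{k=1}^K \bigl(\Prob(\Omega_k^+) + \Prob(\Omega_k^-)\bigr).
\end{equation*}
Following the convexity argument there, each $\Prob(\Omega_k^-)$ is bounded by
\begin{equation*}
\Bigl[1 - \Prob\bigl(T_{I_k^-}(Y,\theta_k^- + \Delta/2) \geq \tfrac{1}{2}(q + \sqrt{2\log(2e/\lambda)})^2\bigr)\Bigr] + \Prob\bigl(m^{-1}(\overline Y_{I_k^-}) > \theta_k^- + \Delta/2\bigr),
\end{equation*}
and symmetrically for $\Omega_k^+$.

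Next I would estimate the two pieces using the Gaussian ingredients. Since $|I_k^-| \geq n\lambda/2$, Lemma \ref{proofs:power:gauss} applied with sample size $|I_k^-|$, shift $\Delta/2$ and threshold $\tfrac{1}{2}(q + \sqrt{2\log(2e/\lambda)})^2$ yields
\begin{equation*}
1 - \Prob\bigl(T_{I_k^-} \geq \tfrac{1}{2}(q + \sqrt{2\log(2e/\lambda)})^2\bigr) \;\leq\; \exp\Bigl(-\tfrac{1}{8}\bigl(\tfrac{\Delta\sqrt{n\lambda}}{2\sqrt{2}} - q - \sqrt{2\log(2e/\lambda)}\bigr)_+^2\Bigr),
\end{equation*}
which is exactly the first exponential in the claimed inequality. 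For the second term, Lemma \ref{proofscons:bound2} combined with $D(\theta_k^- + \Delta/2 \| \theta_k^-) = \Delta^2/8$ gives $\Prob(m^{-1}(\overline Y_{I_k^-}) > \theta_k^- + \Delta/2) \leq \exp(-|I_k^-|\Delta^2/8) \leq \exp(-n\lambda\Delta^2/16)$, matching the second exponential.

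Summing the two (identical) bounds obtained for $\Omega_k^+$ and $\Omega_k^-$ across the $K$ change-points produces the prefactor $2K$ and completes the proof. No genuine obstacle arises: the only point that needs slight care is the consistent bookkeeping of the halved interval length $|I_k^\pm| \geq n\lambda/2$, which is responsible for the factor $\sqrt{2}$ in the denominator of $\Delta\sqrt{n\lambda}/(2\sqrt{2})$ and the $16$ in the exponent $n\lambda\Delta^2/16$. Notably, no lower bound $c_n$ on scales is required, because in the Gaussian case the passage to Brownian motion (and hence the restriction $\lambda \geq 2c_n$ from Theorem \ref{conscp:corunderest}) can be avoided.
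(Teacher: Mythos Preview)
Your proposal is correct and follows essentially the same route as the paper. The only cosmetic difference is that the paper's proof is more compressed: it simply invokes the general Theorem \ref{conscp:thmunderest} and then computes the Gaussian values of the constants $\kappa_1^\pm$ and $\kappa_2^\pm$ explicitly (obtaining $\kappa_2^\pm(v,w,x)=x^2/2$ and $\kappa_1^\pm(v,w,x,y)=\tfrac12(\tfrac{x}{2}-\tfrac{y}{x})^2\ge \tfrac18(x-\sqrt{2y})^2$), whereas you re-run the decomposition of that theorem and substitute the Gaussian Lemmas \ref{proofs:power:gauss} and \ref{proofscons:bound2} directly. The underlying argument, the combinatorial splitting into $\Omega_k^\pm$, and the resulting exponents are identical.
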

 
\begin{proof}
The proof is similar to the proof of Lemma \ref{proofs:power:gauss}. From Lemma \ref{conscp:kappalemma} it follows that  \ha{$\kappa_2^\pm(v,w,x) = \frac{x^2}{2}$}
and one computes explicitly that $\kappa_1^\pm(v,w,x,y) =
\frac{1}{2}(\frac{x}{2} - \frac{y}{x})^2\geq \frac{1}{8}(x-\sqrt{2y})^2$ if
$x^2\geq 2y$. The assertion now follows from Theorem \ref{conscp:thmunderest}.
\end{proof}

We close this section with the proof of Theorem \ref{convset:mainthm} which is
very much in the same spirit than the proof of Theorem \ref{conscp:thmunderest}
above.

\begin{proof}[Proof of Theorem \ref{convset:mainthm}] 
Let again $\Delta$ be the smallest jump of the true
signal $\vt$ and recall that $\vt(t)\in[\underline\theta,\overline\theta]$ for all
$t\in[0,1]$. Moreover, define the $K$ disjoint intervals $I_i=(\tau_i-c_n,
\tau_i+c_n)\subset [0,1]$ and accordingly $I_i^-$, $I_i^+$, $\theta_i^-$,
$\theta_i^+$ and $\hat \vartheta_i$ as in the proof of Theorem \ref{conscp:thmunderest}. 

Now assume that $\hat K\in \N$ and that $\hat\vartheta\in\S_n[\hat K]$ is
an estimator of $\vartheta$ such that $T_n(Y,\hat\vt)\leq q$ and\kl{
\begin{equation*}
\max_{0\leq k \leq  K} \min_{0\leq l \leq \hat K} \abs{\hat \tau_l -
\tau_k}>c_n.
\end{equation*}
Put differently, there exists an index
$i\in\set{1,\ldots,K}$ such that $\abs{\hat \tau_l -
\tau_i}>c_n$ for all $0\leq l\leq \hat K$ or, in other words, $\hat
\vartheta$ contains no change-point in the interval $I_i$.} With the very same reasoning as in the proof of Theorem \ref{conscp:thmunderest} we find
that
\begin{align*}
&\Prob\left(\exists \hat K\in\N,\hat\vt\in \S_n[\hat K]:
T_n(Y,\hat\vt)\leq q\text{ and }\kl{ \max_{0\leq k \leq  K} \min_{0\leq l \leq
\hat K}} \abs{\hat \tau_l - \tau_k}>c_n \right) \\
\leq & \sum_{k=1}^K \Prob\left(\exists \hat\theta \in
 \Theta:T_{I_k^+}(Y,\hat\theta) \leq \frac{1}{2} \left( q+\sqrt{\log\frac{e}{c_n}} \right)^2 \; \text{and} \;
T_{I_k^-}(Y,\hat\theta) \leq \frac{1}{2} \left( q+\sqrt{\log\frac{e}{c_n}} \right)^2 \right).
\end{align*}
By replacing $\lambda\slash 2$ in the proof of Theorem \ref{conscp:thmunderest}
by $c_n$ and applying Lemma \ref{conscp:kappalemma} the assertion follows. 
\end{proof}

\subsection{Proof of Theorems \ref{gauss:opt:unknownbackground} and
\ref{gauss:opt:nojumpmissing}}

\begin{proof}[Proof of Theorem \ref{gauss:opt:unknownbackground}]
W.l.o.g. we shall assume that $\delta_n \geq 0$. The main idea of the proof is
as follows: Let $J_n=\argmax\left\{\abs{J}~:~ J\subset[0,1],\, J\cap I_n =
\emptyset\right\}$. In order to show that \eqref{gauss:opt:unknownbackgroundeqn}
holds, we construct a sequence $\theta_n^{*}\in\Theta$  such that
\begin{eqnarray}
 &\sup_{\theta \geq \theta_n^*}\Prob\left( T_{J_n}
 (Y,\theta) \leq 1/2 \left( \ha{q_n}+ \sqrt{2 \log{(e/\abs{J_n})}} \right)^2\right)\ra 0\;
 \text{ and } \label{proofs:gauss:opt:eq1} \\
 &\sup_{\theta \leq \theta_n^*}\Prob\left( T_{I_n} (Y,\theta)
 \leq  1/2 \left( \ha{q_n}+ \sqrt{2 \log (e/\abs{\ha{I_n})}} \right)^2 \right)\ra 0. \label{proofs:gauss:opt:eq2}
\end{eqnarray} 
 Note that the true signal $\vartheta_n$ takes the value
 $\theta_0+\delta_n$ on $I_n$ and $\theta_0$ on $J_n$ and it is not restrictive
 to assume that $\inf_{n\in\N}\abs{J_n} > 0$. We construct
 $\theta^{*}_n=\theta_0+\sqrt{\beta_n\slash n}$ for a sequence $(\beta_n)_{n\in\N}$ that satisfies 
 $\sqrt\beta_n\slash q_n\ra \infty$.
  
 We first consider \eqref{proofs:gauss:opt:eq1}. To this end
 observe that for all $t\in J_n$ we have $\abs{\theta_n^* -
 \vt_n(t)}\sqrt{\abs{J_n} n} = \sqrt{\beta_n \abs{J_n}}$. We further find that
 \begin{equation*}
 \Gamma_{J_n} := \sqrt{\beta_n \abs{J_n}} -  q_n-\sqrt{2\log(e/\abs{J_n})} =
 q_n \left( \frac{\sqrt{\beta_n}}{q_n} - 1 - \frac {\sqrt{2\log(e/\abs{J_n})}}{q_n}  \right) \ra \infty.
 \end{equation*}
 Thus, we can apply \eqref{proofs:gauss:handyest} and find for all $\theta \geq
 \theta_n^*$
\begin{equation*}
\Prob\left(T_{J_n} (Y,\theta) \leq 1/2 \left( \ha{q_n}+ \sqrt{2 \log(e/\abs{J_n})} \right)^2\right) \leq
\exp\left(-\frac{\Gamma_{J_n}^2}{8} \right) \ra 0.
\end{equation*}

 Now observe that for $t\in I_n$ we have $\abs{\theta_n^* -
 \vt_n(t)}\sqrt{\abs{I_n} n} =\delta_n \sqrt{\abs{I_n}n} - \sqrt{\beta_n
 \abs{I_n}}$. Thus \eqref{proofs:gauss:opt:eq2} follows from
 \eqref{proofs:gauss:handyest} given
 \begin{equation*}
 \Gamma_{I_n} := \delta_n \sqrt{\abs{I_n} n} - \sqrt{\beta_n \abs{I_n}} -
  q_n-\sqrt{2\log(e/\abs{\ha{I_n}})} \ra \infty.
 \end{equation*}
 It hence remains to construct sequences $(\beta_n)$ for each case $(1)$ and
 $(2)$ such that the previous condition holds while $\sqrt{\beta_n}\slash
 q_n\ra\infty$.
 
 We assume first that $\liminf_{n\ra\infty} \abs{I_n} > 0$ and define $\beta_n$
 through the equation $\sqrt{\beta_n \abs{I_n}} = c\left( \delta_n
 \sqrt{\abs{I_n} n} -  q_n-\sqrt{2\log(e/\abs{\ha{I_n}})} \right)$ for some $0< c < 1$. Then,
 \begin{equation*}
 \frac{\sqrt{\beta_n \abs{I_n}}}{q_n} = c\left(\frac{\delta_n \sqrt{\abs{I_n}
 n}}{{q_n}} -1 -\frac{\sqrt{2\log(e/\abs{\ha{I_n}})}}{q_n}\right)
 \end{equation*}
 From the condition in case $(1)$ of the theorem and the fact that $\abs{I_n}$
 is bounded away from zero for large $n$, we find that $\sqrt\beta_n\slash q_n
 \ra\infty$. Further we find $\Gamma_{I_n} = (1-c)\sqrt{\beta_n
 \abs{I_n}}\ra\infty$.
 
 Finally we consider the case when $\abs{I_n}\ra 0$ and define $\beta_n$
 through the equation $\sqrt{\beta_n \abs{I_n}} = c \eps_n \sqrt{-
 \log\abs{I_n}}$ for some $0<c<1$. From the conditions in case $(2)$ of the
 theorem and the inequality $\sqrt{x+1}-\sqrt{x} \leq 1/(2 \sqrt{x})$, which holds for any $x>0$, one obtains
 \begin{align*}
 \Gamma_{I_n} & \geq (\sqrt{2} + \eps_n)\sqrt{-\log\abs{I_n}} - \sqrt{\beta_n \abs{I_n}} -
  q_n-\sqrt{2\log(e/\abs{I_n})} \\
 & = (\sqrt{2}+(1-c)\eps_n) \sqrt{-\log\abs{I_n}} -q_n - \sqrt{2} \sqrt{1+\log(1/\abs{I_n})} \\
 & \geq ((1-c)\eps_n)\sqrt{-\log\abs{I_n}} - \frac{1}{\sqrt{-2 \log\abs{I_n}}} - q_n.
 \end{align*}
 This shows that $\Gamma_{I_n}\ra\infty$ for a suitable small $c$, such that $\sup_{n\in\N} q_n /
 (\epsilon_n\sqrt{\log(1\slash \abs{I_n})})\leq1-2 c$. Again from the assumptions in the
 theorem it follows that $\sqrt \beta_n\slash q_n \ra \infty$. 
 \end{proof}

\begin{proof}[Proof of Theorem \ref{gauss:opt:nojumpmissing}]
Theorem \ref{gauss:opt:mainthm} implies $\Prob(\hat
K(q_n) < K_n)\leq e^{-\Gamma_{1,n}} + e^{-\Gamma_{2,n}}$ with
\begin{align*}
\Gamma_{1,n} = \frac{1}{8}\left(\frac{\sqrt{n
\lambda_n}\Delta_n}{2\sqrt{2}} - q_n - \sqrt{2 \log(2e/ \ha{\lambda_n})} \right)_+^2 - \log K_n\;\text{
and }\; \Gamma_{2,n} = \frac{n \lambda_n \Delta^2_n}{16} - \log K_n.
\end{align*} 
It is easy to see, that any condition $(1)$ - $(3)$
implies $\Gamma_{2,n}\ra\infty$. It remains to check that
$\Gamma_{1,n}\ra\infty$. Under condition $(1)$ we observe that
\begin{equation*}
\frac{\Gamma_{1,n}}{q_n^2} = \frac{1}{8}\left(\frac{\sqrt{n
\lambda_n}\Delta_n}{2 \sqrt 2 q_n} - \frac{q_n + \sqrt{2 \log(2e/\lambda_n)}}{q_n}\right)_+^2 - \frac{\log K_n}{q_n^2}\ra \infty.
\end{equation*}
Since $q_n$ is bounded away from zero, the assertion follows. Next, we consider
conditions $(2)$ and $(3)$. To this end, assume that
$\sqrt{n\lambda_n}\Delta_n\geq (C + \eps_n)\sqrt{\log(1\slash
\lambda_n)}$ for some constant $C>0$ and a sequence $\eps_n$ such that $\eps_n
\sqrt{\log(1\slash \lambda_n)}\ra \infty$. We find that
\begin{align*}
\Gamma_{1,n} & \geq
\frac{1}{8}\left(\frac{(C+\eps_n)\sqrt{\log\frac{1}{\lambda_n}}}{2 \sqrt2} - q_n - \sqrt{2 \log(2e/\lambda_n)}  \right)_+^2 - \log K_n \\
& = \frac{1}{8}\left(\frac{\eps_n\sqrt{\log\frac{1}{\lambda_n}}}{2 \sqrt{2}} \ha{+} \left( \frac{C-4}{2 \sqrt 2}  \right)
\sqrt{\log\frac{1}{\lambda_n}} -q_n -
\frac{1+ \log 2}{\sqrt{2 \log(1/\lambda_n)}} \right)_+^2 - \log K_n,
\end{align*}
where we have used the inequality $\sqrt{x+1}-\sqrt{x} \leq 1/(2 \sqrt{x})$.
If $\sup_{n\in\N} K_n<\infty$, then the choice $C=4$ implies
$\Gamma_{1,n}\ra\infty$. Otherwise, we use the estimate $K_n\leq 1\slash
\lambda_n$ which results in $C = 12$ as a sufficient condition for
$\Gamma_{1,n}\ra\infty$.
\end{proof}

\subsection{Proof of Lemma \ref{impl:equivprob}}
 
\begin{proof}
First observe that the definition of $\hat \vt(q)$ in \eqref{intro:smre} implies
that $q \geq T_n(Y,\hat\vt(q))$ and hence, by identifying $\hat\vt(q)$ with the
pair $(\hat \Pcal(q), \hat \theta(q))$, we find
\begin{align*}
(\hat K(q)+1) q & \geq (\hat K(q)+1) T_n(Y,\hat\vt(q)) \geq \sum_{I \in \hat \Pcal(q)} \left( \sqrt{2 T_I(Y,\hat\vt(q))} - \sqrt{2 \log(e/ \abs{I})}\right) \\
&\geq \sqrt{2} \sqrt{\sum_{I \in \hat \Pcal(q)}\left( \abs{I} \phi(\bar{Y}_I) \right) - l(Y, \hat \vt(q)} - n \sqrt{2\log (e n)} \\
&\geq \sqrt{2} \sqrt{l(\bar Y, \hat\vt(q)) - l(Y, m^{-1}(\bar Y))} - n \sqrt{2\log (e n)} 
\end{align*}
The last inequality follows from the fact that $\phi(\overline Y_I)\geq
\overline Y_I \theta - \psi(\theta)$ for all $\theta\in\Theta$ and all $I\in
\hat \Pcal(q)$ for the choice $\theta = m^{-1}(\overline Y)$.
Summarizing, we find
\begin{equation*}
\gamma \geq \left( (\hat K (q) + 1) q + n \sqrt{2 \log(en)}\right)^2 / 2 + l(Y,m^{-1}(\bar Y)) \geq l(Y,\hat\vt(q)).
\end{equation*}
Now, let $\hat\vartheta = (\hat\Pcal, \hat\theta)$ be a
minimizer of \eqref{impl:penprob}. The definition of $\hat K(q)$ in
\eqref{conscp:estnocp}  implies that $D(\Pcal, \theta) = \infty$ if ${\# \Pcal}
< \hat K(q)$. Thus we have that  $\bigl| \hat\Pcal \bigr| \geq \hat
K(q)$. Assume that there exists $k\geq 1$ such that ${\# \Pcal} = \hat
K(q) + k$ (for $k=0$ nothing is to show). Since $(\hat\Pcal, \hat\theta)$ is a
minimizer of \eqref{impl:penprob} and since $D\geq 0$ we find
\begin{align*}
\gamma(\bigl|\hat\Pcal\bigr| - 1) & \leq
D(\hat\Pcal(q),\hat\theta(q)) - D(\hat\Pcal,\hat\theta) +
\gamma\bigl(\bigl|\hat\Pcal(q)\bigr| - 1\bigr) \\
& \leq D(\hat\Pcal(q),\hat\theta(q)) - k \gamma +
\gamma\bigl(\bigl|\hat\Pcal\bigr| - 1\bigr) \\
& < (1-k)l(Y,\hat\vartheta(q)) +
\gamma\bigl(\bigl|\hat\Pcal\bigr| - 1\bigr).
\end{align*}
This is a contradiction for  $l(\hat\vt)$ being
non-negative and hence we conclude that $\bigl|\hat\Pcal\bigr| = \hat
K(q)$ and that $\hat\vt = (\hat\Pcal, \hat\theta)$ solves \eqref{intro:smre}.
\end{proof}

\appendix

\bibliographysupp{literature}
\bibliographystylesupp{chicago}

\end{document}